\documentclass{IEEEtran}
\renewcommand{\baselinestretch}{0.99}
\usepackage{setspace}
\usepackage{balance}
\usepackage{graphics}
\usepackage{graphicx}

\usepackage[caption=false,font=footnotesize]{subfig}
\usepackage[bookmarks=false]{hyperref}
\usepackage{epsfig}
\usepackage{times} 
\usepackage{amsthm}
\usepackage{arydshln}
\usepackage{amsmath} 
\usepackage{amssymb}
\usepackage{amsfonts}
\usepackage{verbatim}
\usepackage{mathtools}
\usepackage{diagbox}
\usepackage{cite}
\usepackage{multicol}
\usepackage{multirow}
\usepackage{url}
\usepackage[caption=false,font=footnotesize]{subfig}
\usepackage{mathrsfs}
\usepackage{comment}
\usepackage{tikz-cd}
\usepackage{tikz}
\usepackage{circuitikz}
\usepackage{tikz-3dplot}
\tdplotsetmaincoords{60}{6}
\usetikzlibrary{3d}
\usetikzlibrary{arrows.meta, positioning, calc}
\usepackage{pgfplots}
\pgfplotsset{compat=newest}
\usepackage{url}
\usepackage{enumerate}
\usepackage[normalem]{ulem}
\usepackage{xcolor}
\colorlet{ab}{teal}
\usepackage{nicematrix}
\allowdisplaybreaks
\usepackage[ruled,longend]{algorithm2e}

\newcommand{\pr}[1]{\left( #1\right)}
\newcommand{\ang}[1]{\left\langle #1\right\rangle}
\newcommand{\matrices}[1]{\begin{bmatrix} #1\end{bmatrix}}

\newtheorem{assumption}{Assumption}

\newtheorem{remark}{Remark}
\newtheorem{theorem}{Theorem}
\newtheorem{lemma}{Lemma}
\newtheorem{corollary}{Corollary}

\newtheorem{definition}{Definition}

\DeclareMathOperator{\im}{Im}
\DeclareMathOperator{\Ker}{Ker}

\newcommand{\R}{\mathbb{R}}
\newcommand{\C}{\mathbb{C}}

\newcommand{\rank}{\mathrm{rank}\, }
\newcommand{\kernel}{\mathrm{Ker}\, }
\newcommand{\image}{\mathrm{Im}\, }
\newcommand{\sign}{\mathrm{sign}}
\newcommand{\B}{\mathscr{B}}
\newcommand{\X}{\mathscr{X}}
\newcommand{\Y}{\mathscr{Y}}
\newcommand{\U}{\mathscr{U}}
\newcommand{\V}{\mathscr{V}}

\newcommand{\Ss}{\mathscr{S}}

\newcommand{\W}{\mathscr{W}}
\newcommand{\M}{\mathscr{M}}
\newcommand{\K}{\mathscr{K}}

\newcommand{\Lap}{\mathcal{L}}
\newcommand{\diag}{\mathrm{diag}}
\newcommand{\col}{\mathrm{col}}

\newcommand{\myquad}[1][1]{\hspace*{#1em}\ignorespaces}

\def\BibTeX{{\rm B\kern-.05em{\sc i\kern-.025em b}\kern-.08em
    T\kern-.1667em\lower.7ex\hbox{E}\kern-.125emX}}
\begin{document}
\title{Distributed Unknown Input Observer Design: \\ A Geometric Approach}
\author{
Ruixuan Zhao, Guitao Yang, Thomas Parisini, \IEEEmembership{Fellow, IEEE} and Boli Chen, \IEEEmembership{Senior Member, IEEE
}
\thanks{This work has been partially supported by European Union's Horizon 2020 research and innovation programme under grant agreement no. 739551 (KIOS CoE).}%
	\thanks{R. Zhao and B. Chen are with the Department of Electronic and Electrical Engineering, University College London, London, UK \tt\small(ruixuan.zhao.22@ucl.ac.uk; boli.chen@ucl.ac.uk).}%
    \thanks{G. Yang is with the Wolfson School of Mechanical, Electrical and Manufacturing Engineering, Loughborough University, Loughborough, UK \tt\small(g.yang@lboro.ac.uk).}
	\thanks{T. Parisini is with the Department of Electrical and Electronic Engineering, Imperial College London, London. T. Parisini is also with the Department of Electronic Systems, Aalborg University, Denmark, and with the Department of Engineering and Architecture, University of Trieste, Italy \tt\small( t.parisini@imperial.ac.uk).}
 }

\maketitle

\begin{abstract}
We present a geometric approach to designing distributed unknown input observers (DUIOs) for linear time-invariant systems, where measurements are distributed across nodes and each node is influenced by \emph{unknown inputs} through distinct channels.
The proposed distributed estimation scheme consists of a network of observers, each tasked with reconstructing the entire system state despite having access only to local input-output signals that are individually insufficient for full state observation. Unlike existing methods that impose stringent rank conditions on the input and output matrices at each node, our approach leverages the $(C,A)$-invariant (conditioned invariant) subspace at each node from a geometric perspective. This enables the design of DUIOs in both continuous- and discrete-time settings under relaxed conditions, for which we establish sufficiency and necessity.
The effectiveness of our methodology is demonstrated through extensive simulations, including a practical case study on a power grid system.
\end{abstract}

\begin{IEEEkeywords}
Distributed State Estimation, Unknown Input Observers, Observer Geometric Design.
\end{IEEEkeywords}

\section{Introduction}\label{sec:Intro}
As Cyber-Physical Systems (CPS) increase in scale and complexity -- particularly in critical infrastructures -- the need for state observers to monitor system operations becomes increasingly important for ensuring reliability and the early detection of system faults or malicious cyber-attacks \cite{saidi2018intergrated}. 
Typically, these large-scale systems are monitored by a network of sensors, as a single sensor (node) is often inadequate for observing the state of the entire system \cite{xie2012fully,ahmed2016distributed}. The proliferation of sensing equipment, coupled with advancements in communication and computation technologies, enables the deployment of distributed algorithms directly on intelligent sensors. Consequently, it is highly advantageous to develop distributed schemes that leverage sensors with computational capabilities available on board to collaboratively reconstruct the entire system state. However, in large-scale CPS, individual nodes often do not have access to all inputs that affect system dynamics, making it crucial to propose a distributed state estimation method that accounts for unknown inputs.

We consider the problem of estimating the state vector of a linear time-invariant (LTI) plant in both continuous-time
\begin{equation}\label{eq:con_sys}
    \dot x (t) = Ax(t) + Bu(t) \, ,
\end{equation}
and discrete-time form
\begin{equation}\label{eq:dis_sys}
     x (t+1) = Ax(t) + Bu(t) \, ,
\end{equation}
where ${x \in \mathbb{R}^n}$ is the state vector, ${u\in \mathbb{R}^m}$ is the control input. ${A\in \mathbb{R}^{n\times n}}$ and ${B\in \mathbb{R}^{n\times m}}$ are known state and input matrices, respectively.
The system state is measured by a bank of sensors
\begin{equation}\label{eq:output}
    y_i(t) = C_i x(t) \, ,
\end{equation}
where $y_i \in \mathbb{R}^{p_i}$ is the output measurement at node~$i$, $C_i \in \mathbb{R}^{p_i\times n}$ and $i\in\mathbf{N}=\{1,2,\cdots,N\}$ is the nonempty set of sensor nodes.

At node $i$, the system input $u$ can be partitioned into two parts: $u_i\in \mathbb{R}^{l_i}$ and $\bar u_i \in \mathbb{R}^{m - l_i}$, where $u_i$ collects all known input signals from node $i$, while $\bar u_i$ 
consists of the remaining components of \(u\), which are unknown to node \(i\).
Accordingly, by proper partitioning of $B$ and $u$, 
we have
\begin{equation}\label{eq:inputdecomp}
    Bu(t) = B_i u_i(t) +\bar{B}_i \bar{u}_i(t),
\end{equation}
with $B_i \in \mathbb{R}^{n\times l_i}$, $\bar{B}_i \in \mathbb{R}^{n\times (m - l_i)}$, $u_i \in \mathbb{R}^{l_i}$, and $\bar u_i \in \mathbb{R}^{m - l_i}$, where $m-l_i \leq p_i \leq n$.
Indeed, matrix $B_i$ represents the actuation channel of inputs that are known to node $i$ and hence $u_i$ can be used locally for state estimation. 
Conversely, $\bar{u}_i$ represents the components of \(u\) that are complementary to $u_i$.

In this context, systems \eqref{eq:con_sys} and \eqref{eq:dis_sys} can be reformulated for each node $i \in \mathbf{N}$ as
\begin{subequations}\label{eq:systemDecomposition}
    \begin{align}
        &\dot x(t) = Ax(t) + B_iu_i(t) + \bar B_i \bar{u}_i(t),\label{eq:systemDecomposition_ct}\\
        &x(t+1) = Ax(t) + B_iu_i(t) + \bar B_i \bar{u}_i(t).\label{eq:systemDecomposition_dt}
    \end{align}
\end{subequations}
\emph{A distributed unknown input observer (DUIO) consists of a set of observers $\mathcal O = \{ \mathcal O_i \}_{i \in \mathbf N}$, each associated with a sensor node, that are capable of collectively reconstructing the state vector $x$ in the presence of unknown inputs $\bar{u}_i$ at each node.
Each local observer $\mathcal O_i$ can only access its local input $u_i$ and output $y_i$, and can exchange its local estimate with neighboring nodes following an undirected communication graph $\mathcal{G}$ (see Fig.~\ref{fig:network} for an illustrative example).}

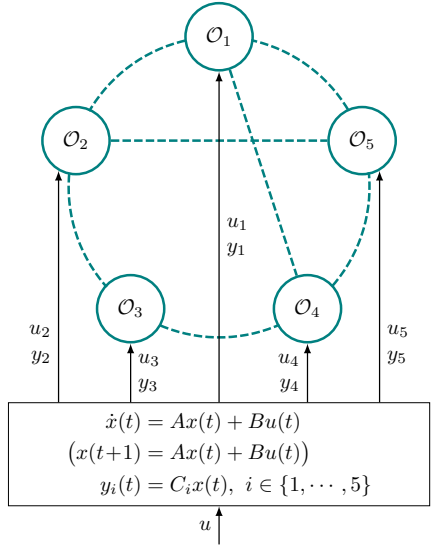
\begin{figure}[htp]
    \centering
    \scalebox{0.8}{\begin{tikzpicture}
\def\off{18}
\def\N{5}
\def\R{2.5}
\pgfmathparse{360/\N}
\edef\step{\pgfmathresult}

\colorlet{net}{teal}
\tikzset{comm/.style = {color=net, very thick, dash pattern=on 4pt off 1.5pt}}
\coordinate (O) at (0,0);
\foreach \x in {1,...,\N} {%
    \node [circle, 
        draw, 
        color=net, 
        fill=white, 
        text=black, 
        very thick,
        inner sep=6pt] (N\x) at (\x*\step+\off:\R) {$\mathcal O_{\x}$};
}
\pgfmathanglebetweenpoints{
    \pgfpointanchor{N1}{west}}{
    \pgfpointanchor{O}{center}}
\edef\anglestart{\pgfmathresult}
\pgfmathanglebetweenpoints{
    \pgfpointanchor{N2}{north}}{
    \pgfpointanchor{O}{center}}
\edef\angleend{\pgfmathresult-0.7}
\foreach \x in {1,...,\N} {
    \edef\rotation{\step*(\x-1)}
    \draw[comm,rotate=\rotation] (\anglestart-180:\R) arc[radius = \R, start angle = \anglestart-180, end angle = \angleend-180];
}
\draw [comm] (N2) -- (N5);
\draw [comm] (N1) -- (N4);


\node [draw,
        rectangle,
        below=2.2*\R of N1,
        minimum width=7cm,
        minimum height=1.2cm] (sys) {$\begin{aligned}
            \dot x(t) &= Ax(t) + Bu(t) \\
            \big(x(t\!+\!1) &= Ax(t) + Bu(t)\big)\\
            y_i(t) &= C_ix(t),\ i\in\{1,\cdots,5\}
        \end{aligned}$}; 

\draw [-latex, semithick] (sys.north -| N1) -- (N1) node[midway, right, align=left] {$u_1$ \\ $y_1$};
\draw [-latex, semithick] (sys.north -| N2.240) -- (N2.240) node[near start, left, align=right] {$u_2$ \\ $y_2$};
\draw [-latex, semithick] (sys.north -| N3) -- (N3) node[midway, right, align=left] {$u_3$ \\ $y_3$};
\draw [-latex, semithick] (sys.north -| N4) -- (N4) node[midway, left, align=right] {$u_4$ \\ $y_4$};
\draw [-latex, semithick] (sys.north -| N5.300) -- (N5.300) node[near start, right, align=left] {$u_5$ \\ $y_5$};

\coordinate (bot) at ($(sys)+(0,-1.5)$);
\draw [-latex, semithick] ([shift={(0,0)}]bot) -- ([shift={(0,0)}]bot |- sys.south) node[midway, left] {$u$};
\end{tikzpicture}}\\[-2.5ex]
    \caption{Example of a distributed unknown input observer consisting of 5 sensor nodes and thus a group of 5 local observers $\mathcal O_1,\dots \mathcal O_5$. The communication graph is represented by the dashed lines.
    }
    \label{fig:network}
\end{figure}
\subsection{Brief Survey of Related Works}
Early research on distributed state estimation primarily focused on extending traditional centralized state estimation methods for linear systems, adapting them to networked settings by enabling information exchange over communication links. For example, in \cite{kamgarpour2008convergence, olfati2007distributed, olfati2009kalman}, the classical Kalman filtering framework is extended to distributed estimation by incorporating consensus among neighboring estimators, thereby addressing the locally unobservable components at each node. Similar consensus-based strategies have been explored in \cite{han2018simple, kim2019completely}, where distributed Luenberger-like observers were developed for continuous-time LTI systems. Building on this idea, \cite{kim2019completely} and \cite{yang2023plug} introduced adaptive consensus gains, enabling fully decentralized designs of distributed observers that support plug-and-play functionality, i.e., the seamless addition or removal of sensor nodes without requiring observer redesign. In contrast, by transforming the continuous-time LTI system into its Jordan canonical form, a recent work \cite{zhang2022decentralized} introduces an alternative decentralized design for distributed observers. An augmented state observer mechanism is proposed in \cite{park2016design} for discrete-time systems and in \cite{wang2017distributed} for continuous-time systems. In particular, \cite{park2016design} establishes the necessary and sufficient conditions for the estimation parameter selection in the distributed observer design process. Moreover, \cite{mitra2018distributed} proposes a method for discrete-time distributed observer design based on successive observability decomposition, while more recently \cite{wang2024split} introduces a split-spectrum approach applicable to both continuous- and discrete-time LTI systems.
The aforementioned methods ensure asymptotic convergence to the true system state; however, in certain applications, finite-time or fixed-time convergence may be required. This has been investigated, for example, in \cite{silm2018note} and \cite{ge2022fixed}. In addition, \cite{mitra2021distributed, wang2022hybrid, yang2023state, liu2023distributed, liu2024distributed} develop distributed observer designs for systems with switching communication topologies, which are particularly relevant for mobile sensor networks. While most existing works concentrate on linear systems, a few recent studies, such as 
\cite{wu2021design,yang2022sensor,cao2025design}, have extended distributed observer design to nonlinear systems under specific continuity assumptions.

It is worth noting that all the aforementioned works focus on scenarios where either no input signals are present or the input information is fully available to all nodes. This assumption can be rather restrictive for realistic large-scale distributed applications. In many CPS of practical interest, individual nodes typically have access only to partial input information (e.g., local inputs), which motivates research on the DUIO problem. Building on the unknown input observer (UIO) framework developed for centralized systems (see, e.g., \cite{chen1996design,darouach1994full,kudva2003observers,zhu2023asymptotic,ichalal2019unknown,disaro2024equivalence}), recent efforts on DUIO design have emerged. In particular, DUIOs have been studied in the context of continuous-time systems \cite{yang2022state,cao2023distributed,cao2023distributedauto,zhu2024distributed} and, more recently, for discrete-time systems in \cite{disaro2024distributed}. 
A common underlying assumption across these works is that:
\begin{equation}
    \rank (C_i\bar{B}_i)=\rank (\bar{B}_i) \, ,\label{eq:rank_con} 
\end{equation}
or, equivalently, 
\begin{equation*}
    \im(\bar{B}_i)\cap \kernel(C_i) = 0 \, ,
\end{equation*}
with $\kernel (\cdot)$ denoting the kernel (null space) and $\image (\cdot)$ denoting the image (column space) of a linear map.
Recall the geometric condition for left-invertibility \cite[Proposition~9]{massoumnia1986geometric} for system $(C_i,A,\bar{B}_i)$
\begin{equation*}
    \im(\bar{B}_i)\cap \V^*_i(\kernel{C_i}) = 0,
\end{equation*}
where $\V^*_i(\kernel{C_i})$ is the supremal controllability subspace contained in $\kernel{C_i}$. 
It is immediate to show that systems $(C_i,A,\bar{B}_i)$ satisfying the rank condition \eqref{eq:rank_con} are indeed \textit{left-invertible} with a monic $\bar B_i$, which, in turn, implies that the unknown input $\bar{u}_i$ can be uniquely determined by the local output $y_i$. 
However, for a large-scale CPS, the requirement that each local node $(C_i,A,\bar{B}_i)$ is left-invertible is extremely restrictive, as shown, for example, by the practical example in Section~\ref{exam:DGU}. 
Recent work \cite{cao2023bounded} has sought to relax the rank condition in continuous-time DUIO design. However, the conditions under which a DUIO can be successfully designed remain insufficiently characterized.

\subsection{Main Contributions and Paper Organization}
Motivated by the limitations of existing DUIO designs outlined in the previous section, this article proposes a novel DUIO framework that achieves the design objective under substantially weaker assumptions than those in \eqref{eq:rank_con}. Taking advantage of the geometric approach, our methodology is coordinate-free and provides a unified framework for both continuous-time and discrete-time systems. The key contributions are summarized as follows.
\begin{enumerate}
    \item We present a geometric approach for estimating a linear projection of the system state in the presence of unknown inputs, effectively reducing the state dimension. This projection is guaranteed to have the smallest kernel (null space), thereby capturing the maximum possible information about the system state. Furthermore, we analyze the performance of our distributed estimation scheme in systems with zero dynamics—a particularly challenging and non-trivial case for DUIO design.
       \item Building on the previous point, we develop a novel DUIO for continuous-time systems. Unlike existing DUIOs \cite{yang2022state,cao2023distributed,cao2023distributedauto,zhu2024distributed,disaro2024distributed}, the proposed approach eliminates the reliance on the individual input-output rank condition \eqref{eq:rank_con} and instead employs weaker joint conditions across all sensor nodes, thereby allowing the unknown inputs to be addressed collectively by all sensors. Moreover, the sufficiency and necessity of this condition for DUIO design are rigorously characterized, representing a significant advancement over the results in \cite{cao2023bounded}.
    \item We further propose a two-time scale framework for DUIO design in the discrete-time domain, employing separate local adaptation and communication iterations. To the best of our knowledge, this is the first work to address discrete-time DUIOs without imposing rank constraints \eqref{eq:rank_con}. The proposed solution is proven to achieve Q-linear convergence with the fastest attainable rate. In addition, our analysis reveals how steady-state accuracy depends on communication frequency, consistent with the findings of \cite{kim2024design}, which highlight the role of implementing large coupling gains in extending continuous-time designs to the discrete-time setting.
\end{enumerate}

The paper is organized as follows. 
Section~\ref{sec:Prelim} introduces key notations and definitions and formulates the problem.
In Section~\ref{sec:geometric_approach}, we describe the geometric approach used for subspace decomposition, which forms the foundation of our DUIO designs. Subsequently, Section~\ref{sec:DUIO} proposes the DUIO schemes for both continuous- and discrete-time LTI systems. Numerical simulations and a case study of a distributed generation unit (DGU) scenario from electric power systems are provided in Section~\ref{sec:simulation} to illustrate the effectiveness of the proposed distributed scheme. Finally, Section~\ref{sec:conclusion} concludes the paper and suggests potential directions for future research.

\section{Preliminaries and Problem Formulation}\label{sec:Prelim}
\subsection{Notations}
Let $\mathbb{R}$ and $\mathbb{C}$ denote the real and complex fields, respectively. $\R_{>0}$ is the set of positive real numbers. A symmetric partition in $\mathbb{C}$ is denoted by $\C = \C_g \cup \C_b$ with $\C_g \cap \C_b = \varnothing$ where $\C_g$ denotes the partial complex plane consisting of ``good" (e.g. asymptotically stable) eigenvalues and $\C_b$ denotes the partial complex plane consisting of ``bad" (e.g., marginally stable or unstable) eigenvalues. We denote by $I_n$ the $n\times n$ identity matrix. Similarly, $\mathbf{1}_{n\times n}$ and 
$\mathbf{0}_{n\times n}$ denote the $n \times n$ all-ones matrix and the $n \times n$ zero matrix, respectively.
For a vector $v$, $\|v\|_1$, $\|v\|_2$ and $\|v\|_\infty$ denote the 1-norm, 2-norm and $\infty$-norm, respectively. For a matrix $M$, $\|M\|_1$ and $\|M\|_\infty$ denote the induced matrix 1-norm and 
$\infty$-norm, respectively. $\otimes$ stands for the Kronecker product, and $\uplus$ represents the multiset union, where common elements are retained as duplicates. $\sign(\cdot)$ is the sign function, which is defined as $\sign(x)=1$ if $x>0$, $\sign(x)=-1$ if $x<0$, and $\sign(x)=0$ if $x=0$.
$\mathrm{col}(M_1,M_2,\ldots,M_n)$ denotes the stacked matrix $[M_1^\top,M_2^\top,\cdots,M_n^\top]^\top$ and ${\rm diag}(M_1,M_2,\ldots,M_n)$ denotes the block diagonal matrix composed of $M$'s. $M^\dagger$ represents the pseudo inverse of matrix $M$. $\kappa(M)$ represents the spectrum of $M$, and $\lambda_i(M)$, $\lambda_{\max}(M)$ represent its $i$-th and maximum eigenvalue. $\sigma_{\min}(M)$ is the minimum singular value of $M$.

\subsection{Problem setup}
Recall the system given in \eqref{eq:systemDecomposition} and the sensor measurements in \eqref{eq:output}.
The information exchange in the sensor network considered in this article is characterized by an undirected communication graph $\mathcal{G}=(\mathbf{N},\mathcal{E},\mathcal{A})$, where $\mathbf{N}$ is the set of sensor nodes, $\mathcal{E}\subseteq \mathbf{N}\times \mathbf{N}$ denotes the set of communication links between nodes. $\mathcal{A}=[a_{ij}]\in\mathbb{R}^{N\times N}$ is the adjacency matrix, where if $(i,j)\in\mathcal{E}$, $a_{ij}=a_{ji}=1$, and  $a_{ij}=a_{ji}=0$ otherwise. 
The Laplacian of $\mathcal{G}$ is the matrix $\mathcal{L}=[l_{ii}]\in{\mathbb{R}^{n\times n}}$ with $l_{ij} = \sum_{j=1,j\neq i}^N a_{ij},\,\,i=j$ and $l_{ij} = -a_{ij},\,\,i\neq j$.
The following assumption is adopted in this article and is commonly used in the literature \cite{yang2022state, cao2023distributed}.

\begin{assumption}\label{as:connected}
    The communication topology $\mathcal{G}=\{\mathbf{N},\mathcal{E},\mathcal{A}\}$ associated with the network of DUIO is connected.
\end{assumption}
Under Assumption~\ref{as:connected}, the Laplacian matrix has a key property, as stated in the following Lemma. 
\begin{lemma}\cite{ren2007information}\label{lemma:eig_Laplacian_1}
Given an undirected connected graph $\mathcal{G}=(\mathbf{N},\mathcal{E},\mathcal{A})$, $\frac{\mathbf{1}_{N\times 1}}{\sqrt{N}}$ is the unique left and right eigenvector of its Laplacian matrix $\mathcal{L}$ corresponding to the zero eigenvalue.
\end{lemma}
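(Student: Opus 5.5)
We prove the two claims separately: $\mathcal{L}\mathbf{1}_{N\times 1}=0$, and the zero eigenspace is one-dimensional. Since $\mathcal{G}$ is undirected, $a_{ij}=a_{ji}$, so $\mathcal{L}=\mathcal{L}^\top$. Hence every right eigenvector for the zero eigenvalue is also a left eigenvector, and it suffices to study the right null space of $\mathcal{L}$.

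\emph{Step 1.} By the definition of the Laplacian, every row of $\mathcal{L}$ sums to zero: $l_{ii}=\sum_{j\neq i}a_{ij}$ and $l_{ij}=-a_{ij}$ for $j\neq i$. Therefore $\mathcal{L}\mathbf{1}_{N\times 1}=0$. Normalizing gives the unit vector $\mathbf{1}_{N\times 1}/\sqrt{N}$, which is thus a right eigenvector for the eigenvalue $0$. By symmetry it is also a left eigenvector.

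\emph{Step 2.} For any $v\in\R^N$, expanding the quadratic form and using $a_{ij}=a_{ji}$ gives
\begin{equation*}
v^\top \mathcal{L} v=\tfrac12\sum_{i=1}^N\sum_{j=1}^N a_{ij}(v_i-v_j)^2 .
\end{equation*}
This shows $\mathcal{L}$ is positive semidefinite. It also shows that $\mathcal{L}v=0$ holds if and only if $v^\top\mathcal{L}v=0$; the nontrivial direction uses the fact that a positive semidefinite matrix annihilates every vector on which its quadratic form vanishes, e.g., via $\mathcal{L}^{1/2}$. Consequently, $\mathcal{L}v=0$ implies $v_i=v_j$ for every edge $(i,j)\in\mathcal{E}$.

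\emph{Step 3.} By Assumption~\ref{as:connected}, any two nodes $i,k$ are joined by a path in $\mathcal{G}$. Propagating the equality $v_i=v_j$ along the edges of that path gives $v_i=v_k$. Hence $v\in\mathrm{span}(\mathbf{1}_{N\times 1})$, so the null space of $\mathcal{L}$ has dimension one. The eigenvalue $0$ is therefore simple, and $\mathbf{1}_{N\times 1}/\sqrt{N}$ is its unique unit-norm left and right eigenvector, with uniqueness understood up to sign.

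The only step needing some care is Step~2: deriving the edge-wise equalities from $\mathcal{L}v=0$. This is the point where undirectedness is used, both for the symmetric expansion of the quadratic form and for semidefiniteness. Everything else follows directly from the definitions.
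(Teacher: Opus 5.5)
Your proof is correct. The paper gives no argument of its own for this lemma (it is simply quoted from Ren et al.), and your row-sum, quadratic-form and connectivity argument is the standard proof of that cited fact. Two small remarks. First, Step 2 needs only the easy implication $\mathcal{L}v=0\Rightarrow v^\top\mathcal{L}v=0$ together with the nonnegativity of each term $a_{ij}(v_i-v_j)^2$, so the $\mathcal{L}^{1/2}$ argument is unnecessary. Second, it is the symmetry of $\mathcal{L}$ (hence diagonalizability) that turns the one-dimensional null space into algebraic simplicity of the zero eigenvalue; it is worth stating that step explicitly.
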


Let $\hat x_i(t)$ denote the local state estimate at node $i$. The corresponding estimation error is defined as
\begin{equation}
    \label{eq:error}
    e_i(t) \coloneq x(t) - \hat x_i(t) \, .
\end{equation}
Accordingly, the global estimation error can be defined by
\begin{equation}\label{eq:global_error}
    e(t) \coloneq \col(e_1(t), e_1(t), \cdots, e_N(t)) \, .
\end{equation}
The goal is to design a set of networked observers $\{ \mathcal O_i \}_{i \in \mathbf N}$ under a communication graph that satisfies Assumption~\ref{as:connected}, with guaranteed convergence properties without imposing \eqref{eq:rank_con}. Specifically, the design ensures
\begin{equation}\label{def:dUIO_ct}
        \lim_{t\rightarrow +\infty} \|e(t)\| = 0 \, ,
    \end{equation}
   for the continuous-time system \eqref{eq:systemDecomposition_ct}, or
    \begin{equation}\label{def:dUIO_dt}
         \lim_{t\rightarrow +\infty} \|e(t)\| \leq \alpha(\|x(t)\|,d) \, ,
    \end{equation}
    for the discrete-time system \eqref{eq:systemDecomposition_dt}, where $\alpha(\|x(t)\|,d)$ is a class $\mathcal{KL}$ function that depends on the state norm $\|x(t)\|$ and the number of communication rounds $d$ between two consecutive time steps.

\section{Geometric Approach based on Subspace Decomposition}\label{sec:geometric_approach}
\subsection{Basic Elements
}

In this subsection, we briefly review fundamental concepts related to the geometric approach for completeness. We adopt the terminology and notation style established in \cite{massoumnia1986geometric,wonham1985linear}. Readers already familiar with this material, or primarily interested in the observer design, may choose to skip this subsection and revisit it later as needed.

\subsubsection{Notations and Definitions}
Let $A:\X \rightarrow \X$ be an endomorphism, and let $\V \subseteq \X$ be a subspace with an insertion map $W:\W \rightarrow \X$, i.e., $\W=\image W$ and $W$ is monic. A subspace $\W \subseteq\X$ is said to be {\em invariant} with respect to a map $A:\X\rightarrow \X$ if $A\W \subseteq \W$. For an invariant subspace $\W$, we denote by $A|\W: \W \to \W$ the {\em restriction of $A$ to $\W$}, i.e., the unique map satisfying $AW=W(A|\W)$, where $W:\W \rightarrow \X$ is the insertion map of $\W$ in $\X$.
The set $\mathcal{W}=x+\W$ is called a coset of $\W$ in $\X$, and $x\in \X$ is a representative for $\W$. The set of all cosets of $\W$ in $\X$ is denoted by
$ \X/\W\coloneqq\{x+\W: x\in \X\} $
and is called the quotient (factor) space of $\X$ modulo $\W$. 
Sometimes we denote $\X/\W$ as $\frac{\X}{\W}$ for simplicity. 
We denote by $A|\X/\W$ the map {\em induced on $\X/\W$ by $A$}, satisfying $\left( A|\X \!/\! \W \right) P=PA$, where $P:\X\to\X/\W$ is the canonical projection on $\X/\W$. Let $C:\X \rightarrow \Y$ be a map. If $\Ss \subseteq \Y$, $C^{-1}\Ss$ denotes the \textit{inverse} image of $\Ss$ under $C$, which is defined by
    $C^{-1}\Ss \coloneqq \{x:x\in \X \ \& \  Cx \in \Ss\} \subseteq \X$.
Note that $C^{-1}$ is the inverse image function of the map $C$, and as such it is regarded as a function from the set of all subspaces of $\Y$ to those of $\X$. If $\mathscr{R}, \mathscr{S} \subseteq \mathscr{X}$, we define the subspace $\mathscr{R} + \mathscr{S} \subseteq \mathscr{X}$ as $\mathscr{R} + \mathscr{S}= \{r+s: r\in \mathscr{R}\ \&\  s\in \mathscr{S}\}$, and we define the subspace $\mathscr{R} \cap \mathscr{S} \subseteq \mathscr{X}$ as $\mathscr{R} \cap \mathscr{S}= \{x: x\in \mathscr{R}\ \&\  x\in \mathscr{S}\}$. 
The symbol $\oplus$ indicates that the subspaces being added are independent.
We indicate that two vector spaces $\mathscr V$ and $\mathscr W$ are isomorphic by $\mathscr V \simeq \mathscr W$.
Furthermore, $\mathrm{Mat}(A)$ is the matrix representation of the map $A$ relative to the given basis pair.
Let $A:\X \to \X$ and $\K \subset \X$ be a map and subspace, we define $\ang{\K \, |\,A }$ as
$
\ang{\K \,|\,A } \coloneq \K \cap A^{-1} \K \cap \ldots \cap A^{-n+1} \K.
$
 
\subsubsection{$(C,A)$-invariant Subspace}
Let $A:\X\rightarrow \X$ and $C:\X\rightarrow \Y$. We say a subspace $\W \subseteq \X$ is $(C,A)$-invariant if there exists a map $L:\Y\rightarrow \X$ such that
\begin{equation}\label{eq:def(C,A)-inv}
    (A+LC)\W \subseteq \W.
\end{equation}

In some literature, $(C,A)$-invariant subspace is also called a conditioned invariant subspace \cite{basile1992controlled}. The class of $L$ for which \eqref{eq:def(C,A)-inv} holds is denoted by $\mathbf{L}(\W)$ if $A$ and $C$ are clear in the context. The notation $L\in \mathbf{L}(\W)$ reads ``$L$ is a friend of $\W$". 

Let $\B\subseteq \X$. We write $\underline{\W}(C,A;\B)$ to denote the class of $(C,A)$-invariant subspaces containing $\B$. $\underline{\W}(C,A;\B)$ is closed under the operation of subspace intersection; hence, $\underline{\W}(C,A;\B)$ has an infimal element (i.e., a subspace with minimal dimension), which is denoted by $\W^*(C,A;\B)$ (the infimal $(C,A)$-invariant subspace containing $\B$) or simply $\W^*(\B)$ if $A$ and $C$ are clear in the context.

\subsubsection{Unobservability Subspace}
We say a subspace $\Ss \subseteq \X$ is an unobservability subspace if 
\begin{equation}\label{eq:def_Ss}
    \Ss = \ang{\kernel HC\,|\,A+LC }
\end{equation}
for some output injection map $L:\Y \rightarrow \X$ and measurement mixing map $H:\Y \rightarrow \Y$. Note that the \textit{unobservability subspace} is a different concept from the \textit{unobservable subspace}\footnote{The unobservable subspace can be defined as $\ang{\kernel C \,|\, A}\coloneqq \kernel C \cap  A^{-1} \kernel C \cap A^{-2} \kernel C \cap \cdots \cap A^{-n+1} \kernel C$.}.

The class of $L$ for which \eqref{eq:def_Ss} holds is denoted by $\mathbf{L}(\Ss)$ if $A$ and $C$ are clear in the context. The notation $L\in \mathbf{L}(\Ss)$ reads ``$L$ is a friend of $\Ss$". 

Let $\B\subseteq \X$. We write $\underline{\Ss}(C,A;\B)$ to denote the class of unobservability subspaces containing $\B$. $\underline{\Ss}(C,A;\B)$ has an infimal element, which is denoted by $\Ss^*(C,A;\B)$ (the infimal unobservability subspace containing $\B$) or simply $\Ss^*(\B)$ if $A$ and $C$ are clear in the context.

Note that the concepts of $(C, A)$-invariant subspaces and the unobservability subspace are directly useful for the subspace decomposition introduced in the next subsection, which plays a key role in our DUIO designs in \eqref{eq:DUIO_ct} and \eqref{eq:dis_DUIO}.

\subsection{Subspace Decomposition}\label{sec:space_decomposition}
It is crucial to isolate the largest possible subspace of the state that remains unaffected by unknown inputs, i.e., to maximize the portion of state information that is immune to their influence.
To avoid invoking the restrictive assumption on \eqref{eq:rank_con}, we propose a method for this extraction by decomposing the system \eqref{eq:output} and \eqref{eq:systemDecomposition} at each node using a geometric approach, which is applicable to both continuous and discrete-time systems, as will be outlined below.

The key idea is to identify, at each node, the infimal $(C_i, A)$-invariant subspace $\W_{g,i}^*$ that contains $\image \bar{B}_i$, i.e., the component of the state that is influenced by the unknown input, meanwhile allowing us to assign the spectrum of the map induced on $\X/\W_{g,i}^*$ by $A_{L_i}$\footnote{To simplify the notation, we define $A_{L_i} \coloneqq A+L_iC_i$, where $L_i: \Y_i \to \X$ is the output injection map at node~$i$.} i.e., $A_{L_i}|\X/\W_{g,i}^*$ into the good
part of the complex plane by selecting an appropriate output injection map $L_i$. Consequently, we can estimate $P_{W_{g,i}^*} x$ without being affected by $\bar u_i$, where $P_{W_{g,i}^*} : \X \to \X/\W_{g,i}^*$ is a canonical projection that minimizes the loss of state information.
In other words, the dimension of $\kernel P_{W_{g,i}^*}$, which represents the portion of state information discarded at node $i$, should be minimized.

Now, let us begin with the identification of the subspace $\W_{g,i}^*$, which plays a key role in our method. Consider a family of subspaces that satisfy our requirements
\begin{equation}\label{eq:def_W_gfamily}
\begin{split}
    \underline{\W_{g,i}} 
    \coloneqq 
    \{&\W_i: \W_i \in \underline{\W}(C_i,A;\image \bar{B}_i),\\ 
    &\ \& \,\exists L_i \in \mathbf{L}(\W_i)\ \mathrm{s.t.}\ \kappa(A_{L_i}|\X/\W_i) \subset \C_g \}.
\end{split}
\end{equation}
It is straightforward to verify that all the subspaces in the set $\underline{\mathscr{W}_{g,i}}$ are $(C_i, A)$-invariant while containing $\image \bar{B}_i$. Moreover, the corresponding induced map of $A_{L_i}$ of these subspaces can be made stable by appropriately choosing the output injection map $L_i$.
Since $\underline{\W_{g,i}}$ is closed under intersection \cite[Lemma~7]{massoumnia1986geometric}, it follows immediately that $\underline{\W_{g,i}}$ contains an \textit{infimal element} $\W_{g,i}^* \coloneqq \inf  \underline{\W_{g,i}}$, which is our desired subspace to identify. 
Now it is critical to understand whether there exists an $L_i$ such that the spectrum of $A_{L_i}|\X/\W_i$ can be assigned to lie within the stable region.
This is addressed by the following lemma.
\begin{lemma}\label{lemma:spectrumfixed&free} \cite[Proposition~20]{massoumnia1986geometric}
    Let $\W_i\in\underline{\W}(C_i,A;\image \bar{B}_i)$ and $\Ss_i^* \coloneqq \inf \underline{\Ss}(C_i,A;\image \bar{B}_i)$. Given $L_i\in \mathbf{L}(\W_i)$, then
    \begin{equation}
        \kappa( A_{L_i}|\X/\W_i) = \kappa_i \uplus \kappa_{z,i}
    \end{equation}
    where $\kappa_i\coloneqq\kappa(A_{L_i}|\X/\Ss_i^*)$ is \textbf{freely assignable} by a suitable choice of $L_i \in \mathbf{L}(\W_i)$, and $\kappa_{z,i}\coloneqq\kappa(A_{L_i}|\Ss_i^*/\W_i)$ is \textbf{fixed} for all $L_i\in \mathbf{L}(\W_i)$.
\end{lemma}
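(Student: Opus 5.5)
We will prove the statement as the dual of the classical spectral-assignment result for controllability subspaces. First we check that the splitting is well defined. Because $\Ss_i^*$ is an unobservability subspace containing $\image\bar B_i$, it is in particular $(C_i,A)$-invariant. The infimal unobservability subspace is characterized as $\Ss_i^* = \W_i^*+\ang{\kernel C_i\,|\,A+L_iC_i}$ (the dual of $\V^*$-type formulas, \cite{massoumnia1986geometric}). From this characterization we expect $\W_i\subseteq\Ss_i^*$ for every $\W_i\in\underline{\W}(C_i,A;\image\bar B_i)$, and $\mathbf L(\W_i)\subseteq\mathbf L(\Ss_i^*)$. The latter is the key structural fact: every friend of $\W_i$ also renders $\Ss_i^*$ invariant. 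Given these, $\W_i\subseteq\Ss_i^*\subseteq\X$ is a chain of $A_{L_i}$-invariant subspaces. The standard block-triangular form relative to this chain then gives
\[
\kappa(A_{L_i}|\X/\W_i)=\kappa(A_{L_i}|\X/\Ss_i^*)\uplus\kappa(A_{L_i}|\Ss_i^*/\W_i).
\]

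\textbf{Fixed part.} Take $L_i,L_i'\in\mathbf L(\W_i)$. Their difference satisfies $(L_i-L_i')C_i\W_i\subseteq\W_i$. We will show that $(L_i-L_i')C_i$ maps $\Ss_i^*$ into $\W_i$; this is the main obstacle. We plan to work modulo $\W_i$: the induced pair $(\bar C_i,\bar A)$ on $\Ss_i^*/\W_i$ has $\Ss_i^*/\W_i$ contained in (the image of) $\kernel C_i$ modulo $\W_i$. Indeed, $\Ss_i^*/\W_i$ is the unobservable subspace of the factor system, because $\Ss_i^*$ is the smallest unobservability subspace above $\W_i$. Hence $C_i\Ss_i^*\subseteq C_i\W_i$. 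Once this holds, $(L_i-L_i')C_i\Ss_i^*\subseteq(L_i-L_i')C_i\W_i\subseteq\W_i$, so $A_{L_i}$ and $A_{L_i'}$ induce the same map on $\Ss_i^*/\W_i$, and $\kappa_{z,i}$ is independent of the friend.

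\textbf{Free part.} Fix $L_i^0\in\mathbf L(\W_i)$. Because $\Ss_i^*$ is an unobservability subspace, there exist $H_i$ and $L_i^0$ (adjusted if necessary within $\mathbf L(\W_i)$) such that $\Ss_i^*=\ang{\kernel H_iC_i\,|\,A_{L_i^0}}$. Hence the factor pair $(\bar C_i,\bar A)$ induced on $\X/\Ss_i^*$ by $(H_iC_i,A_{L_i^0})$ is observable. We then add an injection $L_i=L_i^0+\tilde L_i H_i$, with $\tilde L_i$ chosen as follows. Standard pole placement for the observable quotient pair assigns $\kappa(A_{L_i}|\X/\Ss_i^*)$ arbitrarily (symmetric sets). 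To keep $L_i\in\mathbf L(\W_i)$, we restrict the added injection to vanish on $C_i\W_i$, i.e. we choose $\tilde L_i$ with $\tilde L_iH_iC_i\W_i=0$. This restriction is compatible with observability of the quotient, because $H_i$ can be chosen so that $\kernel H_i\supseteq C_i\W_i$ (using $C_i\Ss_i^*\subseteq C_i\W_i$ from the fixed-part step, which is what makes $\Ss_i^*$ generated by $\kernel H_iC_i$). Then $L_i$ remains a friend of $\W_i$, and by the first paragraph also of $\Ss_i^*$, so $\kappa_i$ is freely assignable.

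The delicate step in this plan is the inclusion $C_i\Ss_i^*\subseteq C_i\W_i$, equivalently the existence of a mixing map $H_i$ annihilating $C_i\W_i$ that still generates $\Ss_i^*$. We would derive it from the dual characterization $\Ss_i^*=\W_i+\ang{\kernel C_i\,|\,A_{L_i}}$ together with a dimension count, as in the controllability-subspace argument of \cite{wonham1985linear}.
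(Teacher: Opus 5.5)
Your architecture is the right dual of Wonham's argument: triangularize along $\W_i\subseteq\Ss_i^*\subseteq\X$, show that the map induced on $\Ss_i^*/\W_i$ does not depend on the friend, then place poles on $\X/\Ss_i^*$ with $L_i=L_i^0+\tilde L_iH_i$, $\kernel H_i\supseteq C_i\W_i$. The paper gives no proof of its own and simply imports the result from \cite[Proposition~20]{massoumnia1986geometric}. However, the structural facts your plan rests on are either false or unproved.

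The characterization $\Ss_i^*=\W_i^*+\ang{\kernel C_i\,|\,A+L_iC_i}$ is wrong. It captures only unobservable modes and misses transmission zeros. If $(C_i,A)$ is observable, it would give $\Ss_i^*=\W_i^*$, i.e.\ no invariant zeros at all. Concretely, at node~1 of the paper's continuous-time example, $\W_1^*$ is two-dimensional and $\ang{\kernel C_1\,|\,A}$ is spanned by the last two coordinate axes, so your formula produces a four-dimensional subspace. But $\Ss_1^*$, the span of the last five coordinate axes, is five-dimensional and carries the zeros $\{-2,\pm 3.4641i\}$; the transmission zero $-2$ is lost. The correct statement is $\Ss_i^*=\ang{\W_i^*+\kernel C_i\,|\,A+L_iC_i}$ for $L_i\in\mathbf L(\W_i^*)$.

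Likewise, $\W_i\subseteq\Ss_i^*$ does not follow for every $\W_i\in\underline{\W}(C_i,A;\image\bar B_i)$; take $\W_i=\X$. It must be an explicit hypothesis. The statement tacitly needs it for $\Ss_i^*/\W_i$ to make sense, and $\W_{g,i}^*$ satisfies it in the paper. Consequently, the inclusion $\mathbf L(\W_i)\subseteq\mathbf L(\Ss_i^*)$, which you rightly call the key fact, is only asserted.

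You also misidentify the delicate step. The inclusion $C_i\Ss_i^*\subseteq C_i\W_i$ is easy: the recursion $\Ss^{k+1}=\W_i^*+\bigl(A^{-1}\Ss^{k}\cap\kernel C_i\bigr)$, $\Ss^0=\X$, that computes $\Ss_i^*$ gives $\Ss_i^*\subseteq\W_i^*+\kernel C_i$. But it is not equivalent to what you need. With $\kernel H_i=C_i\W_i$ it only yields $\Ss_i^*\subseteq\ang{\W_i+\kernel C_i\,|\,A_{L_i^0}}$. If that inclusion were strict, the modes on the gap would be invisible through $H_iC_i$ and could not be moved by $\tilde L_iH_i$. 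Both halves of your plan actually need the identity
\begin{equation*}
\Ss_i^*=\ang{\W_i+\kernel C_i\,|\,A_{L_i}}\quad\text{for every } L_i\in\mathbf L(\W_i),
\end{equation*}
which gives at once $\mathbf L(\W_i)\subseteq\mathbf L(\Ss_i^*)$ and observability of the quotient pair on $\X/\Ss_i^*$. It follows in three steps.

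(a) The right-hand side is independent of $L_i\in\mathbf L(\W_i)$. Indeed, $(L_i-L_i')C_i$ maps $\W_i+\kernel C_i$ into $\W_i$, and $\W_i$ lies in both largest invariant subspaces, so each is invariant under the other map.

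(b) It is an unobservability subspace (take $\kernel H_i=C_i\W_i$) containing $\W_i\supseteq\image\bar B_i$, hence it contains $\Ss_i^*$.

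(c) The nested conditioned invariant subspaces $\W_i\subseteq\Ss_i^*$ admit a common friend $L_i'$. Moreover $\Ss_i^*=\ang{\Ss_i^*+\kernel C_i\,|\,A_{L_i'}}$: this holds for a friend realizing $\Ss_i^*$ as an unobservability subspace, and then for every friend by the argument of (a). Hence $\ang{\W_i+\kernel C_i\,|\,A_{L_i'}}\subseteq\Ss_i^*$.

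With this identity in hand, your fixed-part and free-part arguments go through essentially as written.
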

Clearly, $\W_{g,i}^*\in\underline{\W_{g,i}}(C_i,A;\image \bar{B}_i)$. Lemma~\ref{lemma:spectrumfixed&free} shows that for any choice of $L_i \in \mathbf{L}(\W^*_{g,i})$, a portion of the spectrum of $A_{L_i}|\Ss_i^*/\W^*_{g,i}$, denoted by $\kappa(A_{L_i}|\Ss_i^*/\W_{g,i}^*)$, remains fixed and cannot be modified by $L_i \in \mathbf{L}(\W^*_{g,i})$. Therefore, $\kappa(A_{L_i}|\Ss_i^*/\W_{g,i}^*)$ is desired to be ``good'' for all $L_i \in \mathbf{L}(\W^*_{g,i})$. The decomposition of the induced map $A_{L_i}|\X/\W_{g,i}^*$ is illustrated in Fig.~\ref{fig:good_decompose} \cite[Chapter 0.6]{wonham1985linear}, where $\kappa( A_{L_i}|\Ss_i^*/\W_{g,i}^*)$ is fixed for any $L_i \in \mathbf{L}(\W_i^*)$ but with ``good'' spectrum, $\kappa( A_{L_i}|\X/\Ss_i^*)$ can be freely assigned by a suitable choice of $L_i \in \mathbf{L}(\W_i^*)$.
\begin{figure}[htp]
    \centering
    \scalebox{0.9}{

\begin{tikzpicture}


\def\xgap{5}
\def\ygap{1.5}
\foreach \i in {0,...,3}{
    \foreach \j in {0,...,3}{
        \coordinate (\i\j) at (\i*\xgap, \j*\ygap);
    }
}
\pgfmathatantwo{\ygap}{\xgap}
\def\labangle{\pgfmathresult}

\tikzset{label/.style = {font=\small, midway}}


\node (XmodS) at (00) {$\X / \Ss^*_i$};
\node (XmodS2) at (10) {$\X / \Ss^*_i$};
\node (XmodWg) at (01) {$\X / \W^*_{g,i}$};
\node (XmodWg2) at (11) {$\X / \W^*_{g,i}$};
\node (SmodWg) at (02) {$\Ss^*_i / \W^*_{g,i}$};
\node (SmodWg2) at (12) {$\Ss^*_i / \W^*_{g,i}$};

\draw[-latex] (XmodS) -- (XmodS2) node [label, above] {$A_{L_i}|\X/\Ss_i^*$};
\draw[-latex] (XmodS) -- (XmodS2) node [label, below] {spectrum free};

\draw[-latex] (XmodWg) -- (XmodWg2) node [label, above] {$A_{L_i}|\X/\W_{g,i}^*$};
\draw[-latex] (XmodWg) -- (XmodWg2) node [label, below] {spectrum good};

\draw[-latex] (SmodWg) -- (SmodWg2) node [label, above] {$A_{L_i}|\Ss^*_i/\W_{g,i}^*$};
\draw[-latex] (SmodWg) -- (SmodWg2) node [label, below] {spectrum fixed \& good};

\draw[-latex] (SmodWg) -- (XmodWg) node [label, right] {$\bar S_{g,i}$};
\draw[-latex] (SmodWg2) -- (XmodWg2) node [label, right] {$\bar S_{g,i}$};

\draw[-latex] (XmodWg) -- (XmodS) node [label, right] {$P_{W_{g,i}^*S_i^*}$};
\draw[-latex] (XmodWg2) -- (XmodS2) node [label, right] {$P_{W_{g,i}^*S_i^*}$};

\end{tikzpicture}}\\[-1.2ex]
    \caption{Commutative diagram of the restriction and induced map of $ A_{L_i}|\X/\W_{g,i}^*$. 
    $\bar{S}_{g,i}$ is defined as the insertion map $\bar{S}_{g,i}:\Ss_i^*/\W_{g,i}^*\rightarrow\X/\W_{g,i}^*$, where $\Ss_i^*/\W_{g,i}^*\simeq\frac{\X/\W_{g,i}^*}{\X/\Ss_i^*}$. $P_{W_{g,i}^*S_i^*}$ is defined as the canonical projection $P_{W_{g,i}^*S_i^*}:\X/\W_{g,i}^*\rightarrow\X/\Ss_{i}^*$ such that this diagram commutes.}
    \label{fig:good_decompose}
\end{figure}
\begin{remark}\label{rem:up-low:W_g}
    Let $\W_i^*\coloneqq\inf \underline{\W}(C_i,A;\image \bar{B}_i)$. It can be inferred from Lemma~\ref{lemma:spectrumfixed&free} that 
    \begin{align*}
        \kappa(A_{L_i}|\X/\W_i^*) = \kappa(A_{L_i}|\X/\Ss_i^*) \uplus \kappa(A_{L_i}|\Ss_i^*/\W_i^*)       
    \end{align*}
    and $\kappa(A_{L_i}|\Ss_i^*/\W_{i}^*)$ is independent of the choice of $L_i \in \mathbf{L}(\W_i^*)$. As the distribution of $\kappa(A_{L_i}|\Ss_i^*/\W_{i}^*)$ depends only on the system parameters and may not be entirely ``good'', unless $\W_{g,i}^* = \W_i^*$ there may not exist $L_i \in \mathbf{L}(\W_i^*)$ such that $\kappa(A_{L_i}|\X/\W_i^*) \subset \C_g$. However, if one takes $\W_{g,i}^* = \Ss_i^*$, then $\kappa(A_{L_i}|\Ss_i^*/\W_{i}^*)$ becomes free and can always be made ``good'' by a suitable choice of $L_i$. However, this choice of $\W_{g,i}^*$ is not necessarily infimal, since it may discard some ``good'' spectral elements of $\kappa(A_{L_i}|\Ss_i^*/\W_{i}^*)$ by enforcing $\W_{g,i}^* = \Ss_i^*$.
    While finding a precise $\W_{g,i}^*$ is still under investigation, one can easily conclude that
    \begin{equation*}
        \W^*_i \subseteq \W^*_{g,i} \subseteq \Ss^*_i
    \end{equation*}
    where the two set equality signs correspond to the cases of $\kappa(A_{L_i}|\Ss_i^*/\W_i^*)$ being entirely ``good'' and entirely ``bad'', respectively.
\end{remark}
As detailed in Remark~\ref{rem:up-low:W_g}, we proceed to identify and isolate the ``good" and ``bad" modes of $\kappa(A_{L_i}|\Ss_i^*/\W_i^*)$ as they are invariant with respect to $L_i$.
Let $\beta_i(\lambda)$ be the minimal polynomial of $A_{L_i}|\Ss_i^{*}/\W_i^*$. Factor $\beta_i(\lambda)=\beta_{g,i}(\lambda)\beta_{b,i}(\lambda)$, where the zeros of $\beta_{g,i}$ in $\C$ belong to $\C_g$ and the zeros of $\beta_{b,i}(\lambda)$ in $\C$ belong to $\C_b$. Let
\begin{equation*}
\begin{aligned}
    &\bar{\X}_{g,i}^* \coloneqq \frac{\Ss_i^{*}}{\W_i^{*}}\  \bigcap\  \kernel \beta_{g,i}(A_{L_i}|\Ss_i^*/\W_i^*), \\
    &\bar{\X}_{b,i}^* \coloneqq \frac{\Ss_i^{*}}{\W_i^{*}}\  \bigcap \ \kernel \beta_{b,i}(A_{L_i}|\Ss_i^*/\W_i^*).
\end{aligned}
\end{equation*}
Moreover, $\beta_{g,i}$ and $\beta_{b,i}$ coprime implies that
\begin{equation}\label{eq:Xa_Xb}
    \frac{\Ss_i^{*}}{\W_i^{*}} = \bar{\X}_{g,i}^*\oplus \bar{\X}_{b,i}^*.
\end{equation}
which ``split'' $\Ss_i^*/\W_i^*$ into two sub-quotient subspaces associated with the ``good'' and ``bad'' modes. Now, we introduce the main theorem for solving $\W_{g,i}^*$.
\begin{theorem}\label{thm:Wg}
    Let $P_{W_i^*}:\X \to \X/\W_i^*$ be the canonical projection, where $\W_i^*$ is the infimal $(C_i,A)$-invariant subspace.
    Then, the subspace $\W_{g,i}^*$, defined as
    \begin{equation}\label{eq:Wg}
        \W_{g,i}^*
        \coloneqq 
        P_{W_i^*}^{-1}\bar{\X}_{b,i}^* .
    \end{equation}
    is the infimal element of the family $\underline{\W_{g,i}}$ defined by \eqref{eq:def_W_gfamily}.
\end{theorem}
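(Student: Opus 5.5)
The argument has three parts. First, the proposed subspace $\W_{g,i}^*$ in \eqref{eq:Wg} belongs to $\underline{\W_{g,i}}$. Second, every member of $\underline{\W_{g,i}}$ contains it. Third, we record why both facts hinge on the spectrum on $\Ss_i^*/\W_i^*$ being independent of $L_i$. Throughout, fix $L_i\in\mathbf{L}(\W_i^*)$. A standard fact from \cite{massoumnia1986geometric} is that $\Ss_i^*$ is $A_{L_i}$-invariant for every $L_i\in\mathbf{L}(\W_i^*)$, since $\mathbf{L}(\W_i^*)\subseteq\mathbf{L}(\Ss_i^*)$. Hence $A_{L_i}$ induces a map on $\Ss_i^*/\W_i^*$, and by Remark~\ref{rem:up-low:W_g} its spectrum does not depend on the choice of $L_i$. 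The subspaces $\bar{\X}_{g,i}^*$ and $\bar{\X}_{b,i}^*$ are the spectral subspaces of this induced map, so they are invariant under it. I would also check that they do not depend on $L_i$: two friends of $\W_i^*$ differ by a map $L$ with $LC_i\W_i^*\subseteq\W_i^*$, and on $\Ss_i^*$ the difference $LC_i$ should map into $\W_i^*$. This is the step I expect to need care.

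\emph{Membership.} Set $\W:=P_{W_i^*}^{-1}\bar{\X}_{b,i}^*$. Then $\W_i^*\subseteq\W\subseteq\Ss_i^*$, so $\image\bar B_i\subseteq\W$. Since $\bar{\X}_{b,i}^*$ is invariant under the induced map and $P_{W_i^*}A_{L_i}=(A_{L_i}|\X/\W_i^*)P_{W_i^*}$, we get $A_{L_i}\W\subseteq\W$. Thus $L_i\in\mathbf{L}(\W)$ and $\W$ is $(C_i,A)$-invariant. By the third isomorphism theorem, $\Ss_i^*/\W\simeq(\Ss_i^*/\W_i^*)/\bar{\X}_{b,i}^*\simeq\bar{\X}_{g,i}^*$ via \eqref{eq:Xa_Xb}. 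So $\kappa(A_{L_i}|\Ss_i^*/\W)$ equals the zeros of $\beta_{g,i}$, which lie in $\C_g$. By Lemma~\ref{lemma:spectrumfixed&free} applied to $\W$, the spectrum $\kappa(A_{L_i}|\X/\W)$ splits as $\kappa(A_{L_i}|\X/\Ss_i^*)\uplus\kappa(A_{L_i}|\Ss_i^*/\W)$. The first part is freely assignable within $\mathbf{L}(\W)$, so we can choose $L_i$ to place it in $\C_g$. This requires that the free assignment can be done while staying in $\mathbf{L}(\W)$; the lemma grants this. Hence $\W\in\underline{\W_{g,i}}$.

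\emph{Infimality.} Let $\V\in\underline{\W_{g,i}}$ with a friend $L_i$ such that $\kappa(A_{L_i}|\X/\V)\subset\C_g$. Then $\V':=\V\cap\Ss_i^*$ also lies in $\underline{\W_{g,i}}$ by closure under intersection \cite[Lemma~7]{massoumnia1986geometric}, so it suffices to show $\W\subseteq\V'$. Since $\W_i^*$ is infimal, $\W_i^*\subseteq\V'$. Because $\Ss_i^*$ is invariant under every friend of a subspace containing $\image\bar B_i$, we have $\Ss_i^*/\V'\subseteq\X/\V'$ as an invariant subspace. Its spectrum is therefore a sub-multiset of $\kappa(A_{L_i}|\X/\V')\subset\C_g$. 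Now consider the invariant subspace $\bar\V:=\V'/\W_i^*\subseteq\Ss_i^*/\W_i^*$ for the induced map; here $L_i$ is also a friend of $\W_i^*$, possibly after adjusting it as in the key step. The quotient $(\Ss_i^*/\W_i^*)/\bar\V\simeq\Ss_i^*/\V'$ has only good spectrum. Hence $\beta_{g,i}$ of the induced map sends $\Ss_i^*/\W_i^*$ into $\bar\V$, and it restricts to an isomorphism on $\bar{\X}_{b,i}^*$. Therefore $\bar{\X}_{b,i}^*\subseteq\bar\V$, which gives $\W=P_{W_i^*}^{-1}\bar{\X}_{b,i}^*\subseteq\V'\subseteq\V$.

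\emph{Main obstacle.} The delicate point is in the infimality step. The friend $L_i$ of $\V$ need not be a friend of $\W_i^*$, so the induced map on $\Ss_i^*/\W_i^*$ is not immediately defined for it. My first attempt would be to show that there is a common friend: by \cite{massoumnia1986geometric}, $\mathbf{L}(\V)\cap\mathbf{L}(\W_i^*)\neq\varnothing$ whenever $\W_i^*\subseteq\V$. Alternatively, I would argue entirely on $\Ss_i^*/\V'$ and use the $L$-independence of the fixed spectrum in Lemma~\ref{lemma:spectrumfixed&free}: the spectrum of $\Ss_i^*/\V'$ must be contained in $\kappa(A_{L}|\Ss_i^*/\W_i^*)$, and it must also include all the bad part, unless $\V'\supseteq P_{W_i^*}^{-1}\bar{\X}_{b,i}^*$.
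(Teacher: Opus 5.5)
Your proof is correct, and its membership half is the paper's argument: pull back the invariance of $\bar{\X}_{b,i}^*$ through $P_{W_i^*}$, identify $\Ss_i^*/\W_{g,i}^*\simeq\bar{\X}_{g,i}^*$, then apply Lemma~\ref{lemma:spectrumfixed&free}. The infimality half rests on the same idea but is carried out differently.

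The paper takes $\W_i^{(\mathrm{a})}\in\underline{\W_{g,i}}$ and splits $P_{W_i^*}\W_i^{(\mathrm{a})}$ along $\bar{\X}_{b,i}^*\oplus\bar{\X}_{g,i}^*$. It then states that the bad component must be all of $\bar{\X}_{b,i}^*$. That splitting tacitly requires $P_{W_i^*}\W_i^{(\mathrm{a})}$ to lie in $\Ss_i^*/\W_i^*$ and to be invariant under the induced map, and the ``must have'' is given without justification. You secure the first two points by passing to $\V'=\V\cap\Ss_i^*$ and to a friend shared with $\W_i^*$. You prove the third by coprimality: with $T:=A_{L_i}|\Ss_i^*/\W_i^*$, $\beta_{g,i}(T)$ maps $\Ss_i^*/\W_i^*$ into $\bar{\V}$ because the quotient spectrum is good, and it is invertible on $\bar{\X}_{b,i}^*=\kernel\beta_{b,i}(T)$.

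Your preliminary check is also needed. Remark~\ref{rem:up-low:W_g} only records that the spectrum on $\Ss_i^*/\W_i^*$ is independent of $L_i$; you check that the induced map itself is, which is what makes \eqref{eq:Wg} well defined. The paper's route is shorter. Yours is complete, at the cost of invoking closure of $\underline{\W_{g,i}}$ under intersection (which the paper also asserts) and compatibility of nested conditioned invariant subspaces.

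Two points need tidying.

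\emph{The invariance claim.} ``$\Ss_i^*$ is invariant under every friend of a subspace containing $\image\bar B_i$'' is false as stated, since every $L$ is a friend of $\X$. You only need it for subspaces $\V'$ with $\image\bar B_i\subseteq\V'\subseteq\Ss_i^*$.

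\emph{Your key step.} It closes in one line. We have $\Ss_i^*\subseteq\W_i^*+\kernel C_i$, e.g.\ from $\Ss_i^*=\ang{\W_i^*+\kernel C_i\,|\,A_{L_i}}$ for $L_i\in\mathbf{L}(\W_i^*)$. So $C_i\Ss_i^*\subseteq C_i\W_i^*\subseteq C_i\V'$. Hence $(L-L')C_i\Ss_i^*\subseteq\V'$ for all $L,L'\in\mathbf{L}(\V')$, and $(L-L')C_i\Ss_i^*\subseteq\W_i^*$ for all $L,L'\in\mathbf{L}(\W_i^*)$.

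A common friend $L'$ of $\V'$ and $\W_i^*$ exists because the two are nested. Take a complement $\mathscr{U}_1$ of $\W_i^*\cap\kernel C_i$ in $\W_i^*$, and a complement $\mathscr{U}_2$ of $\W_i^*+(\V'\cap\kernel C_i)$ in $\V'$. Then $C_i$ is injective on $\mathscr{U}_1\oplus\mathscr{U}_2$, so one may set $L'C_iu=-Au$ there and extend arbitrarily.

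Since $L'\in\mathbf{L}(\W_i^*)\subseteq\mathbf{L}(\Ss_i^*)$, the inclusion above shows two things: $\Ss_i^*$ is invariant under every friend of $\V'$, and all friends of $\V'$ induce the same map on $\Ss_i^*/\V'$. Hence replacing your good friend of $\V'$ by $L'$ leaves the good spectrum on $\Ss_i^*/\V'$ unchanged. This is exactly how the two remedies in your last paragraph fit together.
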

\begin{proof}
    First, we check that $\W_{g,i}^*\in\underline{\W_{g,i}}$. From \eqref{eq:Wg}, it is easy to conclude
    \begin{equation*}
        \W_{g,i}^*\supseteq \W_{i}^* \supseteq \image{\bar{B}_i}.
    \end{equation*}
Moreover, since
\begin{equation*}
\begin{aligned}
    P_{W_i^*}A_{L_i}\W_{g,i}^* &= \left(A_{L_i}|\X/\W_i^*\right)P_{W_i^*}\left(P_{W_i^*}^{-1}\bar{\X}_{b,i}^*\right)\\
    &\subseteq \left(A_{L_i}|\X/\W_i^*\right)\bar{\X}_{b,i}^*\subseteq \bar{\X}_{b,i}^*,
\end{aligned}
\end{equation*}
one obtains
\begin{equation*}
    A_{L_i}\W_{g,i}^* \subseteq P_{W_i^*}^{-1}\bar{\X}_{b,i}^* =\W_{g,i}^*,
\end{equation*}
which indicates $\W_{g,i}^*$ is $A_{L_i}$-invariant.
Note that
\begin{equation*}
    \frac{\Ss_i^*}{\W_{g,i}^*} = \frac{\Ss_i^*/\W_i^*}{\bar \X_{b,i}^*} = \bar \X_{g,i}^*.
\end{equation*}
Then, based on Lemma~\ref{lemma:spectrumfixed&free}, we have
\begin{align*}\label{eq:obj}
    \kappa(A_{L_i}|\X/\W_{g,i}^*) 
    &= 
    \kappa(A_{L_i}|\X/\Ss_i^*)\uplus \kappa(A_{L_i}|\Ss_i^*/\W_{g,i}^*) \\
    &=
    \kappa(A_{L_i}|\X/\Ss_i^*)\uplus \kappa(A_{L_i}|\bar{\X}_{g,i}^*)
\end{align*}
which ensures the existence of an $L_i$ such that $\kappa(A_{L_i}|\X/\W_{g,i}^*)\subset \mathbb{C}_g$. Therefore, we confirm $\W_{g,i}^* \in \underline{\W_{g,i}}$.

We now prove that $\W_{g,i}^*$ is the infimal element of $\underline{\W_{g,i}}$. Consider an arbitrary $\W_i^{(\mathrm{a})}\in\underline{\W_{g,i}}$. Since $\W_i^{(\mathrm{a})} \supseteq \W_i^*$, it follows from \eqref{eq:Xa_Xb} that we can  decompose $P_{W_i^*}\W_i^{(\mathrm{a})}$ as
\begin{equation*}
\begin{aligned}
        P_{W_i^*}\W_i^{(\mathrm{a})} = &\left(\left(P_{W_i^*}\W_i^{(\mathrm{a})}\right)\cap \bar{\X}_{b,i}^*\right) \\&\oplus \left(\left(P_{W_i^*}\W_i^{(\mathrm{a})}\right)\cap \bar{\X}_{g,i}^*\right).
\end{aligned}
\end{equation*}
Since $\W_i^{(\mathrm{a})}\in\underline{\W_{g,i}}$, we must have \[\left(P_{W_i^*}\W_i^{(\mathrm{a})}\right)\cap \bar{\X}_{b,i}^*=\bar{\X}_{b,i}^*.\]
Hence, 
    $P_{W_i^*}\W_i^{(\mathrm{a})}\supseteq \bar{\X}_{b,i}^*$.
It follows immediately that
\begin{equation*}
    \W_i^{(\mathrm{a})}\supseteq P_{W_i^*}^{-1} \bar{\X}_{b,i}^* = \W_{g,i}^*.
\end{equation*}
Since every $\W_i^{(\mathrm{a})}\in\underline{\W_{g,i}}$ contains $\W_{g,i}^*$, we conclude that $\W_{g,i}^*$ is infimal, which completes the proof.
\end{proof}

\begin{remark}
    Theorem~\ref{thm:Wg} provides the definition of the $\W_{g,i}^*$ from a geometric point of view. To understand it better, we can view \eqref{eq:Wg} from two extreme scenarios. If all the elements in $\kappa(A_{L_i}|\Ss_i^*/\W_{i}^*)$ are ``good'', then $\bar \X_{b,i}^* = 0$, which yields $\W_{g,i}^* = P_{W_i^*}^{-1} 0 = \W_i^*$ \cite[Chapter~0.12]{wonham1985linear}. On the other side, if $\kappa(A_{L_i}|\Ss_i^*/\W_{i}^*)$ is entirely ``bad'', we have $\bar \X_{b,i}^* = \Ss_i^*/\W_i^*$, which implies $\W_{g,i}^* = P_{W_i^*}^{-1} \left(\Ss_i^*/\W_i^*\right) = \Ss_i^*$. These two scenarios coincide with the assertion in Remark~\ref{rem:up-low:W_g}. Furthermore, Fig.\ref{fig:lattice_W_g} illustrates the inclusion relations among the subspaces associated with Theorem\ref{thm:Wg}.
\end{remark}
\begin{figure}[htp]
    \centering
    \scalebox{0.9}{
    \begin{tikzpicture}

\draw(0,0) -- (0,5);

\filldraw [black] (0,5) circle (2pt);
\node (n1) at (-0.5,5) {$\X$};

\filldraw [black] (0,4) circle (2pt);
\node (n2) at (-0.5,4) {$\Ss_i^*$};

\filldraw [black] (0,3) circle (2pt);
\node (n3) at (-0.5,3) {$\W_{g,i}^*$};

\filldraw [black] (0,2) circle (2pt);
\node (n4) at (-0.5,2) {$\W_i^*$};

\filldraw [black] (0,1) circle (2pt);
\node (n5) at (-0.5,1) {$\image{\Bar{B}_i}$};

\filldraw [black] (0,0) circle (2pt);
\node (n6) at (-0.5,0) {$0$};

\draw [decorate,decoration={brace,amplitude=7pt,mirror,raise=4pt},yshift=0pt]
(0.1,2) -- (0.1,4) node [black,midway,xshift=0.5cm] {};

\draw(0.55,3) -- (1,3.25);
\node [rotate=120] (f1) at (1.1,3.3) {$\approx$};
\draw(1.2,3.35) -- (2,3.77);

\filldraw [black] (2,3.77) circle (2pt);
\node (n7) at (3,3.77) {$\Ss^*_i/\W_i^{*}$};

\draw(2,3.77)--(1,2.5);
\filldraw [black] (1,2.5) circle (2pt);
\node (n8) at (1,2.1) {$\bar \X_{b,i}^*$};

\draw(2,3.77)--(3,2.5);
\filldraw [black] (3,2.5) circle (2pt);
\node (n9) at (4,2.1) {$\bar \X_{g,i}^* = \Ss_i^*/\W_{g,i}^*$};

\end{tikzpicture}}\\[-1.5ex]
    \caption{Lattice diagrams: Construction of the subspace $\W_{g,i}^*$}
    \label{fig:lattice_W_g}
\end{figure}
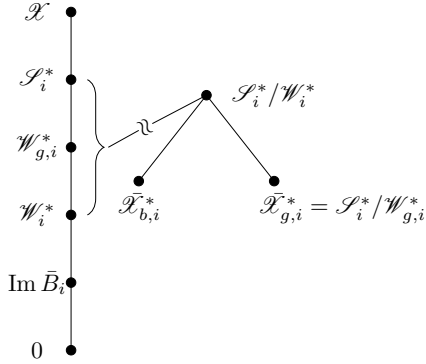
\begin{figure}[htp] 
\centering
    \scalebox{0.9}{
    \begin{tikzpicture}


\def\xgap{3.25}
\def\ygap{1.5}
\foreach \i in {0,...,3}{
    \foreach \j in {0,...,3}{
        \coordinate (\i\j) at (\i*\xgap, \j*\ygap);
    }
}
\pgfmathatantwo{\ygap}{\xgap}
\def\labangle{\pgfmathresult}

\tikzset{label/.style = {font=\small, midway}}

\node (barU) at (01) {$\bar \U_i$};
\node (XmodWg) at (10) {$\X / \W^*_{g,i}$};
\node (XmodWg2) at (20) {$\X / \W^*_{g,i}$};
\node (X) at (11) {$\X$};
\node (X2) at (21) {$\X$};
\node (Wg) at (12) {$\W^*_{g,i}$};
\node (Wg2) at (22) {$\W^*_{g,i}$};

\draw[-latex, dashed] (barU) -- (XmodWg) node [label, below, rotate=-\labangle] {$0$};
\draw[-latex] (barU) -- (X) node [label, above] {$\bar B_i$};

\draw[-latex] (XmodWg) -- (XmodWg2) node [label, above] {$A_{L_i}|\X/\W_{g,i}^*$};
\draw[-latex] (XmodWg) -- (XmodWg2) node [label, below] {spectrum good};
\draw[-latex] (X) -- (X2) node [label, above] {$A_{L_i}$};
\draw[-latex] (Wg) -- (Wg2) node [label, above] {$A_{L_i}|\W_{g,i}^*$};

\draw[-latex] (Wg) -- (X) node [label, right] {$W_{g,i}^*$};
\draw[-latex] (Wg2) -- (X2) node [label, right] {$W_{g,i}^*$};

\draw[-latex] (X) -- (XmodWg) node [label, right] {$P_{W_{g,i}^*}$};
\draw[-latex] (X2) -- (XmodWg2) node [label, right] {$P_{W_{g,i}^*}$};

\end{tikzpicture}}\\[-1.5ex]
    \caption{Commutative diagram of $\W_{g,i}^*$ decomposition at node $i$.}
    \label{fig:Wg_decompose}
\end{figure}
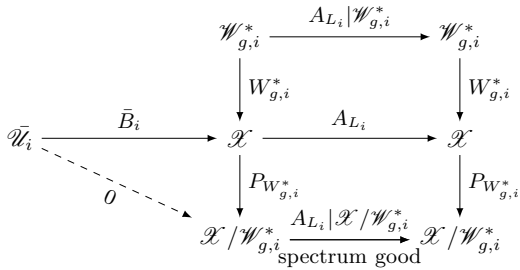
Having identified the desired subspace $\W_{g,i}^*$, we can now easily reconstruct $P_{W_{g,i}^*}x$ at each node. This reconstruction is immune to the influence of the unknown input $\bar{u}_i$, as shown in the commutative diagram in Fig.~\ref{fig:Wg_decompose}. 
\begin{remark}
The idea of partial state estimation has also been investigated in \cite{tsui1996new} and \cite{sundaram2008partial}, which, to some extent, resemble the DUIO framework in that individual nodes are unable to estimate the full system state. {However, \cite{tsui1996new} lacks a comprehensive analysis of the system's invariant zeros, leaving the maximum number of recoverable system states undefined.  In \cite{sundaram2008partial}, the maximum number of reconstructible state combinations is determined through a sequence of algebraic parameterizations arising in the observer design procedure. In contrast, our geometric analysis provides an a priori characterization of the reconstructible state subspace directly from the intrinsic structure of the system matrices, independent of any particular observer synthesis.}
\end{remark}

Note that, as established in \cite[Lemma~4, Theorem~19]{massoumnia1986geometric}, the existence of subspaces $\W_i^*$ and $\Ss_i^*$ is independent of the output injection map $L_i$. Furthermore, according to Theorem~\ref{thm:Wg}, the definition of subspace $\W_{g,i}^*$ is not related to any specific choice of $L_i$. Instead, it is defined by the inverse map $P_{W_i^*}^{-1}$ and the partial invariant zeros $\bar{\X}_{b,i}^*$. Consequently, the analysis of the subspace decomposition is performed without choosing a specific $L_i$.
For implementation, we provide the following algorithms to compute the basis of $\W_{g,i}^*$ and the corresponding output feedback gains $L_i$. Algorithm~\ref{alg:W_i&S_i} first outlines the procedure to obtain the bases of $\W_i^*$ and $\Ss_i^*$. Building on these results, Algorithm~\ref{alg:W_gi} computes the basis of $\W_{g,i}^*$ together with the matrix representation of the canonical projection $P_{W_{g,i}^*}$. Finally, Algorithm~\ref{alg:Li} presents the numerical calculation of the output injection map $L_i$.
Note that the computations in Algorithms~\ref{alg:W_i&S_i}-~\ref{alg:Li} are performed entirely offline for each node, and therefore impose no additional online computational burden during the state estimation process.

%

%
\begin{algorithm}[ht] \label{alg:W_i&S_i}
    \caption{Numerical calculation of the basis of $\W_i^*$, $\Ss_i^*$ and the canonical projection $P_{W_i^*}$, $P_{S_i^*}$ at node $i$}
    \LinesNumbered
    \KwIn{
        System matrices $A$, $\bar B_i$, $C_i$.}
        $W_i^0 \gets \mathbf{0}_{n\times n}$\\
        \For{$k\gets 1$ \KwTo $n$}{
            $P_i^k \gets \pr{\kernel {W_i^0}^\top}^\top $ \\
            $T_i^k \gets \kernel 
                    \begin{bmatrix}
                        P_i^k \\ C_i
                    \end{bmatrix}$ \\
            $W_i^k \gets
                    \begin{bmatrix}
                        \bar B_i & AT_i^k 
                    \end{bmatrix}$
        }
        $P_{W_i^*} \gets P_i^n$ $\left(P_{W_i^*}\in\mathbb{R}^{(n-w_i^*)\times n}\right)$\\
        $W_i^* \gets W_i^n$ $\left(W_i^*\in\mathbb{R}^{n\times w_i^*}\right)$\\
        $S_i^0 \gets I_n$\\
        \For{$k\gets 1$ \KwTo $n$}{
            $P_i^k \gets \pr{\kernel {S_i^0}^\top}^\top $ \\
            $T_i^k \gets \kernel 
                    \begin{bmatrix}
                        P_i^k A \\ C_i
                    \end{bmatrix}$ \\
            $S^k \gets
                    \begin{bmatrix}
                        W_i^* & T_i^k 
                    \end{bmatrix}$
        }
        $P_{S_i^*} \gets P_i^n$ $\left(P_{S_i^*}\in\mathbb{R}^{(n-s_i^*)\times n}\right)$\\
        $S_i^* \gets S_i^n$ $\left(S_i^*\in\mathbb{R}^{n\times s_i^*}\right)$ \\

    \KwOut{$W_i^*$, $P_{W_i^*}$, $S_i^*$ and $P_{S_i^*}$
    }
\end{algorithm}

\begin{algorithm}[ht] \label{alg:W_gi}
    \caption{Numerical calculation of the basis of $\W_{g,i}^*$ and the canonical projection $P_{W_{g,i}^*}$at node $i$}
    \LinesNumbered
    \KwIn{
    Insertion map $W_i^*$ and $S_i^*$.\\Canonical projection $P_{W_i^*}$}
    Find a basis $V_i$ such that
    \[\image{S_i^*} = \image{\matrices{W_i^*& V_i}} \]
    \[\dim{S_i^*} = \dim{W_i^*}+\dim{V_i}\]\\
    Find an arbitrary $L^{(\mathrm{a})}_i \in \R^{n\times p_i}$ such that 
            \[P_{W_{i}^*} \left(A+L^{(\mathrm{a})}_i C_i\right) W_{i}^* = \mathbf{0}_{(n-w_{i}^*)\times w_{i}^*}.\]
    
    $A_{L_i|V_i} \gets V_i^\top \left(A+L^{(\mathrm{a})}_i C_i\right) V_i$

    $\beta_i(\lambda) \gets$ the minimal polynomial of $A_{L_i|V_i}$

    Distinguish the ``good'' and ``bad'' zeros of $\beta_i(\lambda)$: $\beta_i(\lambda) = \beta_{g,i}(\lambda) * \beta_{b,i}(\lambda)$

    $\bar X_{b,i}^* \gets \Ker \beta_{b,i}(A_{L_i|V_i})$



    $W_{g,i}^* \gets \matrices{W_i^* & \bar X_{b,i}^*}$ $\left(W_{g,i}^*\in\mathbb{R}^{n\times w_{g,i}^*}\right)$\\
    $P_{W_{g,i}^*} \gets \pr{\kernel \left({W_{g,i}^*}^\top\right)}^\top$ $\left(P_{W_{g,i}^*}\in\mathbb{R}^{(n-w_{g,i}^*)\times n}\right)$\\
\KwOut{$A_{L_i|V_i}$, $W_{g,i}^*$ and $P_{W_{g,i}^*}$}
\end{algorithm}

\begin{algorithm}[ht] \label{alg:Li}
    \caption{Numerical calculation of the output injection map $L_i$ to obtain the desired spectrum of $A_{L_i}|\X/\W_{g,i}^*$ at node $i$}
    \LinesNumbered
    \KwIn{
    The set of symmetric desired (stable) poles $\bar \Lambda_{s,i}$ containing $s_i^*$ complex numbers.\\
    Insertion map $W_{g,i}^*$ and canonical projection $P_{W_{g,i}^*}$.}
            Find an arbitrary $L^{(\mathrm{a})}_i \in \R^{n\times p_i}$ such that 
                \[P_{W_{i}^*} \left(A+L^{(\mathrm{a})}_i C_i\right) W_{i}^* = \mathbf{0}_{(n-w_{i}^*)\times w_{i}^*}.\] \\
            $A_i^0 \gets P_{S_{i}^*}\left(A+L^{(\mathrm{a})}_i C_i\right) P_{S_{i}^*}^\dagger$ \\
            $S_i^Y \gets \text{ basis of } C_i\Ss_i^* \text{ in } \Y_i$ \\
            $P_i^Y \gets \pr{\kernel {S_i^Y}^\top}^\top$ \\
            $C_i^0 \gets P_i^Y C_i P_{S_{i}^*}^\dagger$ \\
            Choose $L_i^0\in\mathbb{R}^{(n-s_{i}^*)\times s_{i}^*}$ such that
                \[
                \kappa \pr{A_i^0 + L_i^0 C_i^0} = \bar \Lambda_{s,i}
                \]\\
            $L_i \gets L_i^{(\mathrm{a})} + P_{W_{g,i}^*}^\dagger L_i^0 P_i^Y$ \\
                \vspace{0.1in}
    \KwOut{$L_i$}
\end{algorithm}

\section{Distributed Unknown Input Observer Design}\label{sec:DUIO}

In this section, we propose the design of DUIO for both continuous- and discrete-time LTI systems \eqref{eq:systemDecomposition_ct} and \eqref{eq:systemDecomposition_dt} with measurements \eqref{eq:output}, building on the geometric subspace decomposition introduced earlier. At each node, convergence of the estimated state within the quotient space $\X/\W_{g,i}^*$ is ensured through an appropriate choice of $L_i$. The components of the estimated state associated with the subspace $\W_{g,i}^*$ are reconstructed via consensus instead. Since the consensus mechanisms differ between continuous- and discrete-time systems, the resulting geometric conditions also differ slightly in each case; their connections and distinctions are analyzed in the discussion that follows.

\subsection{DUIO Design for Continuous-time Systems}
The dynamics of the continuous-time DUIO for node $i$, $i\in\mathbf{N}$, are given by
 \begin{multline}\label{eq:DUIO_ct}
    \dot{\hat{x}}_i(t) = (A+L_iC_i)\hat{x}_i(t)-L_i y_i(t) + B_i u_i(t) \\
    + \chi_i W_{g,i}^* {W_{g,i}^*}^\top \sum_{j=1}^N a_{ij} (\hat x_j(t) - \hat x_i(t))\\
    + \gamma_i W_{g,i}^*\mathrm{sign}\left( {W_{g,i}^*}^\top \sum_{j=1}^N a_{ij} (\hat x_j(t) - \hat x_i(t)) \right),   
 \end{multline}
where $\chi_i,\ \gamma_i\in\R_{>0}$ are the coupling gains of the consensus terms to be designed. {The overall structure follows a standard consensus-based observer architecture augmented with a sliding-mode injection term, a commonly adopted mechanism for handling unknown inputs (see e.g., \cite{edwards1998sliding,cao2023bounded}). The key distinction lies in the insertion map $W_{g,i}^*\in\mathbb{R}^{n\times w_{g,i}^*}$ and the associated observer gain $L_i\in\mathbb{R}^{n\times p_i}$ which cannot be obtained through conventional observer constructions. Instead, they are explicitly derived from the proposed subspace decomposition developed in Section~\ref{sec:space_decomposition}.}

From \eqref{eq:error}, the error dynamics at node $i$ can be expressed as
    \begin{align}\label{eq:e:dynamics}
    \dot{e}_i(t)&=A_{L_i}e_i(t)+\bar{B}_i \bar{u}_i(t)\notag\\
        &-\chi_i W_{g,i}^*{W_{g,i}^*}^\top \sum_{j=1}^N a_{ij} (e_i(t) - e_j(t))\notag\\
        &-\gamma_i W_{g,i}^*\mathrm{sign}\left( {W_{g,i}^*}^\top \sum_{j=1}^N a_{ij} (e_i(t) - e_j(t)) \right) .
    \end{align}
%
Let $T_i \coloneq \matrices{W_{g,i}^*&P_{W_{g,i}^*}^\top}\in\mathbb{R}^{n\times n}$, $W_{g,i}^*\in\mathbb{R}^{n\times w_{g,i}^*}$, $P_{W_{g,i}^*}\in\R^{(n-w_{g,i}^*)\times n}$, be an orthonormal transformation adapted to the subspace decomposition at node $i$. Accordingly, the state space can be expressed as
$
    \X = \W_{g,i}^*\oplus\M_i^*, 
    $
where $\M_i^*\simeq\X/\W_{g,i}^*$ is an arbitrary complementary subspace to $\W_{g,i}^*$, and its orthonormal basis can be represented by $P_{W_{g,i}^*}^\top$. Hence, one can obtain
\begin{subequations}
\begin{align*}
   &{W_{g,i}^*}^\top P_{W_{g,i}^*}^\top = \mathbf{0}_{w_{g,i}^*\times(n-w_{g,i}^*)},\\
   &{W_{g,i}^*}^\top W_{g,i}^* = I_{w_{g,i}^*}, \ 
   P_{W_{g,i}^*} P_{W_{g,i}^*}^\top = I_{n-w_{g,i}^*}.
   \end{align*}
\end{subequations}
%
From the commutative diagram in Fig.~\ref{fig:Wg_decompose}, the following relationships hold
\begin{subequations}
\begin{align*}
    &W_{g,i}^*(A_{L_i}|\W_{g,i}^*) = A_{L_i}W_{g,i}^*,\\ 
    &(A_{L_i}|\mathscr{X}/\W_{g,i}^*)P_{W_{g,i}^*} = P_{W_{g,i}^*} A_{L_i}.
   \end{align*}
\end{subequations}
Consequently, under the transformation by $T_i$, the matrix representation of $A_{L_i}|\W_{g,i}^*$ and $A_{L_i}|\X/\W_{g,i}^*$ can be written as
\begin{subequations}\label{eq:AL}
\begin{align}
        &\Tilde{A}_{L_i}\coloneq \mathrm{Mat}(A_{L_i}|\W_{g,i}^*)=W_{g,i}^\top A_{L_i} W_{g,i}\label{eq:tildeAL}\\
        &\Bar{A}_{L_i}\coloneq \mathrm{Mat}(A_{L_i}|\X/\W_{g,i}^*)=P_{W_{g,i}^*} A_{L_i} P_{W_{g,i}^*}^\top.
\label{eq:barAL}
   \end{align}
\end{subequations}
where $\Bar{A}_{L_i}$ is Hurwitz as $\kappa(A_{L_i}|\X/\W_{g,i}^*)$ is assigned inside the stable region by $L_i$, as given in Section~\ref{sec:space_decomposition}. 
From \eqref{eq:global_error}, the dynamics of the overall estimation error can be expressed as
\begin{equation}
    \begin{aligned}
    \dot{e}(t)=&\left(A_L  - \boldsymbol{\chi} W_g^* {W_g^*}^\top(\Lap\otimes I_n)\right)e(t)\\
     &+ \bar{B}\bar{u}(t)- \boldsymbol{\gamma} W_g^* \mathrm{sign}\left({W_g^*}^\top(\Lap\otimes I_n)e(t)\right)
\end{aligned}\label{eq:error_ct}
\end{equation}
where
\begin{subequations}
\begin{align*}
    &A_L = \diag(A_{L_1},\cdots,A_{L_N}),\
    W_g^* = \diag(W_{g,1}^*,\cdots,W_{g,N}^*),\\
    &\bar{B} = \diag(\bar{B}_1,\cdots,\bar{B}_N),\ 
    \bar{u} = \col(\bar{u}_1,\cdots,\bar{u}_N),\\
    &\boldsymbol{\chi} = \diag(\chi_1,\cdots,\chi_N)\otimes I_n,\ 
    \boldsymbol{\gamma} = \diag(\gamma_1,\cdots,\gamma_N)\otimes I_n.
    \end{align*}
\end{subequations}
Define $M=\diag(P_{W_{g,1}^*}^\top,\cdots,P_{W_{g,N}^*}^\top)$, and construct the orthonormal transformation $T\in\mathbb{R}^{nN\times nN}$ as
\begin{equation}\label{eq:basis_T}
    T \coloneq \matrices{W_g^*&M}.
\end{equation}
Then, the matrix representation of the map $A_L-\boldsymbol{\chi} W_g^*{W_g^*}^\top(\Lap\otimes I_n)$, by the change of basis $T$, follows
\begin{align}\label{eq:Aa&Ab}
    T^\top\left( A_L-\boldsymbol{\chi} W_g^*{W_g^*}^\top(\Lap\otimes I_n)\right) T=\begin{bmatrix}
        A_a&\star\\
        \mathbf{0}&A_b
    \end{bmatrix}
\end{align}
where
\begin{subequations}
\begin{align*}
        &A_a = \Tilde{A}_L - \bar{\boldsymbol{\chi}} {W_g^*}^\top (\Lap\otimes I_n)W_g^*,
        A_b = \diag(\bar{A}_{L_1},\cdots,\bar{A}_{L_1}),\\
        &\Tilde{A}_L=\diag(\Tilde{A}_{L_1},\cdots,\Tilde{A}_{L_1}),\ 
        \bar{\boldsymbol{\chi}}=\diag(\chi_1 I_{s_1},\cdots,\chi_N I_{s_N}).
    \end{align*}
\end{subequations}
The representation of $\star$ is omitted as it does not affect the eigenvalues of the transformed matrix.
Let $\epsilon\coloneq T^\top e$ with $\epsilon\coloneq[\epsilon_a^\top\ \epsilon_b^\top]^\top$. Then, in the new coordinates, \eqref{eq:error_ct} can be expressed as
\begin{align}
        &\begin{bmatrix}
        \dot\epsilon_a(t)\\\dot\epsilon_b(t)
    \end{bmatrix}=\begin{bmatrix}
        A_a&\star\\
        \mathbf{0}&A_b
    \end{bmatrix}\begin{bmatrix}
        \epsilon_a(t)\\\epsilon_b(t)
    \end{bmatrix}+\begin{bmatrix}
        {W_g^*}^\top\bar{B}\\
        \mathbf{0}
    \end{bmatrix}\bar{u}(t)\notag\\
    &\quad -\begin{bmatrix}
        \bar{\boldsymbol{\gamma}}\mathrm{sign}\left({W_g^*}^\top(\Lap\otimes I_n)(W_g^*\epsilon_a(t)+M\epsilon_b(t)) \right)\\
        \mathbf{0}
    \end{bmatrix}\label{eq:epsilon}
\end{align}
where $\bar{\boldsymbol{\gamma}}=\diag(\gamma_1 I_{w^*_{g,1}},\cdots,\gamma_N I_{w^*_{g,N}})$.
From \eqref{eq:epsilon}, we observe that the overall error dynamics of the DUIO can be decomposed into two components: $\epsilon_a$, which must be stabilized through consensus terms associated with the subspace $\W_{g,i}^*$, and $\epsilon_b$, which is stabilized via local output injections corresponding to the quotient space  $\X/\W_{g,i}^*$. 
The following assumptions are required to proceed with the main results.
\begin{assumption}\label{as:ct}
The subspace $\W_{g,i}^*$ of each node $i$ has the following joint property:
    \begin{equation}
        \bigcap_{i=1}^N\W_{g,i}^* = 0. \label{con:con_for_ct_system}
    \end{equation}
\end{assumption}

\begin{lemma}\cite[Lemma~4]{kim2019completely}
    If Assumptions~\ref{as:connected} and \ref{as:ct} hold, ${W_g^*}^\top(\Lap\otimes I_n)W_g^*$ is positive definite.\label{lem:positivedefinite}
\end{lemma}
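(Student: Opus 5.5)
Since ${W_g^*}^\top(\Lap\otimes I_n)W_g^*$ is symmetric and $\Lap\otimes I_n$ is positive semidefinite, it suffices to show that the quadratic form vanishes only at zero. First I would record that $\Lap$ is symmetric positive semidefinite for an undirected graph. Then $\Lap\otimes I_n$ is symmetric positive semidefinite, and so is $Q\coloneq{W_g^*}^\top(\Lap\otimes I_n)W_g^*$, because $z^\top Q z = (W_g^*z)^\top(\Lap\otimes I_n)(W_g^*z)\ge 0$ for every $z$. It remains to show that $z^\top Q z=0$ forces $z=0$.

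Write $z=\col(z_1,\dots,z_N)$ with $z_i\in\R^{w_{g,i}^*}$, and set $v\coloneq W_g^*z=\col(W_{g,1}^*z_1,\dots,W_{g,N}^*z_N)$. Using the standard identity
\begin{equation*}
v^\top(\Lap\otimes I_n)v=\tfrac12\sum_{i,j}a_{ij}\|v_i-v_j\|_2^2 ,
\end{equation*}
the condition $z^\top Q z=0$ gives $v_i=v_j$ for every edge. Equivalently, $v\in\kernel(\Lap\otimes I_n)$. By Lemma~\ref{lemma:eig_Laplacian_1} under Assumption~\ref{as:connected}, this kernel is $\image(\mathbf 1_{N\times1}\otimes I_n)$, so all $v_i$ equal a common vector $\bar v\in\R^n$.

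Each $v_i=W_{g,i}^*z_i$ lies in $\W_{g,i}^*$, so $\bar v\in\bigcap_{i=1}^N\W_{g,i}^*$. Assumption~\ref{as:ct} then gives $\bar v=0$, that is, $W_{g,i}^*z_i=0$ for every $i$. The columns of $W_{g,i}^*$ are orthonormal, so $W_{g,i}^*$ is monic and $z_i=0$. Hence $z=0$, and $Q$ is positive definite.

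No step is a real obstacle. The only care needed is that $W_{g,i}^*$ be monic, which holds here because $T_i$ is orthonormal, and that the kernel of $\Lap\otimes I_n$ is the consensus subspace, which follows from connectivity via Lemma~\ref{lemma:eig_Laplacian_1}.
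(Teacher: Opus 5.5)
Your proof is correct. The paper gives no argument of its own here and simply imports the result from \cite[Lemma~4]{kim2019completely}. Your chain of steps is the standard argument behind that cited result: positive semidefiniteness, then $z^\top Q z=0$ forcing $W_g^*z\in\kernel(\Lap\otimes I_n)=\image(\mathbf{1}_{N\times 1}\otimes I_n)$ by connectivity, then the common vector lying in $\bigcap_{i}\W_{g,i}^*=0$, then $z=0$ because each $W_{g,i}^*$ is monic.

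Writing it out also makes explicit that only the trivial-intersection condition of Assumption~\ref{as:ct} and the injectivity of the bases $W_{g,i}^*$ are used, with no invariance property of $\W_{g,i}^*$. That is exactly what justifies carrying the cited lemma, formulated there without unknown inputs, over to the present subspaces $\W_{g,i}^*$.
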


\begin{assumption}\label{as:input_bound}
    The unknown input signal $\bar{u}_i$, $i\in\mathbf{N}$, is bounded such that
     \begin{equation}
        \|\bar{u}_i\|_\infty \le \bar{u}_{\max},
     \end{equation}
where $\bar{u}_{\max}$ is a known constant.
\end{assumption}
The condition \eqref{con:con_for_ct_system} in Assumption~\ref{as:ct} is a joint detectability condition and, to the best of our knowledge, has not been previously reported in the literature. Comparing condition \eqref{con:con_for_ct_system} in Assumption~\ref{as:ct} with the rank condition \eqref{eq:rank_con}, we observe that \eqref{eq:rank_con} requires a unit relative degree for all $i\in \mathbf N$, implying complete reconstruction of the unknown input from each node's measurement. This is significantly more stringent than the condition presented in Assumption~\ref{as:ct}, which only requires the joint null space of the subspaces to be trivial. For example, if a node $i$ ($i\in\mathbf{N}$) has $\mathrm{rank}(\bar{B}_i)>\mathrm{rank}(C_i)$, it cannot fulfill the condition specified in \eqref{eq:rank_con}. In a specific scenario where a node experiences a failure resulting in no output (i.e., $C_i = 0$, leading to $\W_{g,i}^* = \X$), it inherently cannot satisfy \eqref{eq:rank_con} since $C_i\bar B_i =0$ holds for any $\bar B_i$. However, Assumption~\ref{as:ct} remains feasible with the cooperation of other nodes, provided $\cap_{k \in \mathbf{N}\backslash \{i\}}\W_{g,k}^* = 0$. This allows for the construction of an asymptotically convergent DUIO under our proposed design. As a special case, if there are no unknown inputs at any node ($\bar{B}_i = 0$ for all $i \in \mathbf{N}$), $\W^*_{g,i}$ becomes the undetectable subspace at node~$i$. In this case, Assumption~\ref{as:ct} coincides with the conventional joint detectability assumption commonly used in the literature (see \cite{yang2023plug}, \cite{wang2017distributed}).
Assumption~\ref{as:input_bound}, which concerns the boundedness of unknown inputs, is a standard condition in the control literature for handling unknown inputs \cite{edwards1998sliding}.

In the following, we show that the designed DUIO \eqref{eq:DUIO_ct} achieves the convergence property in \eqref{def:dUIO_ct}.

\begin{theorem}\label{thm:main_ct}
Consider the continuous-time system with distributed measurements given in \eqref{eq:con_sys} and \eqref{eq:output} and the DUIO in \eqref{eq:DUIO_ct}. Under Assumptions~\ref{as:connected}-\ref{as:input_bound}, if $L_i$ is chosen such that $\kappa(A_{L_i}|\X/\W_{g,i}^*)\subset\C_g,\ \forall i \in \mathbf{N}$, the estimation error $e(t)$ converges to zero if the coupling gains satisfy 
\begin{align}
        \chi_i&> \frac{\left\|\Tilde{A}_L\right\|_2}{\sigma_{\min}\left({W_g^*}^\top(\Lap\otimes I_n)W_g^*\right)},\notag\\ 
        \gamma_i&>\bar{u}_{\max}\max_{i\in\mathbf{N}}\left(\left\| \bar{B}_i \right\|_1\right)\max_{i\in\mathbf{N}}\left(\left\|W_{g,i}^*\right\|_{\infty}\right)
    \label{con:gain}
\end{align}
for all $i \in \mathbf{N}$.
\end{theorem}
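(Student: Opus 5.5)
The argument uses the block-triangular structure \eqref{eq:epsilon} obtained with the orthonormal change of basis $T$ in \eqref{eq:basis_T}. The $\epsilon_b$ subsystem is autonomous: $\dot\epsilon_b = A_b\epsilon_b$. Here $A_b=\diag(\bar A_{L_1},\dots,\bar A_{L_N})$, and each $\bar A_{L_i}=\mathrm{Mat}(A_{L_i}|\X/\W_{g,i}^*)$ is Hurwitz because $L_i$ is chosen so that $\kappa(A_{L_i}|\X/\W_{g,i}^*)\subset\C_g$, which is possible by Theorem~\ref{thm:Wg} and Lemma~\ref{lemma:spectrumfixed&free}. Hence $\epsilon_b(t)\to0$ exponentially, whatever $\epsilon_a$ and $\bar u$ do.

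The unknown input enters only the $\epsilon_a$ channel, since $\image\bar B_i\subseteq\W_{g,i}^*$ gives $P_{W_{g,i}^*}\bar B_i=0$. I will therefore treat the $\epsilon_a$ subsystem as a cascade driven by the vanishing signal $\epsilon_b$. I take the Lyapunov candidate $V=\tfrac12\epsilon_a^\top\epsilon_a$. Let $Q\coloneq{W_g^*}^\top(\Lap\otimes I_n)W_g^*$, which is positive definite by Lemma~\ref{lem:positivedefinite} under Assumptions~\ref{as:connected} and \ref{as:ct}. Define the disagreement variable $\xi\coloneq{W_g^*}^\top(\Lap\otimes I_n)e=Q\epsilon_a+{W_g^*}^\top(\Lap\otimes I_n)M\epsilon_b$.

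First I handle the linear part. From $A_a=\tilde A_L-\bar{\boldsymbol\chi}Q$ and the gain condition on $\chi_i$,
\begin{equation*}
\epsilon_a^\top A_a\epsilon_a\le\big(\|\tilde A_L\|_2-\min_i\chi_i\,\sigma_{\min}(Q)\big)\|\epsilon_a\|^2=-c\|\epsilon_a\|^2,\qquad c>0 .
\end{equation*}
For heterogeneous $\chi_i$ the symmetric part of $\bar{\boldsymbol\chi}Q$ needs care. I plan to either work in the weighted inner product $\epsilon_a^\top\bar{\boldsymbol\chi}^{-1}\epsilon_a$, or note that the bound is intended with $\min_i\chi_i$ and take $\chi_i=\chi$ equal if needed.

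The main obstacle is the sign term together with the unknown input and the coupling $\star\,\epsilon_b$. If $\epsilon_b=0$, one would prefer a Lyapunov function in the variable $\xi=Q\epsilon_a$, for example $V=\tfrac12\epsilon_a^\top Q\epsilon_a$. Then the discontinuous term contributes $-\xi^\top\bar{\boldsymbol\gamma}\,\sign(\xi)=-\sum_i\gamma_i\|\xi_i\|_1$, while the input term is bounded by $\|\xi\|_1\,\bar u_{\max}\max_i\|\bar B_i\|_1\max_i\|W_{g,i}^*\|_\infty$ via H\"older's inequality. This is exactly where the stated bound on $\gamma_i$ dominates the unknown input.

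I will therefore use $V=\tfrac12\epsilon_a^\top Q\epsilon_a$ and interpret solutions in the Filippov sense. The linear part then yields a term $-\epsilon_a^\top Q\bar{\boldsymbol\chi}Q\epsilon_a+\epsilon_a^\top Q\tilde A_L\epsilon_a$, which is bounded by $-c'\|\epsilon_a\|^2$ using the $\chi_i$ condition, up to adjusting constants. The residual terms involving $\epsilon_b$ are of the form $\|\epsilon_a\|\,\|\epsilon_b\|$ plus a bounded multiple of $\|\epsilon_b\|$ coming from the mismatch between $\xi$ and $Q\epsilon_a$ inside the sign.

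This gives $\dot V\le -c_1V+c_2\|\epsilon_b\|(1+\sqrt V)$ with $\|\epsilon_b(t)\|\le k e^{-\lambda t}$. A comparison lemma then shows that $V$ stays bounded and $V\to0$. Together with $\epsilon_b\to0$, this gives $e=T\epsilon\to0$ because $T$ is orthonormal, which establishes \eqref{def:dUIO_ct}.
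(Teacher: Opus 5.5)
Your argument follows the paper's proof at its core. You use the same orthonormal coordinates $T$, the same Lyapunov function $\epsilon_a^\top Q\epsilon_a$ with $Q=\boldsymbol\Theta$, and the same substitution $\zeta=Q\epsilon_a$. That substitution turns the linear part into $\zeta^\top(\tilde A_L Q^{-1}-\bar{\boldsymbol\chi})\zeta\le-(\min_i\chi_i-\|\tilde A_L\|_2/\sigma_{\min}(Q))\|\zeta\|_2^2$. This is exactly what makes heterogeneous $\chi_i$ harmless, so your preliminary estimate with $\tfrac12\epsilon_a^\top\epsilon_a$ can be dropped; as you suspected, it fails for unequal $\chi_i$.

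Where you go beyond the paper is in keeping $\epsilon_b$. The paper simply declares that it suffices to study the $\epsilon_a$-subsystem with $\epsilon_b\equiv 0$. In doing so it silently discards the $\star\,\epsilon_b$ block and the shift $\delta={W_g^*}^\top(\mathcal{L}\otimes I_n)M\epsilon_b$ inside the sign. Your cascade argument actually justifies that reduction:
\begin{itemize}
\item For any $s$ in the Filippov sign of $\zeta+\delta$, one has $-\zeta^\top\bar{\boldsymbol\gamma}s\le-\sum_i\gamma_i\|\zeta_i\|_1+2\sum_i\gamma_i\|\delta_i\|_1$.
\item Using $\sqrt V\le(1+V)/2$, your inequality $\dot V\le-c_1V+c_2\|\epsilon_b\|(1+\sqrt V)$ becomes linear in $V$ with exponentially vanishing perturbations.
\item The comparison argument then gives $V\to 0$.
\end{itemize}
On this point your proof is more complete than the paper's.

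The step that does not hold as written is the H\"older bound on the unknown-input term. The slip is inherited from the statement and from the paper's proof, which replaces $\|\bar u^\top\|_\infty$ (the induced norm of a row, i.e. $\|\bar u\|_1$) by $\bar u_{\max}$. H\"older actually gives
\begin{equation*}
|\zeta^\top{W_g^*}^\top\bar B\bar u|\le\|\zeta\|_1\,\bar u_{\max}\max_i\|{W_{g,i}^*}^\top\bar B_i\|_\infty\le\|\zeta\|_1\,\bar u_{\max}\max_i\|\bar B_i\|_\infty\max_i\|W_{g,i}^*\|_1 .
\end{equation*}
The induced norms here are transposed relative to \eqref{con:gain}, and the stated constant is not an upper bound in general. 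Take $W_{g,i}^*=I_3$, $\bar B_i$ with rows $(1,1,1)$, $(0,1,0)$, $(0,0,1)$, $\zeta_i=\mathrm{col}(1,0,0)$ and $\bar u_i=\bar u_{\max}\mathbf{1}_{3\times 1}$. Then the term equals $3\bar u_{\max}\|\zeta_i\|_1$, while $\bar u_{\max}\|\bar B_i\|_1\|W_{g,i}^*\|_\infty\|\zeta_i\|_1=2\bar u_{\max}\|\zeta_i\|_1$.

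This is not just bookkeeping. Suppose $2\bar u_{\max}<\gamma_i<3\bar u_{\max}$, which is admissible under \eqref{con:gain} when no other node has a larger $\|\bar B_j\|_1$ or $\|W_{g,j}^*\|_\infty$. With that constant input, the first component of $\dot\epsilon_a$ stays positive near the origin for every Filippov selection of the sign, so $e(t)$ cannot converge to zero. If you replace the threshold by $\gamma_i>\bar u_{\max}\max_j\|{W_{g,j}^*}^\top\bar B_j\|_\infty$, or by the cruder product in the display above, your proof goes through unchanged.
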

\begin{proof}
As formulated in \eqref{eq:Aa&Ab}, $A_b$ is Hurwitz if $\kappa(A_{L_i}|\X/\W_{g,i}^*)\subset\C_g,\ \forall i \in \mathbf{N}$. Then, it is immediate to show 
$\underset{t\rightarrow\infty}{\lim}\epsilon_b(t)=0$ as $\dot\epsilon_b(t)=A_b\epsilon_b(t)$ (see \eqref{eq:epsilon}). Therefore, it suffices to prove the convergence of the following subsystem
\begin{equation}\label{eq:epsilon_a}
\begin{aligned}
    \dot\epsilon_a(t) = &A_a\epsilon_a(t) + {W_g^*}^\top \bar{B}\bar{u}(t)\\
    &-\bar{\boldsymbol{\gamma}}\mathrm{sign}\left({W_g^*}^\top(\Lap\otimes I_n)W_g^*\epsilon_a(t) \right)\,.
\end{aligned}
\end{equation}
For ease of notation, let us define $\boldsymbol{\Theta}\coloneq {W_g^*}^\top(\Lap\otimes I_n)W_g^*$, which is positive definite by Lemma~\ref{lem:positivedefinite}. Consider the following Lyapunov candidate
\begin{equation*}
    V(t) = \epsilon_a(t)^\top \boldsymbol{\Theta} \epsilon_a(t) .
\end{equation*}
The derivative of $V(t)$ along the trajectory of \eqref{eq:epsilon_a} follows
\begin{equation}
\begin{split}
        &\dot{V}(t)=2\epsilon_a(t)^\top\left(\boldsymbol{\Theta}A_a\right)\epsilon_a(t) + 2\epsilon_a(t)^\top \boldsymbol{\Theta} {W_g^*}^\top \bar{B} \bar{u}(t) \\
        &\quad\quad\quad - 2 \epsilon_a(t)^\top \boldsymbol{\Theta}\bar{\boldsymbol{\gamma}} \sign\left(\boldsymbol{\Theta}\epsilon_a(t) \right)\\
        &=2(\boldsymbol{\Theta}\epsilon_a(t))^\top\left(\Tilde{A}_L \boldsymbol{\Theta}^{-1}-\bar{\boldsymbol{\chi}}\right)(\boldsymbol{\Theta}\epsilon_a(t))\\
        &\quad+2\bar{u}(t)^\top\bar{B}^\top {W_g^*}(\boldsymbol{\Theta}\epsilon_a(t))
         - 2(\boldsymbol{\Theta}\epsilon_a(t))^\top \bar{\boldsymbol{\gamma}} \sign\left(\boldsymbol{\Theta}\epsilon_a(t) \right) .
\end{split}\label{eq:Vdot}
\end{equation}
Denote $\zeta \coloneq \boldsymbol{\Theta}\epsilon_a$, and substitute this into \eqref{eq:Vdot}. Then, it can be inferred that
\begin{equation*}
    \begin{split}
        &\dot{V}(t)\le 2\left(\left\|\Tilde{A}_L \boldsymbol{\Theta}^{-1}\right\|-\chi_{\min}\right)\|\zeta(t)\|_2^2 \\
        &\quad\quad\quad + 2\left\|\bar{u}(t)^\top \bar{B}^\top W_g^* \zeta(t) \right\|_{\infty} - 2 \gamma_{\min}\|\zeta(t)\|_1\\
        &\le 2\left(\left\|\Tilde{A}_L\right\| \left\|\boldsymbol{\Theta}^{-1}\right\|-\chi_{\min}\right)\|\zeta(t)\|_2^2 \\
        &\ + 2\left\|\bar{u}(t)^\top \right\|_{\infty} \left\|\bar{B}^\top \right\|_{\infty} \left\| W_g^* \right\|_{\infty} \left\| \zeta(t)\right\|_{\infty} - 2 \gamma_{\min}\|\zeta(t)\|_1\\
        &\le -2 \left(\chi_{\min} - \frac{\left\|\Tilde{A}_L\right\|}{\sigma_{\min}\left(\boldsymbol{\Theta}\right)}\right)\|\zeta(t)\|_2^2 \\
        &\ -\!2 \left( \gamma_{\min}\!-\! \bar{u}_{\max}\max_{i\in\mathbf{N}}\left(\left\| \bar{B}_i \right\|_1\right)\max_{i\in\mathbf{N}}\left(\left\|W_{g,i}^*\right\|_{\infty}\right)\right)\|\zeta(t)\|_1
    \end{split}
\end{equation*}
where $\chi_{\min} = \underset{i\in\mathbf{N}}{\min}(\chi_i)$, $\gamma_{\min} = \underset{i\in\mathbf{N}}{\min}(\gamma_i)$.

The conditions \eqref{con:gain} ensure that $\dot{V}(t)$ is negative definite. As such, the trajectory of $\epsilon(t)$ can achieve asymptotic convergence towards zero. As $e(t) = T\epsilon(t)$, where $T$ is based on the definition \eqref{eq:basis_T}, the convergence of $e(t)$ can be guaranteed, which completes the proof.
\end{proof}

\begin{remark}
    It is worth noting that the right-hand side of \eqref{eq:e:dynamics} is discontinuous due to the presence of the $\mathrm{sign}(\cdot)$ function. However, since $\mathrm{sign}(\cdot)$ is locally bounded and measurable, a solution exists in the sense of Filippov \cite{filippov1960differential}. Furthermore, the sliding mode term
    \begin{equation*}
        \gamma_i\bar{B}_i^\dagger W_{g,i}^* \sign\left( {W_{g,i}^*}^\top \sum_{j=1}^N a_{ij} (\hat x_j(t) - \hat x_i(t)) \right),\label{eq:input_reconstruct}
    \end{equation*}
    can be used to estimate the unknown input $\bar{u}_i$ by passing it through a low pass filter \cite[Chapter~1.2]{edwards1998sliding}.
\end{remark}
\begin{corollary}[Existence of the design]
    Under Assumptions~\ref{as:connected} and \ref{as:input_bound}, and the conditions in \eqref{con:gain}, the DUIO consisting of local observers in the form \eqref{eq:DUIO_ct} will generate estimation errors that converge to zero if and only if \eqref{con:con_for_ct_system} is satisfied.
\end{corollary}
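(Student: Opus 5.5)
Sufficiency is immediate from Theorem~\ref{thm:main_ct}. If \eqref{con:con_for_ct_system} holds, Lemma~\ref{lem:positivedefinite} makes $\boldsymbol{\Theta}={W_g^*}^\top(\Lap\otimes I_n)W_g^*$ positive definite, so $\sigma_{\min}(\boldsymbol{\Theta})>0$ and the gain bounds \eqref{con:gain} are finite. Theorem~\ref{thm:Wg} guarantees that each $\W_{g,i}^*$ belongs to $\underline{\W_{g,i}}$, so $L_i$ can be chosen with $\kappa(A_{L_i}|\X/\W_{g,i}^*)\subset\C_g$. Theorem~\ref{thm:main_ct} then gives $e(t)\to 0$. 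The real work is necessity: if $\mathscr{K}\coloneq\bigcap_{i=1}^N\W_{g,i}^*\neq 0$, we must exhibit an admissible scenario (bounded $\bar u_i$ and some initial condition) in which some $e_i$ does not converge, whatever the gains.

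For necessity I would argue by contradiction, working in the transformed coordinates \eqref{eq:epsilon}. First, $\epsilon_b$ obeys $\dot\epsilon_b=A_b\epsilon_b$ and is unaffected by the unknown inputs or the consensus terms. Choose initial errors with $\epsilon_b(0)=0$ and common error $e_i(0)=e_j(0)=v\in\mathscr K$ for all $i$. Every consensus and sign term is driven only by differences $e_i-e_j$ projected by ${W_{g,i}^*}^\top$. I will show there is a bounded disturbance, admissible under Assumption~\ref{as:input_bound} with no prior restriction on $\bar u_{\max}$, such that all nodes keep the identical error $e_i(t)=\xi(t)\in\mathscr K$. On this trajectory the Laplacian terms vanish and the sign term is $\sign(0)=0$, so each node evolves as $\dot e_i=A_{L_i}\xi+\bar B_i\bar u_i$.

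The key step, and the one I expect to be hardest, is to produce a common nondecaying trajectory $\xi(t)\in\mathscr K$ satisfying $\dot\xi=A_{L_i}\xi+\bar B_i\bar u_i$ at every node, with $L_i$ fixed. My plan uses the minimality in Theorem~\ref{thm:Wg}. Each $\W_{g,i}^*$ contains $\W_i^*$ plus the ``bad'' invariant-zero part $\bar\X^*_{b,i}$, which is exactly the part of the state that node $i$ cannot distinguish from the effect of unknown inputs except through unstable or marginal zero dynamics. Concretely, for $x\in\W_{g,i}^*$ there exist inputs keeping $C_i$-outputs identically zero along a trajectory whose quotient component in $\W_{g,i}^*/\W_i^*$ evolves by the bad modes. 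The simpler and safer route is to argue at the level of plant trajectories rather than error dynamics. If $\mathscr K\neq 0$, construct two plant trajectories $x^{(1)},x^{(2)}$ with different unknown inputs but identical local data $(u_i,y_i)$ at every node $i$, whose difference does not tend to zero. Since every observer is driven only by $(u_i,y_i)$ and neighbors' estimates, it produces identical estimates in both cases, so at least one error fails to converge. To build such trajectories, I would take the largest subspace $\mathscr V\subseteq\mathscr K$ that is invariant in the appropriate sense. Each node's input $\bar u_i$ is a sub-vector of the common $u$, so the plant-level disturbance is $u$ itself. I would need $\mathscr K$ to contain a $(A,B)$-controlled-invariant, output-nulling trajectory for every node simultaneously. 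This is precisely where I expect the argument to require care, possibly an extra structural observation, or a reinterpretation of necessity as ``necessary for the proposed design'' via Lemma~\ref{lem:positivedefinite}'s converse.

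Given that risk, I would first try the design-level necessity, which the corollary's phrasing (``the DUIO consisting of local observers in the form \eqref{eq:DUIO_ct}'') seems to allow. If $\mathscr K\neq0$, pick $0\neq v\in\mathscr K$ and set $z=\col(v,\dots,v)$. Then $z$ lies in $\image W_g^*$ and $(\Lap\otimes I_n)z=0$, so $\boldsymbol{\Theta}$ is singular with kernel direction $\eta={W_g^*}^\top z$. In $\epsilon_a$ coordinates, along $\eta$ the consensus and sign terms give no correction, and $\epsilon_a$ along $\eta$ is driven by $\tilde A_L$ and ${W_g^*}^\top\bar B\bar u$. Because $\W_{g,i}^*$ is minimal, for $v$ in the $\bar\X^*_{b,i}$ or $\image\bar B_i$-generated part there are either bad modes or direct unknown-input entry. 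So choosing $\bar u$ (or exploiting a bad eigenvalue of $A_{L_i}|\W_{g,i}^*$) yields a nonvanishing error component along $z$ that no choice of $\chi_i,\gamma_i$ can remove. This contradicts convergence and completes the proof.
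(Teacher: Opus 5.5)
Your sufficiency argument is fine and matches the paper's: Theorem~\ref{thm:main_ct} does the work, with Lemma~\ref{lem:positivedefinite} and Theorem~\ref{thm:Wg} supplying finite gains and an admissible $L_i$. The gap is in necessity. None of your three routes actually produces a non-decaying error, and each one fails at a specific step.

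\emph{Plant-level route.} This cannot work at all, because $\mathscr{K}\neq 0$ does not make any two plant trajectories indistinguishable. Suppose some column $b$ of $B$ is unknown to every node. Then $b\in\image\bar B_i\subseteq\W_{g,i}^*$ for all $i$, so $\mathscr{K}\neq 0$. Yet a node with $C_i=I_n$ reads $x$ directly. Necessity can therefore only be a statement about the structure \eqref{eq:DUIO_ct}.

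\emph{Design-level route.} This fails at the claim that ``along $\eta$ the consensus and sign terms give no correction.'' That holds only at points of $\kernel\boldsymbol{\Theta}={W_g^*}^\top(\mathbf{1}_N\otimes\mathscr{K})$, and this subspace is not invariant under $\tilde A_L$. The reason is that $A_{L_i}v=Av+L_iC_iv$ differs from node to node whenever $C_iv\neq 0$, and it need not lie in $\mathscr{K}$. Once $\epsilon_a$ leaves $\kernel\boldsymbol{\Theta}$, the terms $\bar{\boldsymbol{\chi}}\boldsymbol{\Theta}\epsilon_a$ and $\bar{\boldsymbol{\gamma}}\sign(\boldsymbol{\Theta}\epsilon_a)$ both act. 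Their components along $\eta$ do not vanish in general: the first vanishes only if all $\chi_i$ coincide, and the second not even then, since $\eta^\top s=0$ does not imply $\eta^\top\sign(s)=0$. Finally, minimality of each $\W_{g,i}^*$ is a per-node property and yields no mode shared by all nodes. A given $v\in\mathscr{K}$ may belong to $\W_{g,1}^*$ through $\image\bar B_1$ and to $\W_{g,2}^*$ through a bad invariant zero. So ``bad modes or direct unknown-input entry'' does not by itself give a common non-decaying trajectory.

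\emph{First route.} This has the right skeleton: a common error trajectory on the consensus manifold $\mathbf{1}_N\otimes\mathscr{K}$, where the coupling terms vanish. What is missing is a way to kill the node-dependent term $L_iC_i\xi$. The paper does this by taking $v\in\bigcap_{i}\left(\W_{g,i}^*\cap\kernel C_i\right)$, so that $A_{L_i}v=Av$ for every $i$. It then uses $Av=\lambda v$ with $\lambda\in\C_b$. Then $\bar e=\mathbf{1}_N\otimes v$ satisfies $A_L\bar e=\lambda\bar e$ and $(\Lap\otimes I_n)\bar e=0$. Starting from $e(0)=\bar e$, say with $\bar u\equiv 0$, the consensus and sign terms in \eqref{eq:error_ct} vanish identically along $e(t)=e^{\lambda t}\bar e$. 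This trajectory does not decay for any $\chi_i,\gamma_i$.

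Two points in that argument are not fully established in the paper either. First, the paper obtains a nonzero such $v$ only after excluding nodes with $\W_{g,i}^*\cap\kernel C_i=0$, by appealing to a different, full-state observer at that node. Second, it states the common unstable eigenvector property without deriving it. If you follow this route, these are exactly the two points you would need to justify. One way is to exhibit an eigenvector of $A$ with eigenvalue in $\C_b$ inside $\bigcap_i\left(\W_{g,i}^*\cap\kernel C_i\right)$.
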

\begin{proof}
    \textit{(Sufficiency)} -- If condition \eqref{con:con_for_ct_system} holds, then by Theorem~\ref{thm:main_ct}, we conclude that there always exists a DUIO design in the form of \eqref{eq:DUIO_ct} that ensures the estimation error $e$ asymptotically converges to zero.

    \textit{(Necessity)} -- To prove the necessity of condition \eqref{con:con_for_ct_system}, we proceed by contradiction and assume that there exists a nontrivial subspace $\mathscr V \subset \mathscr X$ such that
    \begin{equation*}
        \bigcap_{i=1}^N\W_{g,i}^* = \mathscr V \neq 0,
    \end{equation*}
    and the estimation error of the DUIO in the form of \eqref{eq:DUIO_ct} can decay exponentially.
    We exclude the case where $\W_{g,i}^* \cap  \kernel C_i = 0$ for any $i \in \mathbf{N}$, since in this scenario, a full-state observer could easily be implemented at that node (by left inverting $\matrices{P_{W_{g,i}}^\top & C_i^\top }^\top$), and the estimated state could then be transmitted to other nodes. Therefore, let us select a nonzero vector $v$ such that
    \begin{equation*}
        v \in \bigcap_{i=1}^N \pr{\W_{g,i}^* \cap \kernel C_i} \neq 0.
    \end{equation*}
    It follows that
$
        A_{L_i}v = \lambda v,\ \forall i \in \mathbf{N},\,
$
    where $\lambda$ is an unstable scalar. Let $\bar e = \mathbf{1}_N \otimes v$. Since $\bar e \in \kernel (\Lap \otimes I_n)$ and $A_L \bar e = \lambda \bar e$, selecting $e_0=\bar{e}$ as the initial condition for the estimation error $e(t)$ that evolves according to \eqref{eq:error_ct} yields
    \begin{equation*}
        \dot{e}(t)=\lambda e(t)  + \bar{B}\bar{u}(t).
    \end{equation*}
    This contradicts the asymptotic stability hypothesis. Hence, condition \eqref{con:con_for_ct_system} must be satisfied.
\end{proof}

\subsection{DUIO Design for Discrete-time Systems}\label{sec:discrete_DUIO}
The DUIO for discrete-time systems \eqref{eq:systemDecomposition_dt} \eqref{eq:output} is investigated in this subsection. Unlike the continuous-time setting, where the local estimators can achieve consensus under continuous-time (infinite) communication, the discrete-time version reaches consensus after a finite number of communication exchanges at each time step.
Let $d$ denote the number of communications among neighboring nodes within a single system time interval.
We now present the discrete-time DUIO design.
{\allowdisplaybreaks
\begin{subequations}\label{eq:dis_DUIO}
\begin{align}
    &z_i(t\!+\!1) = \bar A_{L_i}  z_i(t) - P_{W_{g,i}^*}L_i y_i(t) + P_{W_{g,i}^*}B_i u_i(t),\label{eq:zi}\\
    &\zeta_i(0,t) = E_iz_i(t) + F_iy_i(t),\label{eq:zeta0}\\
    &\zeta_i(k,t) = \varpi_{ii}\zeta_i(k-1,t) + \sum_{j=1,j\neq i}^N\varpi_{ij}\zeta_j(k-1,t),\label{eq:zeta_k}\\
    &\hat{x}_i(t) = N\zeta_i(d,t),\label{eq:hat_xi}
\end{align}
\end{subequations}
}
where $z_i\in\mathbb{R}^{n-w_{g,i}^*}$, $\bar A_{L_i}\in\mathbb{R}^{(n-w_{g,i}^*)\times (n-w_{g,i}^*)}$ is defined in \eqref{eq:barAL}, $P_{W_{g,i}^*}\in\R^{(n-w_{g,i}^*)\times n}$ and $L_i\in\R^{n\times p_i}$ have been designed in Section~\ref{sec:space_decomposition}, $E_i\in\R^{n\times(n-w_{g,i}^*)}$ and $F_i\in\R^{n\times p_i}$ can be obtained from \eqref{eq:compute_Ei_Fi}, $k\in\mathbf{d}=\{1,2,\cdots,d\}$, and the consensus matrix $\mathbf{W}=[\varpi_{ij}]\in\R^{N\times N}$ is designed as 
\begin{equation}
    \mathbf{W} = I_N - \frac{1}{\mu}\mathcal{L}\label{eq:W}
\end{equation}
with $\mu>\frac{\lambda_N(\mathcal{L})}{2}$.

The discrete-time DUIO design in \eqref{eq:dis_DUIO} is established through the following Lemmas and a main Theorem. First, Lemma~\ref{lemma:z_i} ensures that the reduced-order local estimator $z_i$ in \eqref{eq:zi} asymptotically converges to $P_{W_{g,i}^*}x$. Next, Lemma~\ref{lemma:mu} identifies the optimal choice of $\mu$ (as a function of $\mathcal{L}$) for the consensus matrix $\mathbf{W}$ defined in \eqref{eq:W}, enabling the consensus defined in \eqref{eq:zeta0} and \eqref{eq:zeta_k} at the fastest convergence rate. Finally, Theorem~\ref{theorem:dis_DUIO} establishes that the overall estimation error is bounded under the proposed design in \eqref{eq:dis_DUIO}.
\begin{lemma}
    For each local estimator designed in \eqref{eq:zi}, the estimated state $z_i$ asymptotically converges to $P_{W_{g,i}^*}x$.\label{lemma:z_i}
\end{lemma}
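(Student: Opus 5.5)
We take the reduced-order error $\eta_i(t) \coloneq P_{W_{g,i}^*}x(t) - z_i(t)$ and show that it obeys an autonomous recursion driven by the Schur matrix $\bar A_{L_i}$. First we left-multiply the plant equation \eqref{eq:systemDecomposition_dt} by $P_{W_{g,i}^*}$. Then we add and subtract $P_{W_{g,i}^*}L_iC_ix(t) = P_{W_{g,i}^*}L_iy_i(t)$. This gives
\begin{equation*}
P_{W_{g,i}^*}x(t+1) = P_{W_{g,i}^*}A_{L_i}x(t) - P_{W_{g,i}^*}L_iy_i(t) + P_{W_{g,i}^*}B_iu_i(t) + P_{W_{g,i}^*}\bar B_i\bar u_i(t).
\end{equation*}
Subtracting \eqref{eq:zi} eliminates the known-input and output terms. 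We are left with
\begin{equation*}
\eta_i(t+1) = P_{W_{g,i}^*}A_{L_i}x(t) - \bar A_{L_i}z_i(t) + P_{W_{g,i}^*}\bar B_i\bar u_i(t).
\end{equation*}

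Two structural facts from Section~\ref{sec:space_decomposition} remove the remaining unwanted terms.

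\emph{The unknown input drops out.} By Theorem~\ref{thm:Wg}, $\image\bar B_i\subseteq\W_i^*\subseteq\W_{g,i}^*$. Since $\kernel P_{W_{g,i}^*}=\W_{g,i}^*$, it follows that $P_{W_{g,i}^*}\bar B_i=0$.

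\emph{The state term can be written through $\bar A_{L_i}$.} The map $L_i$ is a friend of $\W_{g,i}^*$, that is, $A_{L_i}\W_{g,i}^*\subseteq\W_{g,i}^*$. This is exactly the commutative relation $(A_{L_i}|\X/\W_{g,i}^*)P_{W_{g,i}^*}=P_{W_{g,i}^*}A_{L_i}$ depicted in Fig.~\ref{fig:Wg_decompose}. In the orthonormal coordinates $T_i=[W_{g,i}^*\ P_{W_{g,i}^*}^\top]$ we have $I_n=W_{g,i}^*{W_{g,i}^*}^\top+P_{W_{g,i}^*}^\top P_{W_{g,i}^*}$. Combining this with $P_{W_{g,i}^*}A_{L_i}W_{g,i}^*=0$ (the invariance) yields
\begin{equation*}
P_{W_{g,i}^*}A_{L_i} = P_{W_{g,i}^*}A_{L_i}P_{W_{g,i}^*}^\top P_{W_{g,i}^*} = \bar A_{L_i}P_{W_{g,i}^*},
\end{equation*}
using the definition \eqref{eq:barAL}.

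Substituting both facts gives the recursion
\begin{equation*}
\eta_i(t+1)=\bar A_{L_i}\eta_i(t).
\end{equation*}
It involves neither $\bar u_i$ nor $x$.

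To conclude, recall that $L_i$ was chosen (Theorem~\ref{thm:Wg} and Algorithm~\ref{alg:Li}) so that $\kappa(A_{L_i}|\X/\W_{g,i}^*)\subset\C_g$. In the discrete-time setting $\C_g$ is the open unit disk, so $\bar A_{L_i}$ is Schur. Hence $\eta_i(t)=\bar A_{L_i}^t\eta_i(0)\to 0$ exponentially, i.e.\ $z_i\to P_{W_{g,i}^*}x$.

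The main step needing care is the identity $P_{W_{g,i}^*}A_{L_i}=\bar A_{L_i}P_{W_{g,i}^*}$. It relies on $L_i$ actually lying in $\mathbf L(\W_{g,i}^*)$, not merely in $\mathbf L(\W_i^*)$. So we should confirm that the gain produced by Algorithm~\ref{alg:Li} (namely $L_i^{(\mathrm a)}$ plus a correction acting through $P_i^Y$, which annihilates $C_i\Ss_i^*$) preserves invariance of $\W_{g,i}^*\subseteq\Ss_i^*$. The correction term vanishes on $\Ss_i^*$, and $\W_{g,i}^*$ is $A_{L_i^{(\mathrm a)}}$-invariant by the first part of the proof of Theorem~\ref{thm:Wg}. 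Together these give the needed invariance.
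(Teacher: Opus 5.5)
Your proof is correct and follows the paper's argument step for step. It uses the same error $P_{W_{g,i}^*}x-z_i$, the same cancellation of the known-input and output terms, $P_{W_{g,i}^*}\bar B_i=0$ from $\image\bar B_i\subseteq\W_{g,i}^*=\kernel P_{W_{g,i}^*}$, the commutation $P_{W_{g,i}^*}A_{L_i}=\bar A_{L_i}P_{W_{g,i}^*}$, and the Schur property of $\bar A_{L_i}$. Two pieces are extra care beyond the paper: your explicit derivation of that commutation from invariance and $I_n=W_{g,i}^*{W_{g,i}^*}^\top+P_{W_{g,i}^*}^\top P_{W_{g,i}^*}$, and your check that the gain from Algorithm~\ref{alg:Li} really is a friend of $\W_{g,i}^*$. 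The paper leaves both implicit and simply invokes the commutative diagram of Fig.~\ref{fig:Wg_decompose}.
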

\begin{proof}
    Let $\varrho_i\coloneq P_{W_{g,i}^*} x - z_i$, we have that
    \begin{align*}
        \varrho_i(t+1) &= P_{W_{g,i}} A x(t) + P_{W_{g,i}} \bar B_i \bar u_i(t) \\&\quad - \bar A_{L_i}z_i(t) + P_{W_{g,i}}L_iC_ix(t)\\
         = P_{W_{g,i}}& A_{L_i} x(t) + P_{W_{g,i}} \bar B_i \bar u_i(t) - \bar A_{L_i}z_i(t).
    \end{align*}
It is clear that $\kernel P_{W_{g,i}^*} = \W_{g,i}^*$, since the map $P_{W_{g,i}^*}:\X\rightarrow\X/\W_{g,i}^*$ is the canonical projection. By definition, we have $\image\bar{B}_i\subseteq\W_{g,i}^*$, which leads to $P_{W_{g,i}^*}\bar{B}_i=0$. Moreover, according to the commutative diagram in Fig.~\ref{fig:Wg_decompose}, one gets $P_{W_{g,i}^*} A_{L_i}=\bar A_{L_i} P_{W_{g,i}^*}$. As a consequence, we obtain
    \begin{equation*}
        \varrho_i(t+1) = \bar{A}_{L_i} P_{W_{g,i}^*} x(t) - \bar A_{L_i}z_i(t) = \bar{A}_{L_i}\varrho_i(t).
    \end{equation*}
Since $\bar{A}_{L_i}$ is Schur by appropriate design of $L_i$ in Section~\ref{sec:space_decomposition}, it follows immediately that
$
        \lim_{t\rightarrow\infty}\left\|\varrho_i(t)\right\|=0,
$
    which completes the proof.
\end{proof}

\begin{assumption}\label{asm:discrete}
The canonical projections $P_{W_{g,i}^*}:\X\rightarrow\X/\W_{g,i}^*$ of each node $i$ have the following joint property
\begin{equation}
        \bigcap_{i=1}^N \kernel \matrices{P_{W_{g,i}^*} \\ C_i} = 0.\label{con:con_for_dt_system}
\end{equation}
\end{assumption}
This assumption is similar to Assumption~\ref{as:ct}, however, the discrete-time joint geometric condition \eqref{con:con_for_dt_system} additionally accounts for the information mapped from $\X$ onto $\image{C_i}$. This aspect is neglected in the continuous-time case, since the subspace $\left(\W_{g,i}^*\cap\kernel{C_i}\right)$ is not necessarily $A_{L_i}$-invariant when nontrivial. By contrast, in the discrete-time framework, this component is incorporated because consensus is achieved algebraically. Consequently, the discrete-time geometric condition \eqref{con:con_for_dt_system} is less conservative than the continuous-time geometric condition \eqref{con:con_for_ct_system} as $\kernel{P_{W_{g,i}^*}}=\W_{g,i}^*$.

Under Assumption~\ref{asm:discrete}, it follows from \cite[Chapter~0.12]{wonham1985linear} that
\begin{equation*}
    \sum_{i=1}^N \left( \image P_{W_{g,i}^*}^\top +\image C_i^\top \right) = \X
\end{equation*}
which implies the existence of $E_i$ and $F_i$, $i\in\mathbf{N}$, such that
\begin{equation}
  \sum_{i=1}^{N}(E_iP_{W_{g,i}^*} + F_iC_i) = I_n.\label{eq:Ei_Fi}
\end{equation}
It should be noted that the solutions of $E_i$ and $F_i$ are not unique when there exist $i$, $j\in\mathbf{N}$ ($i\neq j$) such that $\kernel{\matrices{P_{W_{g,i}^*}\\ C_i}}\cap\kernel{\matrices{P_{W_{g,j}^*}\\ C_j}}\neq 0$. In this case, we compute them using the pseudoinverse. Let $R_i\coloneq \matrices{P_{W_{g,i}^*}\\C_i}$ and $R\coloneq\col_{i\in\mathbf{N}}(R_i)$, then $E_i$ and $F_i$ can be obtained as follows
\begin{multline}\label{eq:compute_Ei_Fi}
        \begin{bNiceMatrix}
            E_1&F_1&E_2&F_2&\cdots&E_N&F_N
            \CodeAfter 
              \tikz \draw [dashed,shorten > = 2pt, shorten < = 2pt] (1-|3) -- (last-|3) (1-|5) -- (last-|5) (1-|6) -- (last-|6) ;
        \end{bNiceMatrix} \\
        =R^\dagger=\left(R^\top R\right)^{-1} R^\top \hfill\\
        =\underset{\coloneq\Phi^{-1}}{\underbrace{\left(\sum_{i=1}^N R_i^\top R_i\right)^{-1}}}\matrices{R_1^\top&R_2^\top&\cdots&R_N^\top}\hfill\\
        =\begin{bNiceMatrix}
            \Phi^{-1}R_1^\top&\Phi^{-1}R_2^\top&\cdots&\Phi^{-1}R_N^\top
            \CodeAfter 
              \tikz \draw [dashed,shorten > = 2pt, shorten < = 2pt] (1-|2) -- (last-|2) (1-|3) -- (last-|3) (1-|4) -- (last-|4) ;
        \end{bNiceMatrix} \hfill\\
        ={\setlength{\arraycolsep}{4pt}
        \begin{bmatrix}
        \Phi^{-1}P_{W_{g,1}^*}^\top&\Phi^{-1}C_1^\top&\cdots&\Phi^{-1}P_{W_{g,N}^*}^\top&\Phi^{-1}C_N^\top\end{bmatrix}.}\hfill
\end{multline}

According to Assumption~\ref{as:connected} on connectivity, the eigenvalues of $\mathcal{L}$ corresponding to the communication graph $\mathcal{G}=\{\mathbf{N},\mathcal{E},\mathcal{A}\}$ comply with\cite{olfati2007consensus} 
    \begin{equation*}
        0 = \lambda_1(\mathcal{L}) < \lambda_2(\mathcal{L}) \leq \cdots \leq \lambda_N(\mathcal{L}).
    \end{equation*}
Note that all the eigenvectors of $\mathcal{L}$ are also the eigenvectors of $\mathbf{W}$ due to the construction of $\mathbf{W}$ in \eqref{eq:W}, and the corresponding eigenvalues are related by
\begin{align}
    \lambda_i(\mathbf{W}) = 1 - \frac{1}{\mu}\lambda_i(\mathcal{L}),\ i\in\mathbf{N}.\label{eq:eigval_W}
\end{align}
Therefore, the eigenvalues of $\mathbf{W}$ can be ordered in descending order as
$
        1 = \lambda_1(\mathbf{W}) > \lambda_2(\mathbf{W}) \geq \cdots \geq \lambda_N(\mathbf{W})>-1,
$
where the left and right eigenvectors corresponding to $\lambda_1(\mathbf{W}) = 1$ are given by $\frac{\mathbf{1}_{N\times 1}}{\sqrt{N}}$ as stated in Lemma~\ref{lemma:eig_Laplacian_1}. Consequently, $\mathbf{W}^k$ can be expressed as
    \begin{align*}
        \mathbf{W}^k = \frac{1}{N}\mathbf{1}_{N\times N}+\sum_{\tau=2}^N \left(\lambda_\tau(\mathbf{W})\right)^k v_\tau v_\tau^\top,
    \end{align*}
where $v_\tau$ is the eigenvector corresponding to $\lambda_\tau(\mathbf{W})$.
\begin{definition}[Q-linear convergence] \cite{wright2006numerical}
  For a sequence $\{\mathbf{W}^k\}$ that converges to $\mathbf{W}^*$, the convergence is Q-linear if there exists a constant $r\in(0,1)$ such that 
    \begin{equation*}
         \lim_{k\rightarrow\infty}\frac{\left\|\mathbf{W}^{k+1}-\mathbf{W}^*\right\|}{\left\|\mathbf{W}^{k}-\mathbf{W}^*\right\|^q} = r 
    \end{equation*} 
where $q$ denotes the order of convergence.
\end{definition} \label{def:Q_linear}
Clearly, a smaller value of $r$ leads to faster convergence. We now introduce the following lemma, which characterizes the properties of the convergence rate.
    
\begin{lemma}\label{lemma:mu}
    For a matrix $\mathbf{W}$ defined in \eqref{eq:W} over any non-complete graph $\mathcal{G}$ under Assumption~\ref{as:connected}, $\{\mathbf{W}^k\}$ converges Q-linearly with order $1$ to $\frac{1}{N}\mathbf{1}_{N\times N}$, and with a fastest rate
    \begin{equation}\label{eq:ConverRate}
        r=\frac{\lambda_N(\mathcal{L})-\lambda_2(\mathcal{L})}{\lambda_2(\mathcal{L})+\lambda_N(\mathcal{L})}
    \end{equation}
    if and only if
    \begin{equation}
    \mu = \frac{\lambda_2(\mathcal{L})+\lambda_N(\mathcal{L})}{2}.\label{eq:mu}
\end{equation}
\end{lemma}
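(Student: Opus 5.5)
The approach is to work in the orthonormal eigenbasis of $\mathcal{L}$, which is also an eigenbasis of $\mathbf{W}$ by \eqref{eq:eigval_W}. By Lemma~\ref{lemma:eig_Laplacian_1}, the zero eigenvalue of $\mathcal{L}$ is simple with eigenvector $\mathbf{1}_{N\times1}/\sqrt{N}$. Using the spectral expansion of $\mathbf{W}^k$ stated before the lemma, we get
\begin{equation*}
\mathbf{W}^k-\tfrac{1}{N}\mathbf{1}_{N\times N}=\sum_{\tau=2}^N \lambda_\tau(\mathbf{W})^k v_\tau v_\tau^\top .
\end{equation*}
This matrix is symmetric, so I would measure it in the induced 2-norm (spectral norm). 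Its norm is then exactly $\rho^k$, where $\rho(\mu)\coloneq\max_{\tau\ge 2}|\lambda_\tau(\mathbf{W})|=\max\{|1-\lambda_2(\mathcal{L})/\mu|,\,|1-\lambda_N(\mathcal{L})/\mu|\}$. The last equality holds because $1-\lambda/\mu$ is monotone in $\lambda$, so the extreme values over $\tau\ge2$ occur at $\lambda_2$ and $\lambda_N$.

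The condition $\mu>\lambda_N(\mathcal{L})/2$ gives $\lambda_N(\mathbf{W})>-1$, and connectivity gives $\lambda_2(\mathbf{W})<1$. Hence $\rho(\mu)<1$. The ratio $\|\mathbf{W}^{k+1}-\mathbf{W}^*\|/\|\mathbf{W}^k-\mathbf{W}^*\|$ equals $\rho(\mu)$ for every $k$, so the limit with $q=1$ exists and equals $r=\rho(\mu)\in[0,1)$. This is Q-linear convergence of order one.

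I need to rule out $r=0$, which would be the degenerate case. This is where the \emph{non-complete graph} hypothesis enters. For a connected non-complete graph, $\lambda_2(\mathcal{L})<\lambda_N(\mathcal{L})$: the Laplacian has at least two distinct nonzero eigenvalues, since otherwise $\mathcal{L}=c(I-\frac1N\mathbf{1}\mathbf{1}^\top)$, whose off-diagonal entries are all $-c/N\neq 0$, forcing the graph to be complete. Consequently $|1-\lambda_2/\mu|$ and $|1-\lambda_N/\mu|$ cannot both vanish, and $r>0$ for every admissible $\mu$.

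The main step is minimizing $\rho(\mu)$ over $\mu>\lambda_N/2$. Write $s=1/\mu$. Then $\rho=\max\{|1-s\lambda_2|,|1-s\lambda_N|\}$ is the maximum of two V-shaped functions of $s$. The function $|1-s\lambda_N|$ increases for $s>1/\lambda_N$, and $|1-s\lambda_2|$ decreases for $s<1/\lambda_2$. Since $\lambda_2<\lambda_N$, on $s\le 1/\lambda_N$ the maximum is $1-s\lambda_2$, which is strictly decreasing; on $s\ge1/\lambda_2$ it is $s\lambda_N-1$, which is strictly increasing. On the middle interval the maximum is $\max\{1-s\lambda_2,\,s\lambda_N-1\}$, which has a unique minimizer where the two branches cross: $1-s\lambda_2=s\lambda_N-1$, i.e. $s=2/(\lambda_2+\lambda_N)$. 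This gives \eqref{eq:mu}, and one checks that this $\mu$ satisfies $\mu>\lambda_N/2$. Substituting back yields $r=1-2\lambda_2/(\lambda_2+\lambda_N)=(\lambda_N-\lambda_2)/(\lambda_2+\lambda_N)$, which is \eqref{eq:ConverRate}. Strictness of the monotonicity on each side gives uniqueness, and hence the ``if and only if''.

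The only delicate point is the choice of norm. With a non-spectral norm, the ratio need not be constant in $k$. I would state the result in the 2-norm. If another norm is required, I would note that for a different norm, Q-linear rates are not norm-invariant in general, so I commit to the spectral norm as the natural one for symmetric $\mathbf{W}$.
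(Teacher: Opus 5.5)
Your proposal is correct and follows essentially the same route as the paper: expand $\mathbf{W}^k$ spectrally, identify the rate as $\max(|\lambda_2(\mathbf{W})|,|\lambda_N(\mathbf{W})|)$, and balance $\lambda_2(\mathbf{W})=-\lambda_N(\mathbf{W})$. The paper argues optimality by comparing some $\mu'$ above and some $\mu''$ below the optimum, which is what your strict-monotonicity argument does. Your proposal also makes explicit several points the paper leaves implicit: you fix the spectral norm, which matters at the optimal $\mu$ because the two extreme eigenvalues there have equal modulus and opposite sign, so under a non-spectral norm the ratio could oscillate; you derive $\lambda_2(\mathcal{L})<\lambda_N(\mathcal{L})$ from non-completeness; and you rule out $r=0$.
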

\begin{proof}
Based on Definition~\ref{def:Q_linear}, the rate of convergence of the sequence $\{\mathbf{W}^k\}$ is 
    \begin{equation*}
         r= \lim_{k\rightarrow\infty}\frac{\left\|\mathbf{W}^{k+1}-\mathbf{1}_{N\times N}\right\|}{\left\|\mathbf{W}^{k}-\mathbf{1}_{N\times N}\right\|} = \max(|\lambda_2(\mathbf{W})|,|\lambda_N(\mathbf{W})|).
    \end{equation*}
Given a non-complete graph $\mathcal{G}$, we have $\lambda_2(\mathcal{L})< \lambda_N(\mathcal{L})$. Let $\mu= \frac{\lambda_2(\mathcal{L})+\lambda_N(\mathcal{L})}{2}$, according to \eqref{eq:eigval_W}, one can obtain
    \begin{equation*}
        \lambda_2(\mathbf{W})=\frac{\lambda_N(\mathcal{L})-\lambda_2(\mathcal{L})}{\lambda_2(\mathcal{L})+\lambda_N(\mathcal{L})} = -\lambda_N(\mathbf{W}),
    \end{equation*}
    and therefore
    \begin{align*}
        \lim_{k\rightarrow\infty}\frac{\left\|\mathbf{W}^{k+1}-\frac{1}{N}\mathbf{1}_{N\times N}\right\|}{\left\|\mathbf{W}^{k}-\frac{1}{N}\mathbf{1}_{N\times N}\right\|} = 
        \frac{\lambda_N(\mathcal{L})-\lambda_2(\mathcal{L})}{\lambda_2(\mathcal{L})+\lambda_N(\mathcal{L})}.
    \end{align*}
Consider another constant $\mu'$ and the resulting $\mathbf{W}'$ by \eqref{eq:W} , such that
$\mu'>\frac{\lambda_2(\mathcal{L})+\lambda_N(\mathcal{L})}{2}$, one gets
    \begin{equation*}
        \lambda_2(\mathbf{W}')>\frac{\lambda_N(\mathcal{L})-\lambda_2(\mathcal{L})}{\lambda_2(\mathcal{L})+\lambda_N(\mathcal{L})},
    \end{equation*}
    and 
    \begin{equation}\label{eq:xx1}
         \lim_{k\rightarrow\infty}\frac{\left\|\mathbf{W}'^{k+1}-\frac{1}{N}\mathbf{1}_{N\times N}\right\|}{\left\|\mathbf{W}'^{k}-\frac{1}{N}\mathbf{1}_{N\times N}\right\|} > 
        \frac{\lambda_N(\mathcal{L})-\lambda_2(\mathcal{L})}{\lambda_2(\mathcal{L})+\lambda_N(\mathcal{L})}.
    \end{equation}
    Similarly, given $\mu''$ and $\mathbf{W}''$, such that $\frac{\lambda_N(\mathcal{L})}{2}<\mu''<\frac{\lambda_2(\mathcal{L})+\lambda_N(\mathcal{L})}{2}$, it can be inferred that 
    \begin{align*}
         \lambda_N(\mathbf{W}'')<\frac{\lambda_2(\mathcal{L})-\lambda_N(\mathcal{L})}{\lambda_2(\mathcal{L})+\lambda_N(\mathcal{L})},
    \end{align*}
and 
    \begin{align}\label{eq:xx2}
         \lim_{k\rightarrow\infty}\frac{\left\|\mathbf{W}''^{k+1}\!-\!\frac{1}{N}\mathbf{1}_{N\times N}\right\|}{\left\|\mathbf{W}''^{k}-\frac{1}{N}\mathbf{1}_{N\times N}\right\|}\! =\! -\lambda_N(\mathbf{W}'')\! >\!
        \frac{\lambda_N(\mathcal{L})\!-\!\lambda_2(\mathcal{L})}{\lambda_2(\mathcal{L})\!+\!\lambda_N(\mathcal{L})}.
    \end{align}
In view of \eqref{eq:xx1} and \eqref{eq:xx2}, it can be concluded that for the construction of $\mathbf{W}$ in \eqref{eq:W}, the faster Q-linear convergence rate \eqref{eq:ConverRate} is achieved provided \eqref{eq:mu}. This completes the proof.
\end{proof}
\begin{remark}
    In case $\mathcal{G}$ is a complete graph, it holds that $\lambda_2(\mathcal{L}) = \lambda_3(\mathcal{L})=\cdots = \lambda_N(\mathcal{L})=N-1$ \cite{cvetkovic2009introduction}.
     Let $\mu = \frac{\lambda_2(\mathcal{L})+\lambda_N(\mathcal{L})}{2}$ as in \eqref{eq:mu}. From the construction of $\mathbf{W}$ in \eqref{eq:W}, it follows that $\lambda_2(\mathbf{W}) = \lambda_3(\mathbf{W})=\cdots = \lambda_N(\mathbf{W})=0$ and $\mathbf{W} = \frac{1}{N}\mathbf{1}_{N\times N}$. This guarantees that the DUIO \eqref{eq:dis_DUIO} reaches consensus in a single step, as will be shown in Theorem 3.
\end{remark}

\begin{theorem}\label{theorem:dis_DUIO}
    Consider the discrete-time DUIO designed in \eqref{eq:dis_DUIO}, $E_i$ and $F_i$ calculated by \eqref{eq:compute_Ei_Fi}, and the consensus matrix $\mathbf{W}$ constructed by \eqref{eq:W} and \eqref{eq:mu}. Under the Assumptions~\ref{as:connected} and \ref{asm:discrete}, there exists a class $\mathcal{KL}$ function $\alpha(\cdot)$ such that the estimation error $e(t)$ satisfies
    \begin{equation}\label{eq:error_dt}
        \lim_{t\rightarrow +\infty}\|e(t)\|\le \alpha\left(\|x(t)\|,d\right).
    \end{equation}
\end{theorem}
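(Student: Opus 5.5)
The plan is to write the local estimate $\hat x_i(t)$ in closed form from \eqref{eq:zeta_k} and \eqref{eq:hat_xi}, and split the error into a consensus part and a local-observer part. Stack $\zeta(k,t)=\col(\zeta_1(k,t),\dots,\zeta_N(k,t))$. Iterating \eqref{eq:zeta_k} gives $\zeta(k,t)=(\mathbf{W}\otimes I_n)\zeta(k-1,t)$, hence
\begin{equation*}
\hat x(t)=N(\mathbf{W}^d\otimes I_n)\zeta(0,t),
\end{equation*}
where $\zeta_i(0,t)=E_iz_i(t)+F_iy_i(t)$. Next write $z_i=P_{W_{g,i}^*}x-\varrho_i$ and $y_i=C_ix$. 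This gives
\begin{equation*}
\zeta_i(0,t)=(E_iP_{W_{g,i}^*}+F_iC_i)x(t)-E_i\varrho_i(t).
\end{equation*}
The key identity is \eqref{eq:Ei_Fi}, which holds under Assumption~\ref{asm:discrete} with the choice \eqref{eq:compute_Ei_Fi}. It gives $\sum_i \zeta_i(0,t)=x(t)-\sum_iE_i\varrho_i(t)$.

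Then use the spectral expansion $\mathbf{W}^d=\frac1N\mathbf{1}_{N\times N}+\sum_{\tau\ge2}\lambda_\tau(\mathbf{W})^dv_\tau v_\tau^\top$ stated before Lemma~\ref{lemma:mu}. Applied to the expression above, it gives
\begin{equation*}
\hat x_i(t)=\sum_{j}\zeta_j(0,t)+N\sum_{j}\big[(\mathbf{W}^d-\tfrac1N\mathbf{1}_{N\times N})\big]_{ij}\zeta_j(0,t).
\end{equation*}
So $e_i(t)=\sum_jE_j\varrho_j(t)-N\sum_j[\mathbf{W}^d-\tfrac1N\mathbf{1}]_{ij}\zeta_j(0,t)$. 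The first term vanishes as $t\to\infty$ by Lemma~\ref{lemma:z_i}, since each $\bar A_{L_i}$ is Schur. For the second term, the spectral norm of $\mathbf{W}^d-\frac1N\mathbf{1}_{N\times N}$ equals $\max(|\lambda_2(\mathbf{W})|,|\lambda_N(\mathbf{W})|)^d=r^d$ with $r$ from \eqref{eq:ConverRate}, using the optimal $\mu$ in \eqref{eq:mu} and Lemma~\ref{lemma:mu}. Hence $\|e(t)\|\le c_1\sum_j\|\varrho_j(t)\|+N\sqrt N\, r^d\,\|\zeta(0,t)\|$.

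Then bound $\|\zeta(0,t)\|\le \max_i\|E_iP_{W_{g,i}^*}+F_iC_i\|\,\sqrt N\,\|x(t)\|+c_2\sum_j\|\varrho_j(t)\|$. Taking the limit (superior) as $t\to\infty$ removes every $\varrho$ term and leaves
\begin{equation*}
\alpha(s,d)=c\,N^{2}\,r^d\,s,\qquad c=\max_i\|E_iP_{W_{g,i}^*}+F_iC_i\|.
\end{equation*}
This is increasing in $s$ and decreasing to zero in $d$, since $r<1$, so it is class $\mathcal{KL}$. In the complete-graph case $r=0$, by the remark after Lemma~\ref{lemma:mu}; then one takes $\alpha$ to be any $\mathcal{KL}$ majorant, or notes that exact consensus holds for $d\ge1$.

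The main obstacle is interpreting the limit statement \eqref{eq:error_dt}, because $x(t)$ need not converge. I would make it rigorous as $\limsup_{t}\big(\|e(t)\|-\alpha(\|x(t)\|,d)\big)\le0$. This requires either bounded $x$, so that the decaying $\varrho$-terms multiplied by constants vanish, or simply the observation that the $\varrho$ contributions are additive and decay independently of $x$. A second subtlety is that the $\varrho_j$ part must not be amplified. Both the consensus-part and the $\varrho$-part of the bound have finite $d$-independent constants ($\|\mathbf{W}^d\|\le1$), so it is not.
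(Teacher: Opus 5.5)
Your proposal is correct and follows essentially the same route as the paper. Both arguments stack the consensus iterations and split $\mathbf{W}^d$ into $\frac{1}{N}\mathbf{1}_{N\times N}$ plus the $\tau\ge 2$ spectral part, use $\sum_i (E_iP_{W_{g,i}^*}+F_iC_i)=I_n$ together with Lemma~\ref{lemma:z_i} to remove the averaged term, and bound the disagreement term by a factor geometric in $d$, times a constant, times $\|x(t)\|$.

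Your version is, if anything, more robust than the paper's explicit constant $(N-1)\sqrt{N}$, because it keeps the factor $N$, makes the $\varrho$-remainder explicit, and uses $c=\max_i\|E_iP_{W_{g,i}^*}+F_iC_i\|$. The paper's constant relies on treating the generally non-symmetric $K_i=\Phi^{-1}R_i^\top R_i$ as positive semidefinite with $\lambda_{\max}(K_i^\top K_i)\le 1$, which need not hold.
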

\begin{proof}
    Let $\zeta(k,t)\coloneq\matrices{\zeta_1(k,t)^\top \,\, \zeta_2(k,t)^\top \,\,\cdots \,\,\zeta_N(k,t)^\top}^\top$. According to \eqref{eq:zeta_k}, we have
    \begin{equation*}
        \zeta(k,t) = \left(\mathbf{W}\otimes I_n\right) \zeta(k-1,t),
    \end{equation*}
   from which it is immediate to obtain
    \begin{equation*}
        \zeta(d,t)= \left(\mathbf{W}^d\otimes I_n\right)\zeta(0,t).
    \end{equation*}
    The amalgamated state estimation error across all nodes can be expressed as follows
    \begin{align*}
        e(t) &= \mathbf{1}_N \otimes x(t) - \underset{i\in\mathbf{N}}{\col}\left(\hat{x}_i(t)\right)\\
        &= \mathbf{1}_N \otimes x(t) - N  \left(\mathbf{W}^d\otimes I_n\right)\zeta(0,t)\\
       & =\mathbf{1}_N \otimes x(t) - \left(\mathbf{1}_{N\times N}\otimes I_n\right)\zeta(0,t) \\
        &\quad\ - \left(\left(\sum_{\tau=2}^N \left(\lambda_\tau(\mathbf{W})\right)^d v_\tau v_\tau^\top\right)\otimes I_n\right)\zeta(0,t)\\
        &=\mathbf{1}_N \otimes \left(x(t) - \sum_{i=1}^N \left( E_iz_i(t) + F_iy_i(t) \right)\right)\\
        &\quad\ - \left(\left(\sum_{\tau=2}^N \left(\lambda_\tau(\mathbf{W})\right)^d v_\tau v_\tau^\top\right)\otimes I_n\right)\zeta(0,t).
    \end{align*}
According to Lemma~\ref{lemma:z_i}, for each node $i$, we have that $\underset{t\rightarrow \infty}{\lim} \left\|P_{W_{g,i}^*} x(t) - z_i(t)\right\| = 0$, which leads to
    \begin{multline} 
        \lim_{t\rightarrow\infty}\left\|x(t) - \sum_{i=1}^N \left( E_iz_i(t) + F_iy_i(t) \right)\right\|\\
        =\lim_{t\rightarrow\infty}\left\|\sum_{i=1}^{N}E_i\left(P_{W_{g,i}^*}x(t)-z_i(t)\right)\right\|=0.
    \label{eq:error_of_rho}
    \end{multline}
Let $K_i \coloneq E_iP_{W_{g,i}^*} + F_iC_i$, one can obtain    
    \begin{multline*}
            \lim_{t\rightarrow\infty}\|e(t)\|
            =\left\|\left(\left(\sum_{\tau=2}^N \left(\lambda_\tau(\mathbf{W})\right)^d v_\tau v_\tau^\top\right)\otimes I_n\right)\zeta(0,t)\right\|\\
            \le\left\|\sum_{\tau=2}^N \left(\lambda_\tau(\mathbf{W})\right)^d v_\tau v_\tau^\top\right\|
            \sqrt{\sum_{i=1}^N\left\|  E_iz_i(t) + F_iy_i(t)\right\|^2}
    \end{multline*}
which further leads to  
    \begin{align*}
            \lim_{t\rightarrow\infty}\|e(t)\|
            &\le (N-1)\left(\lambda_2(\mathbf{W})\right)^d \max_{\tau\in\{2,\cdots,N\}}\left(\left\|v_\tau v_\tau^\top\right\|\right)\\
            &\myquad[10]\ \times\sqrt{\sum_{i=1}^N\left\|  K_ix(t)\right\|^2} \\
            &\hspace{-4mm} \le (N-1)\left(\lambda_2(\mathbf{W})\right)^d \sqrt{\lambda_{\max}\left(\sum_{i=1}^N K_i^\top K_i\right)}\|x(t)\|.
    \end{align*}
From \eqref{eq:Ei_Fi}, we have that $\sum_{i=1}^N K_i = I_n$. By applying Weyl inequality \cite{Horn_Johnson_2012}, one can obtain
       $\sum_{i=1}^N\lambda_{\max}(K_i) \leq 1$.
   According to \eqref{eq:compute_Ei_Fi}, we can deduce that $K_i$ is a positive semidefinite matrix. Hence, $0\leq \lambda_{\max}(K_i) \leq 1$, and consequently 
$  
       0\leq \lambda_{\max}\left(K_i^\top K_i\right)\leq 1.
$
Since
\begin{align*}
        &\lambda_{\max}\left(\sum_{i=1}^N K_i^\top K_i\right)=\max_{\eta\ne 0}\frac{\eta^\top \left(\sum_{i=1}^N K_i^\top K_i\right)\eta}{\eta^\top \eta}\\
        &\quad=\max_{\eta\ne 0}\left(\sum_{i=1}^N\frac{\eta^\top \left( K_i^\top K_i\right)\eta}{\eta^\top \eta}\right)
        \leq \sum_{i=1}^N\lambda_{\max}\left(K_i^\top K_i\right)\leq N,
\end{align*}
    we conclude that
    \begin{align}\label{eq:state_dependent_error}
        &\lim_{t\rightarrow\infty}\|e(t)\|
        \leq (N-1)\sqrt{N}\left(\lambda_2(\mathbf{W})\right)^d \|x(t)\|.
    \end{align}
This completes the proof of the estimation error bound $\alpha\left(\|x(t)\|,d\right)$ (see \eqref{def:dUIO_dt}).
\end{proof}
{Equation \eqref{eq:state_dependent_error} shows that the error bound is state dependent, scaling with $\|x(t)\|$ rather than being an absolute state-independent bound. In particular, if the system state $x(t)$ remains bounded, then the estimation error also remains bounded, with its magnitude scaling proportionally with the state magnitude. Since $0\le \lambda_2(\mathbf{W})<1$, the factor $(\lambda_2(\mathbf{W}))^d$ decays exponentially with the number of communication rounds $d$, so increasing $d$ improves the estimation accuracy. If the state is unbounded, the estimation error may also grow proportionally; however, this growth is attenuated by $(\lambda_2(\mathbf{W}))^d$, and thus larger $d$ improves the tracking accuracy of the discrete-time DUIO.}
\begin{remark}
    With the estimated state $\hat{x}_i(t)$, we can estimate the unknown input at the last time step, i.e., $\bar{u}_i(t-1)$, by
\begin{align*}
    \widehat{\bar{u}_i}(t-1) = \bar{B}_i^\dagger \left(\hat{x}_i(t) - A\hat{x}_i(t-1) - \bar{B}_iu_i(t-1)\right),
\end{align*}
assuming that the columns of $\bar{B}_i$ are linearly independent.
\end{remark}

\begin{corollary}[Existence of the design]
    Under Assumption~\ref{as:connected}, and with the consensus matrix defined in \eqref{eq:W} and \eqref{eq:mu}, the DUIO, where each local observer is given by  \eqref{eq:dis_DUIO}, produces an estimation error $ e(t) $ that is ultimately bounded by a class $\mathcal{KL}$ function as $t\rightarrow \infty$ if and only if condition \eqref{con:con_for_dt_system} holds.
\end{corollary}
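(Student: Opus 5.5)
Sufficiency and necessity are handled separately, mirroring the continuous-time corollary.

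\emph{Sufficiency.} Suppose \eqref{con:con_for_dt_system} holds. Then $R=\col_{i\in\mathbf{N}}(R_i)$ has trivial kernel, so $\Phi=\sum_i R_i^\top R_i$ is invertible. Hence $E_i,F_i$ in \eqref{eq:compute_Ei_Fi} are well defined and satisfy \eqref{eq:Ei_Fi}. Theorem~\ref{theorem:dis_DUIO} then applies directly and yields the bound \eqref{eq:state_dependent_error}, namely $\alpha(\|x\|,d)=(N-1)\sqrt{N}(\lambda_2(\mathbf{W}))^d\|x\|$. This function is class $\mathcal{KL}$ because $0\le\lambda_2(\mathbf{W})<1$ under \eqref{eq:mu} (Lemma~\ref{lemma:mu}).

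\emph{Necessity.} I argue by contradiction. Suppose $\mathscr V\coloneq\bigcap_i\kernel R_i\neq 0$ and pick $0\neq v\in\mathscr V$, so that $P_{W_{g,i}^*}v=0$ and $C_iv=0$ for every $i$. The goal is to show that no choice of $E_i,F_i$ in \eqref{eq:zeta0} can recover the component of $x$ along $v$. The key observation is that all information entering the network is carried by $z_i$ and $y_i$. By Lemma~\ref{lemma:z_i}, $z_i$ asymptotically tracks $P_{W_{g,i}^*}x$, and $y_i=C_ix$. Thus, for any $E_i,F_i$, the consensus value $\sum_i(E_iz_i+F_iy_i)$ asymptotically equals $Kx$ with $K=\sum_iK_i$ and $K_iv=0$. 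Consequently $Kv=0$, and $\hat x_i$ depends on $x$ only through its class modulo $\mathscr V$, up to vanishing transients and the $(\lambda_2(\mathbf{W}))^d$ consensus residual.

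I then consider two state trajectories $x$ and $x'=x+\xi$, where $\xi(t)\in\mathscr V$ is kept nonzero. To realize such a pair, I take the difference of the two trajectories to evolve inside $\bigcap_i\W_{g,i}^*$ and use that $\W_{g,i}^*\supseteq\image\bar B_i$. The aim is to choose the unknown inputs, or the initial conditions, so that every node sees identical $y_i$ and $z_i$ for both trajectories. The observers then produce identical estimates $\hat x_i$. Hence $\|e\|+\|e'\|\ge\|\xi(t)\|$, which does not go to zero as $d\to\infty$. This contradicts $\lim\|e\|\le\alpha(\|x\|,d)$ with $\alpha\to0$ in $d$.

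\emph{Main obstacle.} The hard step is constructing a trajectory difference that stays in $\mathscr V$, or at least keeps a nonvanishing component there, while remaining invisible to every $z_i$ and $y_i$. Invisibility to $z_i$ is automatic: $z_i$ evolves autonomously in $\X/\W_{g,i}^*$ with zero forcing when the difference lies in $\W_{g,i}^*$. Invisibility to $y_i$ requires the difference to stay in $\kernel C_i$ for all times. A simpler route is to work at a single time instant. At time $t$ the estimate $\hat x_i(t)$ is a fixed linear function of $(z_j(t),y_j(t))_j$, and $z_j(t)$ is determined by past outputs and the initial $z_j(0)$. I would therefore exploit the freedom in $x(0)$ together with the unknown inputs, in the spirit of the continuous-time necessity proof, so that $x(t)$ acquires an arbitrary component along $v$ without altering any measured quantity at time $t$. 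As in the continuous-time corollary, I would exclude trivial cases (a node with $\W_{g,i}^*\cap\kernel C_i=0$ already yields \eqref{con:con_for_dt_system}). This gives the indistinguishability that forces $\|e(t)\|$ to be bounded away from zero independently of $d$.
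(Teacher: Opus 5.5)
\textbf{There is a genuine gap in the necessity direction.} Your sufficiency argument is the paper's: condition \eqref{con:con_for_dt_system} makes $\Phi$ invertible, \eqref{eq:compute_Ei_Fi} then satisfies \eqref{eq:Ei_Fi}, and Theorem~\ref{theorem:dis_DUIO} applies.

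The step you flag as the ``main obstacle'' is the entire content of the necessity argument, and you leave it open. Your first observation is that $K_iv=0$ for $v\in\mathscr J=\bigcap_i\kernel R_i$, so the consensus value tracks $Kx$ with $K\neq I_n$. This is exactly where the paper's proof starts (its $\Delta\neq 0$), and the paper concludes from it directly. By itself, though, it only gives an \emph{upper} bound with an extra term $\|\Delta x(t)\|$, and an upper bound contradicts nothing. You need an admissible trajectory along which the part of $x(t)$ lying in $\mathscr J$ persists, and $\mathscr J\neq 0$ does not supply one.

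The indistinguishability construction also fails as sketched:
\begin{itemize}
\item Invisibility to $z_i$ is not automatic. Take two runs with the same observer initial states, state difference $\xi$ and input difference $\delta u$. The $z_i$-difference is then driven by $-P_{W_{g,i}^*}L_iC_i\xi+P_{W_{g,i}^*}B_i\delta u_i$.
\item An input unknown to node $i$ is generally known to some node $j$, where it enters $z_j$ through $P_{W_{g,j}^*}B_j$. So you would need $C_j\xi(t)=0$ and $B\delta u(t)\in\bigcap_j\W_{g,j}^*$ for all $j$ and all $t$.
\item From $v\in\mathscr J$, $(C_j,A)$-invariance of $\W_{g,j}^*$ only gives $Av\in\bigcap_j\W_{g,j}^*$, not $C_jAv=0$. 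Nothing guarantees an admissible $\delta u$ cancelling $C_jAv$ at every node at once.
\item The single-instant variant does not help. The claim concerns $t\to\infty$, and changing $x(0)$ changes the past outputs that determine $z_j(t)$; that is the same obstacle again.
\end{itemize}

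A cleaner setup removes the need for indistinguishability. Initialize $z_j(0)=P_{W_{g,j}^*}x(0)$. By the proof of Lemma~\ref{lemma:z_i}, $z_j(t)=P_{W_{g,j}^*}x(t)$ for all $t$ and all inputs. Hence, for \emph{any} choice of $E_j,F_j$, one has exactly $e_i(t)=\bigl(I_n-N\sum_j[\mathbf W^d]_{ij}K_j\bigr)x(t)$, and this matrix fixes $\mathscr J$ pointwise.

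Necessity then reduces to exhibiting an admissible trajectory with $x(t)\in\mathscr J\setminus\{0\}$ at arbitrarily large times. At those times $\|e_i(t)\|=\|x(t)\|$, independently of $d$. Neither your proposal nor the paper supplies such a trajectory.

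Its existence depends on what the inputs can reach and on whether the modes meeting $\mathscr J$ decay. The paper allows stable eigenvalues such as $0.9999$ to be labelled bad. Then, with no inputs and a Schur $A$, one can have $\mathscr J\neq 0$ while every trajectory, and hence every error, tends to zero, so the literal limit bound holds trivially. Closing the necessity direction therefore needs an additional argument, and possibly an additional hypothesis.
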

\begin{proof}
\textit{(Sufficiency)} -- If condition \eqref{con:con_for_dt_system} is satisfied, then according to Theorem~\ref{theorem:dis_DUIO}, there always exists a DUIO design in the form of \eqref{eq:dis_DUIO} that guarantees the estimation error $e(t)$ is ultimately bounded by a class $\mathcal{KL}$ function $\alpha\left(\|x(t)\|,d\right)$.

    \textit{(Necessity)} -- To prove the necessity of condition \eqref{con:con_for_dt_system}, we proceed by contradiction and assume that there exists a nontrivial subspace $\mathscr J \subset \mathscr X$ such that 
    \begin{equation}
        \bigcap_{i=1}^N \kernel \matrices{P_{W_{g,i}^*} \\ C_i} = \mathscr J \neq 0,\label{eq:corollary2}
    \end{equation}
    and the estimation error of the DUIO in the form of \eqref{eq:dis_DUIO} will ultimately be bounded by a class $\mathcal{KL}$ function as $t\rightarrow \infty$.
    
Following \eqref{eq:corollary2}, there no longer exist $E_i$ and $F_i$, $i\in\mathbf{N}$, that satisfy \eqref{eq:Ei_Fi}.
Consequently, for any given $E_i$ and $F_i$, $i\in\mathbf{N}$, there exists a non-zero matrix $\Delta\in\mathbb{R}^{n\times n}$, such that
   \begin{equation*}
         \sum_{i=1}^{N}(E_iP_{W_{g,i}^*} + F_iC_i) = I_n + \Delta,
   \end{equation*}
   which, instead of following \eqref{eq:error_of_rho}, leads to
   \begin{equation*}
   \begin{split}
       \lim_{t\rightarrow\infty}\left\|x(t) - \sum_{i=1}^N \left( E_iz_i(t) + F_iy_i(t) \right)\right\|
       =\lim_{t\rightarrow\infty}\left\|\Delta x(t)\right\|.
   \end{split}
   \end{equation*}
   Then, the steady state error is bounded by   
   \begin{equation*}
       \lim_{t\rightarrow\infty}\|e(t)\|
       \leq (N-1)\sqrt{N}\left(\lambda_2(\mathbf{W})\right)^d \|x(t)\| + \left\|\Delta x(t)\right\|.
   \end{equation*}
    This result contradicts the assumption that the estimation error $e(t)$ is ultimately bounded by a class $\mathcal{KL}$ function. Therefore, condition \eqref{con:con_for_dt_system} is necessary.
\end{proof}
\smallskip
To conclude the presentation of the DUIO algorithms, Table~\ref{tab:comparison_ct_dt} provides a comparison between the proposed continuous-time and discrete-time approaches.
\begin{table}[ht]
    \centering
    \caption{A comparison between the continuous-time DUIO and discrete-time DUIO}
    \resizebox{0.49\textwidth}{!}{
    \begin{tabular}{c|c|c}
                        \hline
        \diagbox[dir=NW]{Properties}{DUIO Types} & Continuous-time  & Discrete-time \\
                         \hline
        Communication Graph&\text{Connected}&\text{Connected}\\
        Communication Type &\text{Continuous}&\text{Discrete}\\
        Time Scale         &\text{One}&\text{Two}\\
        Consensus Matrix & $\mathcal{L}$        & $\mathbf{W}=I_N - \frac{1}{\mu}\mathcal{L}$\\
        Geometric Condition&$\overset{N}{\underset{i=1}{\bigcap}}\W_{g,i}^* = 0$&$ \overset{N}{\underset{i=1}{\bigcap}} \kernel \matrices{P_{W_{g,i}^*} \\ C_i} = 0$\\
        Unknown Input Condition&$\|\bar{u}_i\|_\infty \le \bar{u}_{\max}$&\text{None}\\
        Convergence Type & $\underset{t\rightarrow\infty}{\lim} \|e(t)\| = 0$&$\underset{t\rightarrow\infty}{\lim} \|e(t)\| \leq \alpha\left(\|x(t)\|,d\right)$\\
        \hline
    \end{tabular}
    }
    \label{tab:comparison_ct_dt}
\end{table}

\begin{remark}
    Assumptions~\ref{as:ct} and~\ref{asm:discrete} are joint conditions across the distributed network, unlike in centralized settings, since $\bar B_i$ is unavailable to Node~$i$ but may lie in the column space of another node's input channel $B_j$. For a detailed discussion of centralized UIOs and their relationship to DUIOs from the geometric perspective, the reader is referred to \cite{zhao2025bridging}. 
    In contrast to existing works, which impose an additional joint detectability assumption on $A$ alongside \eqref{eq:rank_con} on ($\bar B_i$, $C_i$) (see, e.g., \cite[Equation~10]{yang2022state}), Assumptions~\ref{as:ct} and~\ref{asm:discrete} unify both into a single, more relaxed condition involving ($A$, $\bar B_i$, $C_i$).   
\end{remark}

\section{Simulation Results}\label{sec:simulation}
In this section, the effectiveness of the proposed DUIO schemes is validated in both continuous- and discrete-time settings, including a comparison based on their respective design criteria. Additionally, the advantages of the geometry-based subspace decomposition over existing methods are highlighted, demonstrating the reduced conservativeness achieved through $\W_{g,i}^*$. Finally, a case study on a DC microgrid system illustrates the practical applicability of the proposed DUIO  approach for real-world interconnected systems. 

\subsection{Numerical example}\label{sec:numerical_sim}
\subsubsection{Continuous-time system}
Consider a continuous-time system in the form of \eqref{eq:con_sys}, measured by $4$ sensor nodes. The system matrix $A$ and the input matrix $B$ are given by
\begin{align*}
    A &= \begin{bmatrix}
        0& 3&  0&  0&  0&  0\\
       -2& 0&  1&  0&  0&  0\\
       0&  0&  0&  2&  0&  0\\
       0&  0& -3&  -2& 0&  0\\
       0&  0&  0&  1&  0& -3\\
       0& 1.5& 0&  0&  4& 0
    \end{bmatrix},\\ 
    B^\top &=
    \begin{bmatrix}
     0&     1&     0&     0&     0&     1\\
     0&     0&     0&     1&     0&     0\\
     0&     0&     1&     0&     0&     1\\
    \end{bmatrix} = \begin{bmatrix}
    B_a^\top\\B_b^\top\\B_c^\top
    \end{bmatrix}.
\end{align*}
%
The input signal of the system follows $u(t)=\matrices{\mathtt{sin}(t)&2\mathtt{cos}(t)&2\mathtt{sin}(0.5t)}^\top$, which is decomposed as $u=\matrices{u_a&u_b&u_c}^\top$ in accordance with $B$. The known inputs at each node are defined as $u_1 = \matrices{u_a&u_b}^\top$, $u_2=\matrices{u_a&u_c}^\top$, $u_3=\matrices{u_b&u_c}^\top$ and $u_4=u_a$. Accordingly, the known and unknown input channels for each node can be characterized as follows
\begin{equation*}
\begin{aligned}
    &B_1 = \matrices{B_a&B_b},\ \bar B_1=B_c,\ 
    B_2 = \matrices{B_a&B_c},\ \bar B_2=B_b,\\
    &B_3 = \matrices{B_b&B_c}=\bar{B}_4,\ \bar B_3=B_a=B_4.
\end{aligned}
\end{equation*}
Finally, the output matrices of each node are defined as
\begin{equation*}
    \begin{aligned}
        &C_1=\begin{bmatrix}
        1& 0& 0& 0& 0& 0\\0& 1& 0& 0& 0& 0
        \end{bmatrix},
        C_3=\begin{bmatrix}
        0& 0& 1& 0& 0& 0\\0& 0& 0& 0& 1& 0
        \end{bmatrix},\\
        &C_2=\begin{bmatrix}
        0& 1& 0& 0& 1& 0
        \end{bmatrix},\ 
        C_4=\begin{bmatrix}
        1& 1& 0& 0& 0& 0
        \end{bmatrix}.
    \end{aligned}
\end{equation*}
It is straightforward to verify that $\mathrm{rank}(C_i\bar{B}_i)\neq\mathrm{rank}(\bar{B}_i)$, $\forall i\in\{1,\cdots,4\}$. Therefore, this problem cannot be solved using existing approaches \cite{yang2022state,cao2023distributed,cao2023distributedauto,zhu2024distributed,disaro2024distributed}.
\begin{figure}[ht]
    \centering
    \scalebox{0.7}{
    \begin{tikzpicture}
\def\off{18}
\def\N{5}
\def\R{2.5}
\pgfmathparse{360/\N}
\edef\step{\pgfmathresult}

\colorlet{net}{teal}
\tikzset{comm/.style = {color=net, very thick, dash pattern=on 4pt off 1.5pt}}
\node [circle, 
        draw, 
        color=net, 
        fill=white, 
        text=black, 
        very thick,
        inner sep=6pt] (O1) at (0,2) {$\mathcal O_{1}$};
\node [circle, 
        draw, 
        color=net, 
        fill=white, 
        text=black, 
        very thick,
        inner sep=6pt] (O2) at (2,2) {$\mathcal O_{2}$};
\node [circle, 
        draw, 
        color=net, 
        fill=white, 
        text=black, 
        very thick,
        inner sep=6pt] (O3) at (2,0) {$\mathcal O_{3}$};
\node [circle, 
        draw, 
        color=net, 
        fill=white, 
        text=black, 
        very thick,
        inner sep=6pt] (O4) at (0,0) {$\mathcal O_{4}$};

\draw [comm] (O1) -- (O2) -- (O3) -- (O4) -- (O1);

\end{tikzpicture}}\\[-1.5ex]
    \caption{Communication topology for numerical example in Section~\ref{sec:numerical_sim}.}
    \label{fig:topo_for_num_exam}
\end{figure}
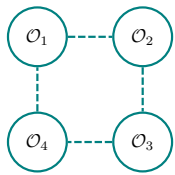
Each local observer $\mathcal{O}_i$ has access only to its local input $u_i$ and output $y_i$, and shares its state estimates with neighboring nodes over the communication network shown in Fig.~\ref{fig:topo_for_num_exam}.
According to Theorem~\ref{thm:main_ct}, $\chi_i$ is set to $279.0354$ and $\gamma_i$ is set to $30.7682$. The output injection maps $L_i$ and insertion maps $W_{g,i}^*$ are provided in the supplementary document\footnote{\label{docm:parameter}\href{https://github.com/RuixuanZhaoEEEUCL/Simulation-Parameters-DUIO.git}{https://github.com/RuixuanZhaoEEEUCL/Simulation-Parameters-DUIO.git}}.

From Fig.~\ref{fig:sim_result_CT}, we can see that the estimation errors of each node asymptotically converge to zero, verifying the effectiveness of our continuous-time DUIO design.
\begin{figure}[ht]
    \centering
    \includegraphics[width=0.49\textwidth]{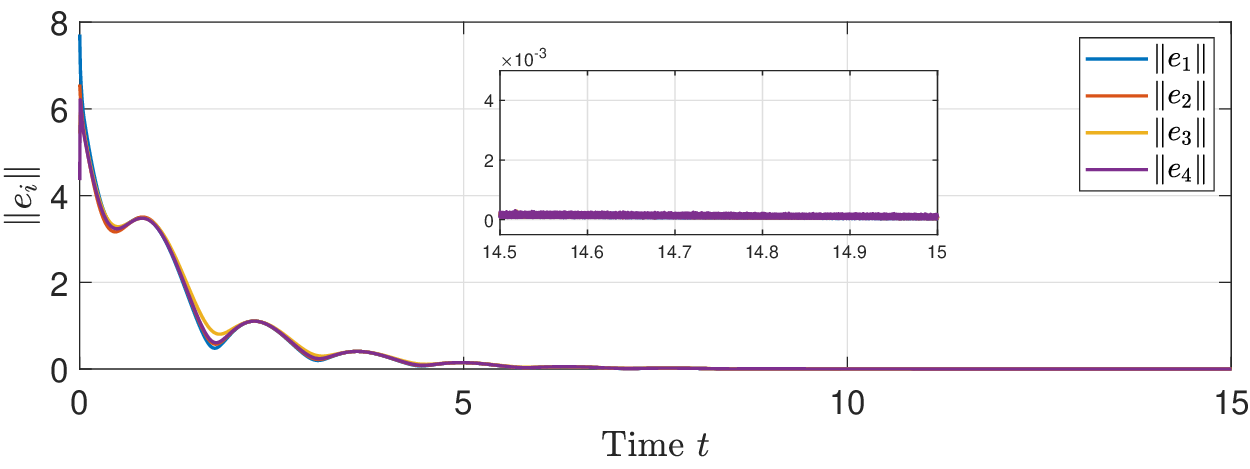}\\[-2ex]
    \caption{Norm of estimation error signal at each node for the continuous-time system.}
    \label{fig:sim_result_CT}
\end{figure}
It is noteworthy that stable invariant zeros may exist that are not captured in $\kappa \pr{A_{L_i}|\X/\Ss^*}$. For instance, at node 1, the spectrum of invariant zeros is $\kappa(A_{L_1}|\Ss_1^*/\W_1^*)=\{-2,\ 3.4641i,\ - 3.4641i\}$, which includes the stable zero $\{-2\}$. The state-space information associated with this invariant zero would be neglected if only the infimal unobservability subspace $\Ss_1^*$ is considered. Moreover, for the system in this example, $\overset{4}{\underset{i=1}{\bigcap}} \Ss_i^*\neq0$, which cannot guarantee a positive definite ${S^*}^\top(\mathcal{L} \otimes I_n)S^*$, where $S^*=\diag(S^*_1,\cdots,S^*_4)$, and $S_i^*:\Ss_i^*\rightarrow\X$ is the insertion map associated with $\Ss_i^*$. This highlights the advantage of designing via $\W_{g,i}^*$ rather than using $\Ss^*_i$ at each node, as the latter approach could result in design infeasibility.
Due to the presence of the $\text{sign}(\cdot)$ function, chattering could arise when the  observer~\eqref{eq:DUIO_ct} is employed. To mitigate chattering, several well-established smooth sign functions, such as those proposed in \cite{edwards1998sliding}, can be employed.

\subsubsection{Discrete-time system}
Similarly to the continuous-time case, we now consider a discrete-time system in the form of \eqref{eq:dis_sys}, where the system matrix $A$ and the input matrix $B$ are given by
\begin{equation*}
\setlength{\arraycolsep}{2.5pt}
\begin{aligned}
    &\ A =\\
    &\begin{bmatrix}
    0.9925&    0.1496&    0.0037&    0.0001&         0&         0&\\
   -0.0998&    0.9925&    0.0498&    0.0024&         0&         0\\
         0&         0&    0.9928&    0.0949&         0&         0\\
         0&         0&   -0.1424&    0.8978&         0&         0\\
    0.0002&   -0.0056&   -0.0037&    0.0472&    0.9850&   -0.1493\\
   -0.0037&    0.0744&    0.0016&    0.0049&    0.1990&    0.9850
    \end{bmatrix},
\end{aligned}
\end{equation*}
\begin{equation*}
\setlength{\arraycolsep}{1.5pt}
\begin{aligned}
&\ B^\top =
\begin{bmatrix}
    B_a&B_b&B_c
\end{bmatrix}^\top
=\\
&\begin{bmatrix}
    0.0037&    0.0499&         0&         0&    0.0497&    0.0069\\
         0&         0&    0.0024&    0.0475&    0.0012&    0.0001\\
    0.0001&    0.0012&    0.0499&   -0.0036&   -0.0038&    0.0498
\end{bmatrix}.
\end{aligned}
\end{equation*}
The system is driven by the input signal \( u(t) = \begin{bmatrix} \mathtt{sin}(0.01t) & \mathtt{cos}(0.05t) & 0.5\mathtt{sin}(0.05t) \end{bmatrix}^\top \), which can be decomposed as \( u = \begin{bmatrix} u_a & u_b & u_c \end{bmatrix}^\top \). The locally known inputs for each node are specified as \( u_1 = u_b \), \( u_2 = \begin{bmatrix} u_a & u_c \end{bmatrix}^\top \), \( u_3 = u_c \), and \( u_4 = \begin{bmatrix} u_b & u_c \end{bmatrix}^\top \). Based on this, the known and unknown input channels for each node are given below
\begin{equation*}
    \begin{aligned}
        &B_1 = B_b = \bar{B}_2,\ \bar{B}_1=\matrices{B_a&B_c} = B_2,\\
        &B_3 = B_c,\ \bar{B}_2 = \matrices{B_a&B_b},\ 
        B_4 = \matrices{B_b&B_c},\ \bar B_4=B_a.
    \end{aligned}
\end{equation*}
The global system is measured by $4$ networked sensors with the output matrices of each node 
\begin{equation*}
    \begin{aligned}
        &C_1=\begin{bmatrix}
        1& 0& 0& 0& 0& 0
        \end{bmatrix},\ 
        C_2=\begin{bmatrix}
        0& 1& 0& 0& 0& 0
        \end{bmatrix},\\
        &C_3=\begin{bmatrix}
        0& 0& 1& 0& 0& 0
        \end{bmatrix},\ 
        C_4=\begin{bmatrix}
        0& 0& 0& 1& 0& 1
        \end{bmatrix}.
    \end{aligned}
\end{equation*}
The sensor nodes are interconnected according to the graph shown in Fig.~\ref{fig:topo_for_num_exam}. The DUIO is designed by setting the communication frequency $d$ to $12$. In the meantime, the output injection maps $L_i$, canonical projections $P_{W_{g,i}^*}$ and matrices $E_i$, $F_i$ are given in the supplementary document\textsuperscript{\ref{docm:parameter}}.

The estimation errors of the DUIO are shown in Figure~\ref{fig:sim_result_DT}. As observed, all four error signals asymptotically converge with negligible steady-state error, consistent with Theorem~\ref{theorem:dis_DUIO}.
\begin{figure}[ht]
    \centering
    \includegraphics[width=0.49\textwidth]{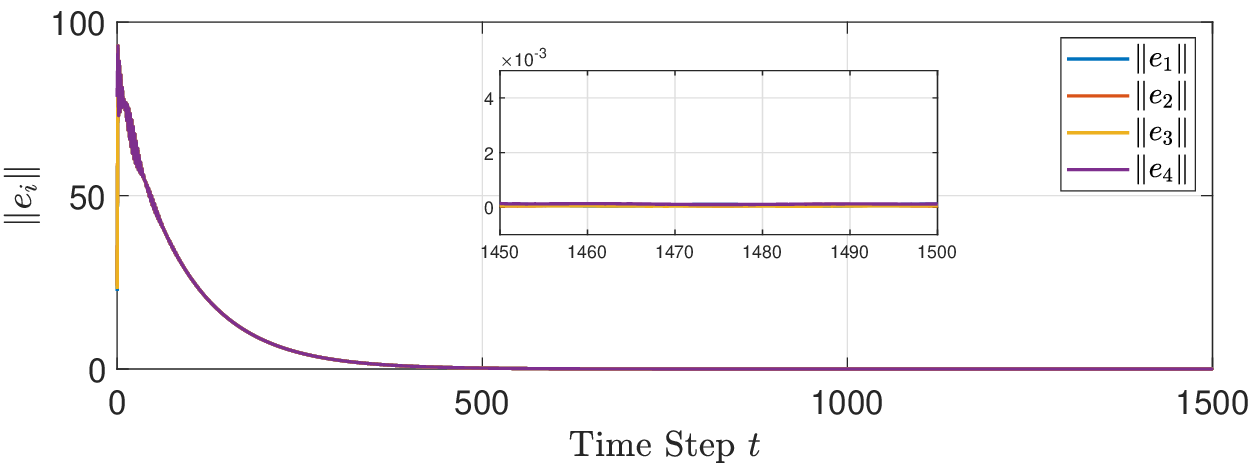}\\[-2ex]
    \caption{Norm of estimation error signal at each node for the discrete-time system.}
    \label{fig:sim_result_DT}
\end{figure}
Moreover, at node $2$, we have $\kappa(A_{L_2}|\Ss_2^*/\W_2^*)=\{0.9850 + 0.1724i,\ 0.9850 - 0.1724i,\ -0.0152,\ 0.9999\}$. Although the invariant zero $\{0.9999\}$ is stable, it is classified as a ``bad'' eigenvalue and is included in $\bar \X^*_{b,2}$ due to its slow convergence rate. This indicates the flexibility of the criterion to distinguish $\bar \X^*_{b,i}$ and $\bar \X^*_{g,i}$, which is crucial for the design of $\W_{g,i}^*$.

\subsection{Case Study: DC Microgrid}\label{exam:DGU}

Consider a DC microgrid power system composed of five interconnected DGUs, where buck converters are physically connected through RLC filters. The diagram of the DC microgrid power system with DUIO $\{ \mathcal O_i \}_{i \in \mathbf \{1,\cdots,5\}}$ is shown in Fig.~\ref{fig:MG}.
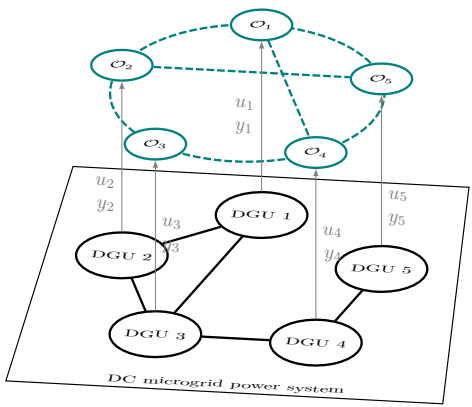
\begin{figure}[ht]
    \centering
    \scalebox{0.73}{\begin{tikzpicture}[tdplot_main_coords]
    \def\off{18}
    \def\N{5}
    \def\R{2.5}

    \pgfmathparse{360/\N}
    \edef\step{\pgfmathresult}

    \colorlet{net}{teal}
    \tikzset{comm/.style = {color=net, very thick, dash pattern=on 4pt off 1.5pt}}
    \foreach \x in {1,...,\N} {%
        \pgfmathparse{\R*cos(\x*\step+\off)}
        \let\xcoord\pgfmathresult
        \pgfmathparse{\R*sin(\x*\step+\off)}
        \let\ycoord\pgfmathresult
        \begin{scope}[canvas is xy plane at z=0,transform shape]
        \node [circle, 
            draw,
            color=black, 
            fill=white, 
            text=black, 
            very thick,
            inner sep=6pt] 
            (N\x) at (\xcoord, \ycoord, 0) {$\mathrm{DGU}\ \x$};
        \end{scope}
    }
    \draw[-latex, very thick, -] (N1) to (N2);
    \draw[-latex, very thick, -] (N2) to (N3);
    \draw[-latex, very thick, -] (N3) to (N4);
    \draw[-latex, very thick, -] (N1) to (N3);
    \draw[-latex, very thick, -] (N4) to (N5);
    \draw[semithick] (-4,-4,0) -- (4,-4,0) -- (3.65,3.9,0) -- (-3.6,3.9,0) -- cycle;
    \begin{scope}[canvas is xy plane at z=0,transform shape]
        \node[above] at (0,-4,0) {DC microgrid power system};
    \end{scope}

    \foreach \x in {1,...,\N} {%
        \pgfmathparse{\R*cos(\x*\step+\off)}
        \let\xcoord\pgfmathresult
        \pgfmathparse{\R*sin(\x*\step+\off)}
        \let\ycoord\pgfmathresult
        \begin{scope}[canvas is xy plane at z=4,transform shape]
        \node [circle, 
            draw,
            color=net, 
            fill=white, 
            text=black, 
            very thick,
            inner sep=6pt] 
            (O\x) at (\xcoord, \ycoord, 4) {$\mathcal{O}_\x$};
        \end{scope}
    }
    \coordinate (D) at (0,0,4);
\pgfmathanglebetweenpoints{
    \pgfpointanchor{O1}{west}}{
    \pgfpointanchor{D}{center}}
\edef\anglestart{\pgfmathresult}
\pgfmathanglebetweenpoints{
    \pgfpointanchor{O2}{north}}{
    \pgfpointanchor{D}{center}}
\edef\angleend{\pgfmathresult-11}
\begin{scope}[canvas is xy plane at z=4,transform shape]
    \foreach \x in {1,...,\N} {
    \edef\rotation{\step*(\x-1)}
    \draw[comm,rotate=\rotation] (\anglestart-180:\R) arc[radius = \R, start angle = \anglestart-180, end angle = \angleend-180];
    }
\end{scope}

\draw [comm] (O2) -- (O5);
\draw [comm] (O1) -- (O4);

\draw [gray,-latex] (N1) -- (O1) node[midway, left, align=left] {$u_1$ \\ $y_1$};
\draw [gray,-latex] (N2) -- (O2) node[near start, left, align=right] {$u_2$ \\ $y_2$};
\draw [gray,-latex] (N3) -- (O3) node[midway, right, align=left] {$u_3$ \\ $y_3$};
\draw [gray,-latex] (N4) -- (O4) node[midway, right, align=right] {$u_4$ \\ $y_4$};
\draw [gray,-latex] (N5) -- (O5) node[near start, right, align=left] {$u_5$ \\ $y_5$};

\end{tikzpicture}}\\[-1.5ex]
    \caption{Diagram of a DC microgrid power system composed of 5 DGUs with the DUIO $\{ \mathcal O_i \}_{i \in \mathbf \{1,\cdots,5\}}$. Solid lines represent the physical connections, while dashed lines represent the communication graph.}
    \label{fig:MG}
\end{figure}
The dynamics of an individual DGU are modeled by~\cite{tucci2017line}:
\begin{equation*}
\left\{\begin{aligned}
&\dot{\mathcal{V}}_{i}=\frac{1}{C_{t i}} \mathcal{I}_{t i}+\sum_{j \in \mathcal{N}_i}\left(\frac{\mathcal{V}_{j}}{C_{t i} R_{i j}}-\frac{\mathcal{V}_{i}}{C_{t i} R_{i j}}\right)-\frac{1}{C_{t i}} \mathcal{I}_{L i} \, , \\
&\dot{\mathcal{I}}_{t i}=-\frac{1}{L_{t i}} \mathcal{V}_{i}-\frac{R_{t i}}{L_{t i}} \mathcal{I}_{t i}+\frac{1}{L_{t i}} \mathcal{V}_{t i} \, ,
\end{aligned}\right.\label{eq:DGU_c}
\end{equation*}
where $\mathcal{V}_{ti}$ is the terminal voltage, $\mathcal{I}_{Li}$ is the load current, $\mathcal{V}_i$ is the $i$-th point of common connection voltage, $\mathcal{I}_{ti}$ is the filter current, $R_{ti}$, $L_{ti}$, $C_{ti}$ are the capacitance, inductance, and resistance of the DGU's RLC filter, respectively. $R_{ij}$ is the resistance of power line $ij$, and $\mathcal{N}_i$ is the set of DGUs physically connected with DGU $i$. To ensure that $\mathcal{V}_i$ can asymptotically track a reference voltage, each DGU is equipped with a primary controller proposed in \cite{tucci2017line}. In this context, the electrical scheme of each DGU is shown in Fig.~\ref{fig:DGU_circuit}.
\begin{figure}[ht]
    \centering
    \scalebox{0.65}{\begin{circuitikz}[american]
\draw
    (0,2) to[battery, o-o] (0,0) 
    (0,0) to[short] (0.75,0)
    (0,2) to[short] (0.75,2) 
    (1.5,1) node[draw, minimum width=1.5cm, minimum height=3cm] (buck) {Buck $i$}
  
    (2.25,2) to[R=$R_{ti}$, i^=$\mathcal{I}_{ti}$] (5,2)
    
    (5,2) to[L=$L_{ti}$] (6.5,2)

    (6.5,2) to[C=$C_{ti}$] (6.5,0)

    (8.5,2) to[isource, i^=$\mathcal{I}_{Li}$] (8.5,0)
    (2.25,0) to[short] (8.5,0)
    (6.5,2) to[short,-o] (7.5,2)
    (7.5,2) to[short,o-] (8.5,2)
    (7.5,2) node[above] {$\mathcal{V}_i$}

    (8.5,2) to[R=$R_{ij}$,-o] (11,2)
    (8.5,0) to[short,-o] (11,0)
    (11,2) node[above] {$\mathcal{V}_j$}
    
    [->] (2.5,0.25) -- (2.5,1.75) node[midway, right]{$\mathcal{V}_{ti}$}
   ;

    \draw[thick] (5,-2.5) -- (6,-2.0) -- (6,-3.0) -- cycle;
    \node[right] at (5.3,-2.5) {$K_i$};

    \draw[->, thick, green] (5,2) -- (5,4.3) -- (12,4.3) -- (12,-2.8) -- (6,-2.8);
    \node[circle, 
        draw, 
        color=green, 
        very thick,
        inner sep=1pt] at (5,2){};

    \draw[->, thick, red] (8,2) -- (8,4) -- (11.5,4) -- (11.5,-2.5) -- (6,-2.5);
    \node[circle, 
        draw, 
        color=red, 
        very thick,
        inner sep=1pt] at (8,2){};

    \draw[->, thick, red] (11.5,-2.2) -- (9.98,-2.2);
    \node[circle, 
        draw, 
        color=black, thick,
        inner sep=5.5pt] at (9.7,-2.2){};
    \node[right, font=\small] at (9.6,-2.2) {$-$};
    \node[above, font=\small] at (9.7,-2.27) {$+$};
    \draw[->, thick, black] (9.44,-2.2) -- (8,-2.2);
    \node[above] at (9.25,-2.3) {$\dot{\upsilon}_i$};

    \draw[->, thick, black] (9.7,-1.5) --  (9.7,-1.92);
    \node[right] at (9.7,-1.5) {$\mathcal{V}_{ref,i}$};
    
    \node[rectangle, 
        draw, 
        color=black, 
        very thick,
        inner sep=1pt] at (7.85,-2.2){$\frac{1}{s}$};

    \draw[->, thick, black] (7.7,-2.2) --  (6,-2.2);

     \draw[->, thick, blue] (5,-2.5) --  (4,-2.5) -- (4,1) -- (3.2,1)   ;
    
    \draw[dashed] (-0.5,-1) rectangle (9.1,3.5);
    \draw[dashed] (9.1,-1) rectangle (11.25,3.5);

    \node at (4.45,3.7) {DGU $i$};
    \node at (10.2,3.7) {Power Line $ij$};

\end{circuitikz}}\\[-1.5ex]
    \caption{Circuit diagram of DGU $i$ with connecting line and primary controller \cite{tucci2017line}, where $\dot{\upsilon}_i=V_{ref,i}-V_i$, the red line transmits $V_i$, the green line transmits $I_{ti}$ and the blue line transmits $K_i\matrices{V_i&I_{ti}&\upsilon_i}^\top$ with $K_i = \matrices{k_{i,1}&k_{i,2}&k_{i,3}}$.}
    \label{fig:DGU_circuit}
\end{figure}
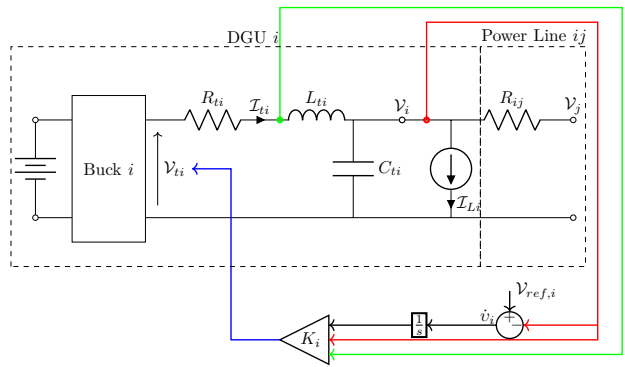

To account for the shared load among parallelly operated DGUs \cite{khadem2011parallel,dhua2017study}, we assume DGU $1$-$3$ supply a common load, i.e., $\mathcal{I}_{L1}=\mathcal{I}_{L2}=\mathcal{I}_{L3}=\mathcal{I}_1$, and track the same reference voltage, i.e., $\mathcal{V}_{ref,1}=\mathcal{V}_{ref,2}=\mathcal{V}_{ref,3}=\mathcal{V}_{r1}$, while DGU $4$ and $5$ supply another load, i.e., $\mathcal{I}_{L4}=\mathcal{I}_{L5}=\mathcal{I}_2$ and track a reference voltage, i.e., $\mathcal{V}_{ref,4}=\mathcal{V}_{ref,5}=\mathcal{V}_{r2}$. Each DGU has access only to its local input $u_i=\matrices{\mathcal{I}_{L_i}&\mathcal{V}_{ref,i}}^\top$ and local measurement $y_i=\matrices{\mathcal{V}_i&\mathcal{I}_{ti}}^\top$, as illustrated in Fig.~\ref{fig:MG}. Let $x=\underset{i\in\{1,\cdots,5\}}{\col}\left(\matrices{\mathcal{V}_i&\mathcal{I}_{ti}&\upsilon_i}^\top\right)$,  $u=\matrices{\mathcal{I}_1&\mathcal{V}_{r1}&\mathcal{I}_2&\mathcal{V}_{r2}}^\top$ and $y=\col_{i\in\{1,\cdots,5\}}\left(\matrices{\mathcal{V}_i&\mathcal{I}_{ti}}^\top\right)$. The overall system dynamics then follow \eqref{eq:con_sys} with 
$A = [A_{ij}]_{\,i,\,j\in \{1,\ldots,5\}}\in \mathbb{R}^{15 \times 15}$ and $B = [B_{ij}]_{\,i\in \{1,\ldots,5\},\,j\in \{1,2\}} \in \mathbb{R}^{15 \times 4}$, where 
\begin{equation*}
    A_{ii} = \begin{bmatrix}
        -\sum_{i\in\mathcal{N}_i}\frac{1}{R_{ij}C_{ti}}&\frac{1}{C_{ti}}&0\\
        \frac{k_{i,1}-1}{L_{ti}}&\frac{k_{i,2}-R_{ti}}{L_{ti}}&\frac{k_{i,3}}{L_{ti}}\\
        -1&0&0
    \end{bmatrix},
\end{equation*}
\begin{equation*}
    A_{ij}=\left\{\begin{array}{lr}
        \mathbf{0}_{3\times 3},\ \mathrm{if}\ j\neq i\ \&\ i\notin\mathcal{N}_i\\
        \begin{bmatrix}
            \frac{1}{R_{ij}C_{ti}}&0&0\\
            0&0&0\\
            0&0&0
        \end{bmatrix},\ \mathrm{if}\ j\neq i\ \&\ i\in\mathcal{N}_i
    \end{array}\right.\,.
\end{equation*}
and
\begin{equation*}
    B_{ij}=\left\{\begin{array}{ll}
        \mathbf{0}_{3\times 2},\ & \mathrm{if}\ (j=1 \,\& \,i=\{4,5\}) \parallel 
 \\ & \hspace{9mm} (\ j=2 \,\&\, i=\{1,2,3\})\\
        \begin{bmatrix}
             -\frac{1}{C_{ti}}&0\\
            0&0\\
            0&1
        \end{bmatrix}\!\!,\ & \hspace{2.8cm} \mathrm{otherwise}
    \end{array}\right.
\end{equation*}
Local output matrix is
$
    C_i = [
        \mathbf{0}_{2\times 3(i-1)}\,\,I_2 \,\, \mathbf{0}_{2\times 1}\,\,\mathbf{0}_{2\times 3(5-i)}].
$
In view of \eqref{eq:inputdecomp}, the input matrices for known and unknown inputs at each sensor node are 
\begin{equation*}
\begin{aligned}
    &B_1=B_2=B_3=\bar{B}_4=\bar{B}_5 =\underset{i\in\{1,\cdots,5\}}{\col}(B_{i1}),\\
    &B_4=B_5=\bar{B}_1=\bar{B}_2=\bar{B}_3=\underset{i\in\{1,\cdots,5\}}{\col}(B_{i2}),
\end{aligned}
\end{equation*}
In this example, the rank condition \eqref{eq:rank_con} is also violated, since $\mathrm{rank}(C_i\bar{B}_i)=0\neq\mathrm{rank}(\bar{B}_i)$. As a result, many existing DUIO schemes are not applicable. 

Given the parameters of this example as listed in TABLE~\ref{tab:mG_para}.
\begin{table}[ht]
    \centering
    \caption{Electrical Parameters of DGU}
  \resizebox{0.45\textwidth}{!}{
    \begin{tabular}{c|c|c}
                        \hline
        Parameter & Symbol  & Value \\
                         \hline
        Output capacitance&$C_{ti}$&$2.2\ \mathrm{mF}$\\
        Converter inductance &$L_{ti}$&$1.8\ \mathrm{mF}$\\
        Loss resistance   &$R_{ti}$&$0.2\ \Omega$\\
        Power line resistance & $R_{ij}$ & $0.05\ \Omega$\\
        Primary Controller&$K_i$&$[-2.134\ -0.163\ 13.553]$\\
        \hline
    \end{tabular}
    }
    \label{tab:mG_para}
\end{table}
Here, we focus on testing the discrete-time observer, as its design conditions are generally more relaxed compared to the continuous-time counterpart, as discussed in Section~\ref{sec:discrete_DUIO}.
To do this, we first derive the discrete-time model of the five interconnected DGU systems using exact discretization: 
$
    A_\mathrm{d} = e^{A T_s},\ B_\mathrm{d} = \left(\int_{\tau=0}^{T_s}e^{A\tau}\right)B,\ C_\mathrm{d} = C
$
with a sampling time $T_s=1\,\mathrm{ms}$. Next, the local discrete-time observer is designed according to \eqref{eq:dis_DUIO}. Due to space limitations, all DUIO parameters are provided in the supplementary document\textsuperscript{\ref{docm:parameter}}.


The results are demonstrated in Fig.~\ref{fig:d10_noise_free}. As observed, with $d=16$, the state estimation error of each observer converges with only a negligible steady-state error. To further assess the robustness, we introduce Gaussian sensor noise $\mathsf{N}(0,0.1^2)$ into the local measurements and perturb the load inputs $\mathcal{I}_1$ and $\mathcal{I}_2$ with Gaussian noise corresponding to  $\mathrm{SNR}=30\ \mathrm{dB}$, representing forecast inaccuracies. Fig.\ref{fig:d10_noised} demonstrates the robust performance of the proposed DUIO scheme under both sensor and input noise.  
\begin{figure}[ht]
    \centering
    \subfloat[Noise-free case with $d=16$]{
    \includegraphics[width=0.49\textwidth]{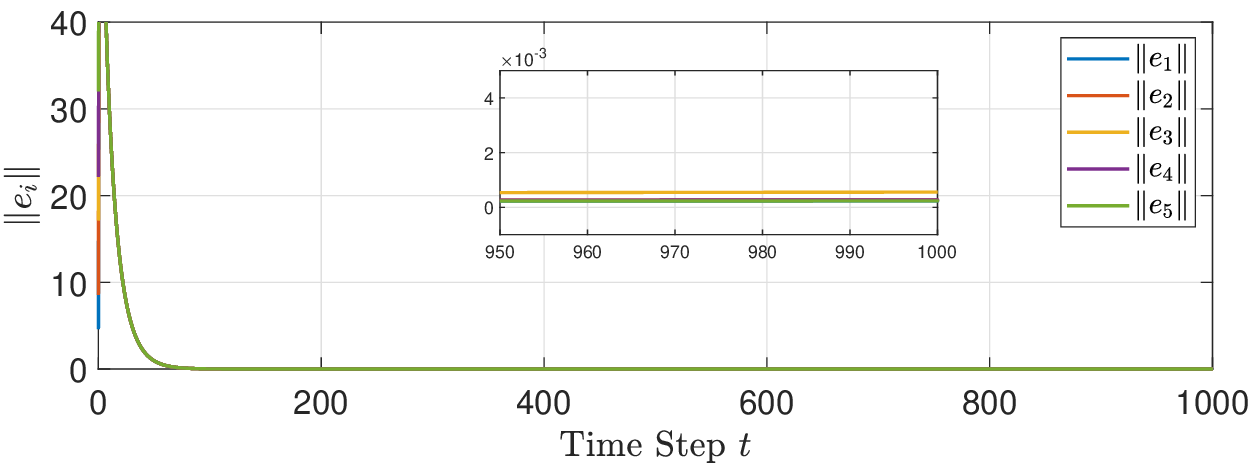}
    \label{fig:d10_noise_free}
    }
    \hfill \\[-0.5ex]
    \subfloat[Noised case with $d=16$]{
    \includegraphics[width=0.49\textwidth]{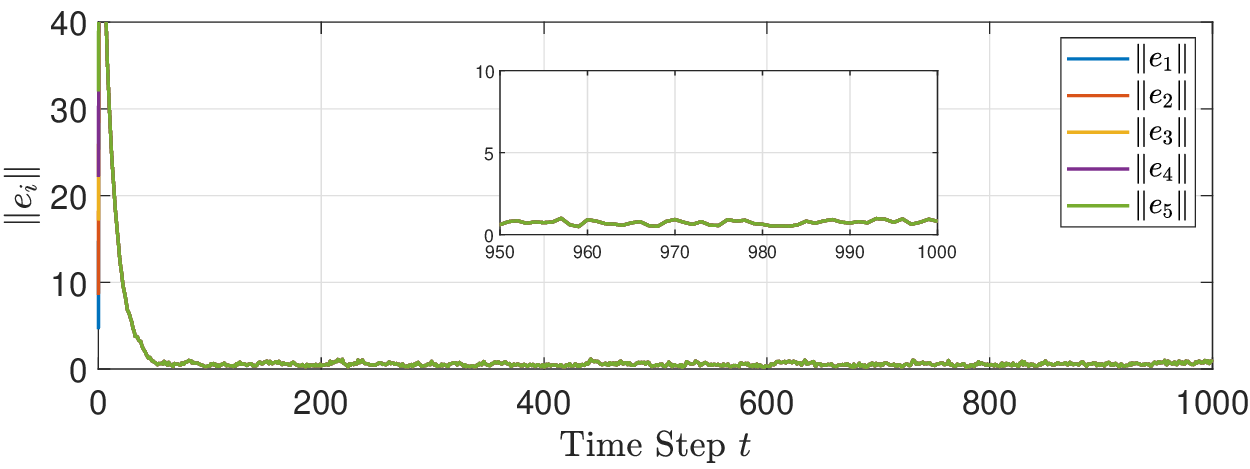}
    \label{fig:d10_noised}
    }
    \\[-1.2ex]
    \caption{Norm of DUIO estimation error signal for the DC microgrid system under different scenarios.}
    \label{fig:results_DGU_case}
\end{figure}
\begin{table}[ht]
    \centering
    \caption{Estimation Error under different communication frequency of DC microgrid power system}
  \resizebox{0.25\textwidth}{!}{
    \begin{tabular}{c|c|c}
                        \hline
        $d$ & $\|x(t)\|$  & $\underset{t\rightarrow\infty}{\lim} \|e(t)\|$ \\
                         \hline
        $10$ &$357.3444$&$0.2685$\\
        $12$&$357.3444$&$0.0493$\\
        $14$&$357.3444$&$0.0091$\\        
        $16$&$357.3444$&$0.0017$\\
        \hline
    \end{tabular}
    }  
    \label{tab:DGU_frequency_compare}
\end{table}

Finally, the influence of communication frequency is investigated in TABLE~\ref{tab:DGU_frequency_compare}. In line with Theorem~\ref{theorem:dis_DUIO}, estimation accuracy improves as $d$ increases, though at the expense of greater communication resource usage. In practice, this trade-off must be carefully managed to balance accuracy requirements against available system resources.

\section{Concluding Remarks}\label{sec:conclusion}

In this paper, we present a novel Distributed Unknown Input Observer (DUIO) design for both continuous-time and discrete-time linear systems. Using a geometric approach, we address the challenge of distributed state estimation under unknown inputs across multiple nodes in a networked system. A key feature of our approach is the introduction of joint geometric conditions, which enable the estimation of a reduced-order linear projection of the system state. These projections allow each sensor node to extract the most informative components unaffected by locally unknown inputs, thereby relaxing the restrictive rank conditions imposed by existing approaches while preserving estimation accuracy. The framework is further extended to discrete-time systems—constituting the first discrete-time DUIO scheme—where a two-time-scale structure is proposed to separate local estimation and inter-node communication. Our analysis shows that steady-state accuracy improves with a higher communication frequency. The effectiveness and practical relevance of the proposed methods are demonstrated through extensive simulations, including a power grid case study. 

Future research will focus on extending our methodology to nonlinear systems, reconfigurable sensor networks, disturbance observers, and extended state observers.


\begin{IEEEbiography}[{\includegraphics[width=1in,height=1.25in,clip,keepaspectratio]{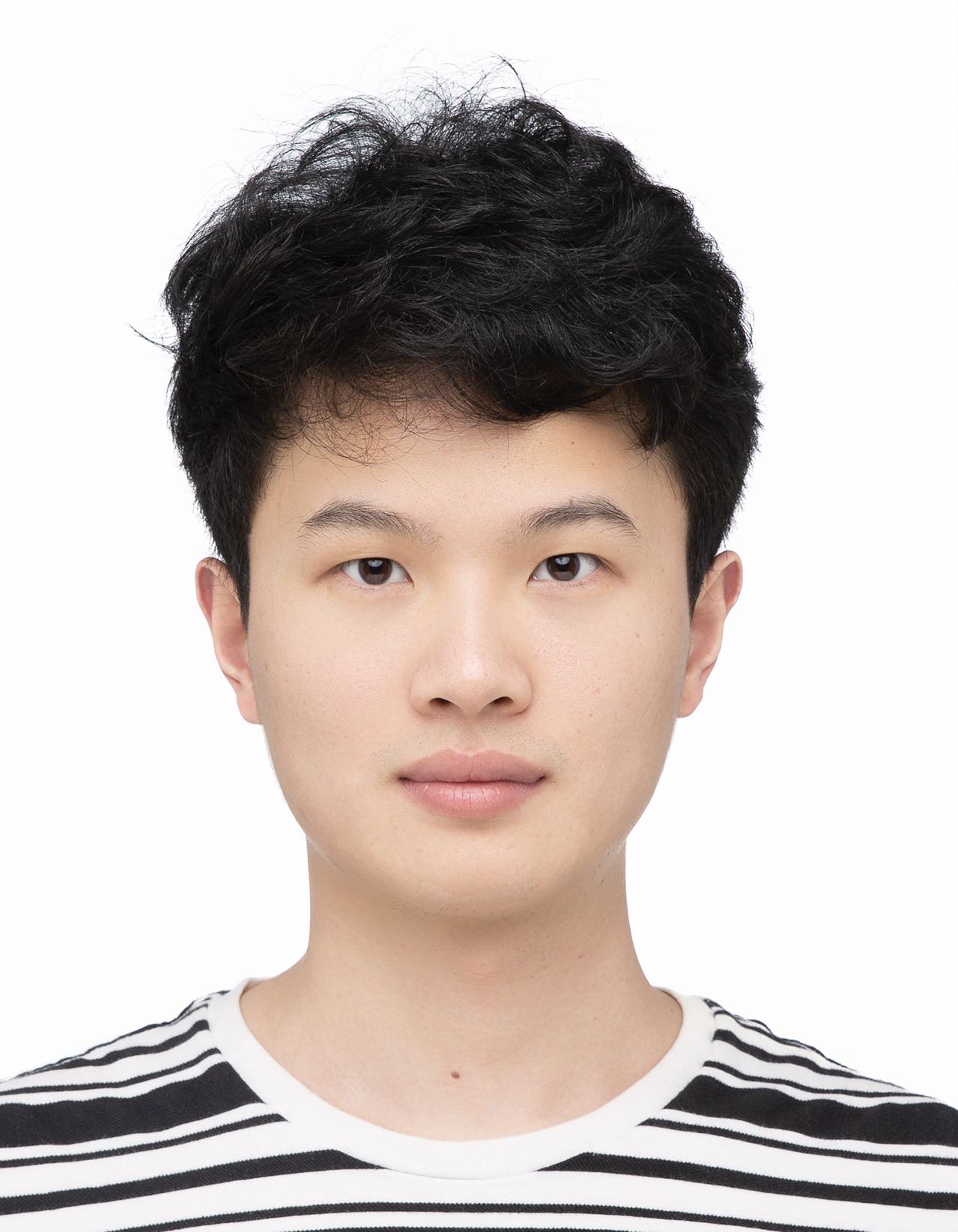}}]{Ruixuan Zhao}
obtained the M.Sc. degree in Control and Optimisation from Imperial College London, UK, in 2022, and the B.Eng degree in Automation Engineering from Nanjing University of Aeronautics and Astronautics, China, in 2020. He is currently pursuing a Ph.D. in the Department of Electronic and Electrical Engineering at the University College London (UCL), UK.
His research interests focus on state estimation, distributed observer design, geometric approach, and the sensor networks for cyber-physical systems.
\end{IEEEbiography}

\begin{IEEEbiography}[{\includegraphics[width=1in,height=1.25in,clip,keepaspectratio]{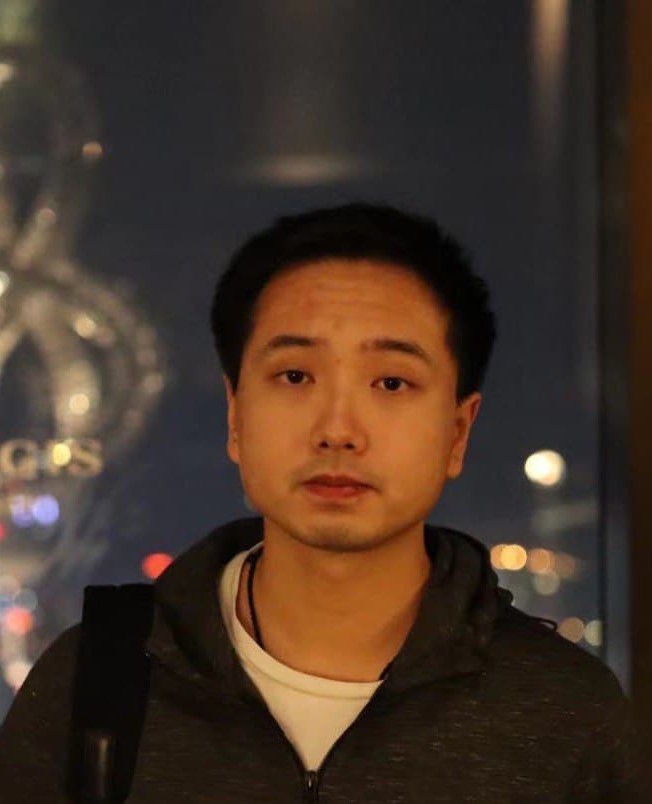}}]{Guitao Yang}
received the B.Eng. degree in Electrical and Electronic Engineering from The University of Manchester, UK, in 2016. He received the M.Sc. and Ph.D. degrees in Control Systems from Imperial College London, UK, in 2017 and 2022, respectively. From 2022 to 2025, he was a Research Associate with the Department of Electrical and Electronic Engineering, Imperial College London. He is currently a Research Associate with the Wolfson School of Mechanical, Electrical and Manufacturing Engineering, Loughborough University, UK. His research interests include distributed state estimation, fault-tolerant observers, and the geometric approach.
\end{IEEEbiography}

\begin{IEEEbiography}[{\includegraphics[width=1in,height=1.25in,clip,keepaspectratio]{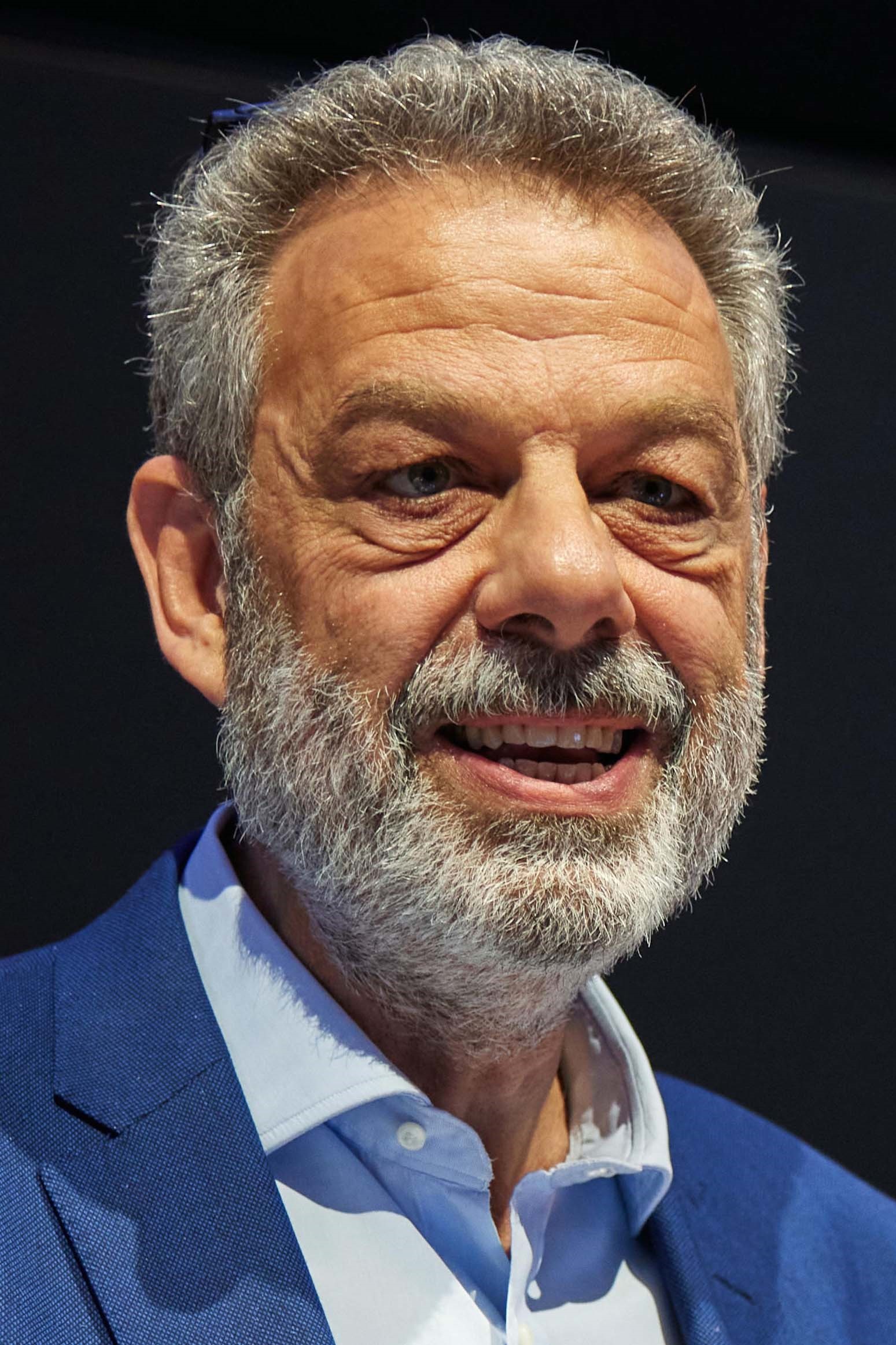}}]{Thomas Parisini} (Fellow, IEEE) received the Ph.D. degree in electronic engineering and computer science from the University of Genoa, Italy, in 1993. He was an Associate Professor with Politecnico di Milano, Milano, Italy. He currently holds the Chair of industrial control and is the Head of the Control and Power Research Group, Imperial College London, London, U.K. He also holds a Distinguished Professorship at Aalborg University, Denmark. Since 2001, he has been the Danieli Endowed Chair of automation engineering with the University of Trieste, Trieste, Italy, where from 2009 to 2012, he was the Deputy Rector. In 2023, he held a “Scholar-in-Residence”visiting position with Digital Futures-KTH, Stockholm, Sweden. He has authored or coauthored a research monograph in the Communication and Control Series, Springer Nature, and more than 400 research papers in archival journals, book chapters, and international conference proceedings. Dr. Parisini was the recipient of the Knighthood of the Order of Merit of the Italian Republic for scientific achievements abroad awarded by the Italian President of the Republic in 2023. In 2018 he received the Honorary Doctorate from the University of Aalborg, Denmark and in 2024, the IEEE CSS Transition to Practice Award. Moreover, he was awarded the 2007 IEEE Distinguished Member Award, and was co-recipient of the IFAC Best Application Paper Prize of the Journal of Process Control, Elsevier, for the three-year period 2011-2013 and of the 2004 Outstanding Paper Award of IEEE TRANSACTIONS ON NEURAL NETWORKS. In 2016, he was awarded as Principal Investigator with Imperial of the H2020 European Union flagship Teaming Project KIOS Research and Innovation Centre of Excellence led by the University of Cyprus with an overall budget of over 40 million Euros. He was the 2021-2022 President of the IEEE Control Systems Society and he was the Editor-in-Chief of IEEE TRANSACTIONS ON CONTROL SYSTEMS TECHNOLOGY (2009-2016). He was the Chair of the IEEE CSS Conference Editorial Board (2013-2019). Also, he was the associate editor of several journals including the IEEE TRANSACTIONS ON AUTOMATIC CONTROL and the IEEE TRANSACTIONS ON NEURAL NETWORKS. He is currently an Editor of Automatica and the Editor-in-Chief of the European Journal of Control. He was the Program Chair of the 2008 IEEE Conference on Decision and Control and General Co-Chair of the 2013 IEEE Conference on Decision and Control. He is a Fellow of IFAC. He serves as Chair of the IEEE CSS Awards and is a Member of IEEE TAB Periodicals Review and Advisory Committee.
\end{IEEEbiography}

\begin{IEEEbiography}[{\includegraphics[width=1in,height=1.25in,clip,keepaspectratio]{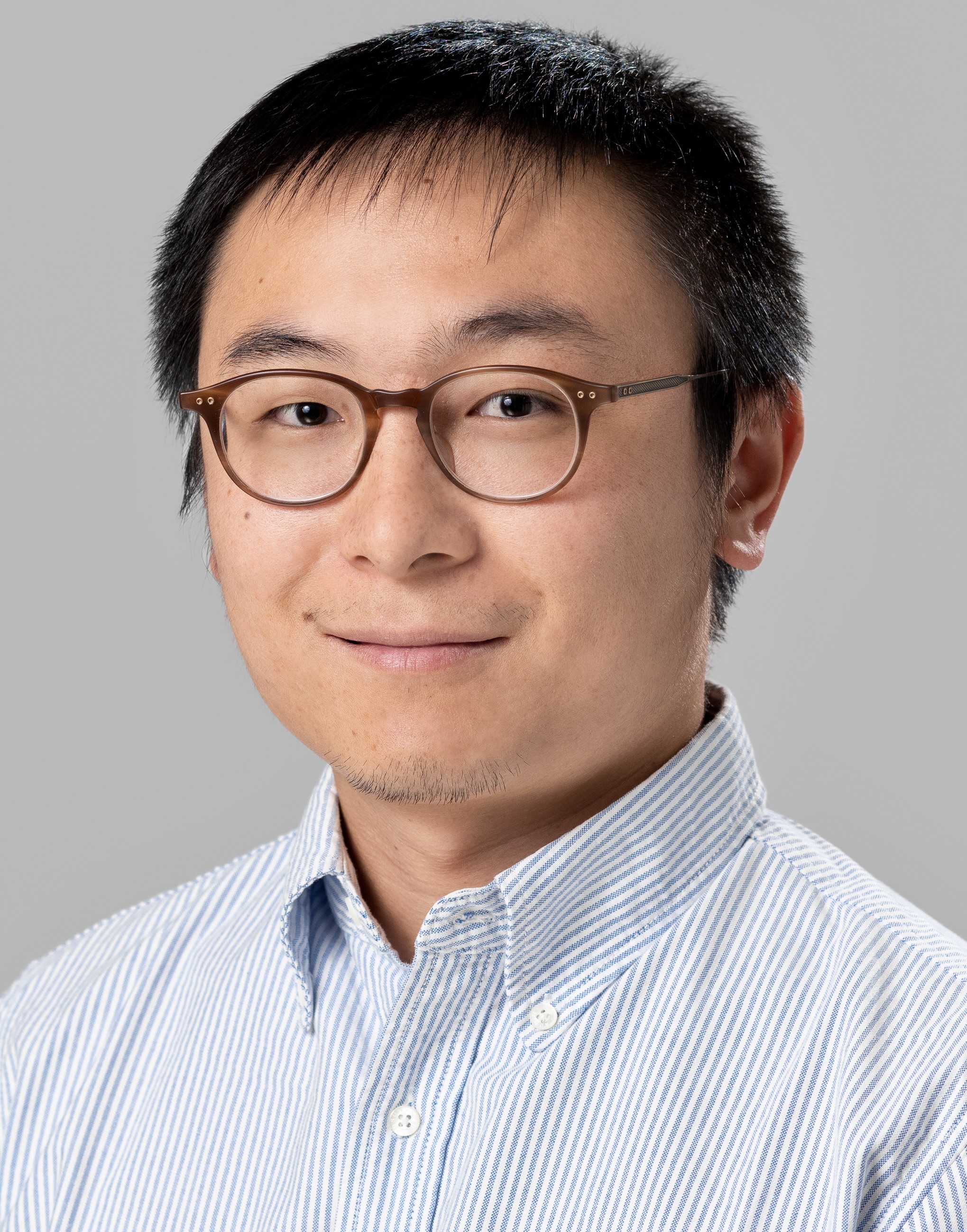}}]{Boli Chen}
(M'16 - SM'24) received his B.Eng. in Electrical and Electronic Engineering from Northumbria University, UK, in 2010. He earned his MSc and PhD in Control Systems from Imperial College London, UK, in 2011 and 2015, respectively. He is currently a Lecturer in the Department of Electronic and Electrical Engineering at University College London (UCL), UK. His research focuses on the control, optimization, and estimation of complex dynamical systems, with rich applications in smart cities, e.g., transportation, electric energy systems, and sensor networks. 
Dr Boli Chen is a member of the IEEE Control Systems Society Technical Committee on ``Smart Cities’’. He serves as an Associate Editor for the IEEE Transactions on Intelligent Transportation Systems and the European Journal of Control. Additionally, he is a member of the EUCA Conference Editorial Board and the IEEE ITSC Editorial Board.
\end{IEEEbiography}

\end{document}